\newcommand{\arxiv}[2]{\ifthenelse{\boolean{arxiv}}{#1}{#2}}
\tikzset{state/.style={circle,fill=black!45,draw=black!85,inner sep=0pt,minimum size=2mm}}
\tikzset{tran/.style={>=stealth',shorten >=0.5pt,thick, draw=black!45}}
\tikzset{signal/.style={>=stealth',shorten >=0.5pt, dotted}}
\tikzset{bst/.style={>=stealth', dotted, purple}}
\tikzset{btr/.style={>=stealth', dotted, cyan}}
\tikzset{every label/.style={black}}
\tikzset{every node/.style={black}}
\tikzset{place/.style={circle,thick,draw=black!75,fill=black!10,minimum size=4mm}}
\tikzset{red place/.style={place,draw=red!75,fill=red!20}}
\tikzset{transition/.style={rectangle,thick,draw=black!75,fill=black!20,minimum size=4mm}}
\newcommand{\pip}{\rho}
\newcommand{\hatpip}{\hat\rho}
\newcommand{\sR}{O}
\newcommand{\transition}{transition}
\newcommand{\ABCdE}{ABCdE\xspace}
\newcommand{\premise}{direct subderivation}
\renewcommand{\nabla}{\Tr}
\newcommand{\nablasub}[1]{\en(#1)}
\newcommand{\ie}{i.e.,\xspace}
\title{Enabling Preserving Bisimulation Equivalence}
\author{Rob van Glabbeek}{Data61, CSIRO, Sydney, Australia \and Computer Science and Engineering, University of New South Wales, Sydney, Australia}{rvg@cs.stanford.edu}{}{}
\author{Peter H\"ofner}{School of Computing, Australian National University, Canberra, Australia \and Data61, CSIRO, Sydney, Australia}{peter.hoefner@anu.edu.au}{https://orcid.org/0000-0002-2141-5868}{}
\author{Weiyou Wang}{School of Computing, Australian National University, Canberra, Australia}{weiyou.wang@anu.edu.au}{}{}
\authorrunning{R.J. van Glabbeek, P. H\"ofner and W. Wang} 
\keywords{bisimilarity, liveness properties, fairness assumptions, process algebra} 
\arxiv{\hideLIPIcs}{}
\begin{document}

\maketitle
\begin{abstract}
  Most fairness assumptions used for verifying liveness properties are criticised for being too strong or unrealistic.
  On the other hand, justness, arguably the minimal fairness assumption required for the verification of liveness properties,
  is not preserved by classical semantic equivalences, such as strong bisimilarity.
  To overcome this deficiency, we introduce a finer alternative to strong bisimilarity, called enabling preserving bisimilarity.
  We prove that this equivalence is justness-preserving and a congruence for all standard operators, including parallel composition.
\end{abstract}

\section{Introduction}\label{sec:introduction}
Formal verification of concurrent systems becomes more and more standard practice, in~par\-ticular in safety-critical environments.
Progress and fairness assumptions have to be used~when verifying liveness properties, which
guarantee that `something good will eventually happen'. Without assumptions of this kind,
no meaningful liveness property can formally be~proven.%

\begin{example}
Consider the program \lstinline{while(true)  do x:=x+1 od} with \lstinline{x} initialised to \lstinline{0}.
Intuitively, any liveness property of the form `eventually \lstinline{x=n}' should be satisfied by the program.
However, these properties are valid only when assuming \emph{progress}, stating that a system will make progress when it can;
otherwise the program could just stop after some computation.~~\hfill$\lrcorner$
\end{example}

Progress itself is not a strong enough assumption when concurrent systems are verified, for a system of multiple
completely independent components makes progress as long as one of its components makes progress, even
when others do not. For decades, researchers have developed notions of fairness and used them in both
system specification and verification;
the most common ones are surveyed in~\cite{GH19}. Two of the most popular fairness assumptions are weak and strong fairness of 
instructions \cite{Fr86}.\footnote{Often these notions are referred to as weak and strong fairness without mentioning instructions;
 here, we follow the terminology of \cite{GH19}, which is more precise.} They apply to systems whose
behaviour is specified by some kind of code, composed out of \emph{instructions}. A \emph{task} is
any activity of the system that stems from a particular instruction; it is \emph{enabled} when the system
is ready to do some part of that task, and \emph{executed} when the system performs some part of it.
Now weak and strong fairness of instructions state that whenever a task is
enabled persistently (for weak fairness) or infinitely often (for strong fairness),
then it will be executed by the system. 
These fairness assumptions, as well as all others surveyed in \cite{GH19},\footnote{Many other
  notions of fairness are obtained by varying the definition of task. In \emph{fairness of components}
  a task refers to all activity stemming from a component of a system that is a parallel composition.}
imply progress.

Despite being commonly used, it has been argued that most fairness assumptions, including weak and strong fairness of instructions, are often too strong or unrealistic, ``in the sense that run-of-the-mill implementations tend not to be fair''~\cite{GH19}.

(Reactive) systems are often described by \emph{labelled transition systems}, which model all activities as transitions going from state to state, labelled with \emph{actions}. Some actions require synchronisation of the modelled system with its environment; they can occur only when both the system and the environment are ready to engage in this action.
Such actions, and the transitions labelled with them, are called \emph{blocking}.
\begin{example}\label{ex:breakfast} 
Assume that every morning Alice has a choice between a slice of bread with jam or a bacon and egg roll.
A corresponding transition system consists of one state with two transitions, each standing for one
 kind of breakfast. 
Both weak and strong fairness (of instructions) will force Alice to eventually have both types of breakfast, ruling out 
the possibility that Alice picks up jam every day as she is a vegetarian.\hfill$\lrcorner$
\end{example}
To address this issue, a weaker assumption, called justness, has been proposed.
It has been formulated for reactive systems, modelled as labelled transition systems.
\emph{Justness} is the assumption that
\vspace{-1mm}
\begin{quote}\it
{Once a non-blocking transition is enabled that stems from a set of parallel components, one (or more) of these components will eventually partake in a transition.~\cite{GH19}}
\end{quote}

\noindent
\ex{breakfast} features only one component, Alice. Assuming justness, as expected, 
she now has the option to eat jam for the rest of her life. Let us now look at a more technical example.

\begin{example}\label{ex:filippo}
We consider the following two programs, and assume that all variables are initialised by \lstinline{0}.
\vspace{-5mm}
\begin{center}\hfill
	\begin{minipage}[t]{0.4\textwidth}
\begin{lstlisting}
while(true) do
  choose
    if true then y := y+1;
    if x = 0 then x := 1;
  end
od
\end{lstlisting}
	\end{minipage}\hfill\hfill
	\begin{minipage}[t]{0.22\textwidth}
\begin{lstlisting}
while(true) do
  y := y+1;
od
\end{lstlisting}
	\end{minipage}\;
	\raisebox{-30pt}{\scalebox{1.7}[4]{$\|$}}
	\begin{minipage}[t]{0.15\textwidth} 
\begin{lstlisting}
x := 1;
\end{lstlisting}
	\end{minipage}
	\hfill\mbox{}
\end{center}
\vspace{-8mm}

\noindent The example on the left presents an infinite loop containing an internal nondeterministic choice.
The conditional write \lstinline{if x = 0 then x := 1} describes an atomic read-modify-write (RMW)
operation\footnote{\url{https://en.wikipedia.org/wiki/Read-modify-write}}.
Such operators, supported by modern hardware, read a memory location and simultaneously write a new value into it.
This example is similar to \ex{breakfast} in the sense that the liveness property 
`eventually \lstinline{x=1}' should not be satisfied as the program has a choice every time the loop body 
is executed.

The example on the right-hand side is similar, but the handling of variables \lstinline{x} and 
\lstinline{y} are managed by different components. As the two programs are independent from each other -- they could be executed on different machines -- the property `eventually \lstinline{x=1}' should hold.

Justness differentiates these behaviour, whereas weak and strong fairness fail to do so.~~\hfill$\lrcorner$
\end{example}

\noindent
The above example illustrates that standard notions of fairness are regularly too strong,\arxiv{\pagebreak[2]}{} and the
notion of justness may be a good replacement. When it comes to verification tasks, semantic
equivalences, such as strong bisimilarity~\cite{Mi90ccs}, are a standard tool to
reduce the state space of the systems under consideration.
Unfortunately, these semantic equivalences do not accord well with justness.
The problem is that they are based on labelled transition systems, 
which do not capture the concept of components.
The different behaviour of the two programs 
\begin{wrapfigure}[3]{r}{0.25\textwidth}
	\vspace{-0.5cm}
	\centering
	\begin{tikzpicture}[
		on grid=true,
		node distance=1.5cm,
		>=stealth',
		bend angle=45,
		auto]

	\begin{scope}
		\node[state] (r1) {};
		\node[state] (r2) [right of=r1] {};

		\path
		(r1)
			edge[->, tran, loop below, in=230, out=310, looseness=10] node [below] {\scriptsize $y := y + 1$} (r1)
			edge[->, tran] node [above] {\scriptsize $x := 1$} (r2)
		(r2)
			edge[->, tran, loop below, in=230, out=310, looseness=10] node[below] {\scriptsize $y := y + 1$} (r2);
	\end{scope}
	\end{tikzpicture}
\end{wrapfigure}
 in \ex{filippo} stems from the
components involved.
In fact, both programs give rise to the same transition system, depicted on the right.
Systems featuring the same transition system cannot be distinguished by any semantic equivalence 
found in the literature. Consequently, the verification of the stated liveness property will fail for one of the two 
programs of \ex{filippo}. 

To overcome this deficiency, we introduce \emph{enabling preserving bisimilarity}, a finer alternative to strong bisimilarity, which respects justness. It is based on extended labelled transition systems that take components involved in particular transitions into account.

\section{Labelled Transition Systems with Successors}\label{sec:ltss}

As discussed in the introduction, one reason why strong bisimilarity does not preserve liveness
properties under justness is that necessary information is missing, namely components. 

The definition of \emph{(parallel) components} was based on the parallel composition operator in process algebras
when justness was first introduced in~\cite{TR13,GH15a},
and has been generalised in later work to allow the use of justness in different contexts.

Here we define a justness-preserving  semantic equivalence on an extension of labelled transition systems.
Using labelled transition systems rather than process algebra as underlying concept makes our
approach more general, for other models of concurrency, such as Petri nets or higher-dimensional
automata, induce a semantics based on transition systems as~well.

The essence of justness is that when a non-blocking transition $t$ is enabled in a state $s$,
eventually the system must perform an action $u$ that \emph{interferes} with it \cite{GH19},
notation $t \naconc u$, in the sense that
a component affected by $u$ is necessary for the execution of $t$ -- or, to be more precise, for the
variant $t'$ of $t$ that is enabled after the system has executed some actions that do not interfere with $t$.
The present paper abstracts from the notion of component, but formalises justness, as well as our
enabling preserving bisimilarity, in terms of a \emph{successor relation} $t \pleadsto_{\pi} t'$,
marking $t'$ as a successor of $t$,
parametrised with the noninterfering actions $\pi$ happening in between. This relation also encodes
the above relation $\naconc$. The advantage of this approach over one that uses components
explicitly, is that it also applies to models like higher-dimensional automata \cite{Pr91a,vG91,GJ92,vG06}
in which the notion of a component is more fluid, and changes during execution.

  A \emph{labelled transition system} (LTS) is a tuple $(S,\Tr,\source,\target,\ell)$ with $S$ and $\Tr$ sets
  (of \emph{states} and \emph{transitions}), $\source,\target:\Tr\to S$ and
  $\ell:\Tr\to\Lab$, for some set $\Lab$ of transition labels.
  A transition $t\in\Tr$ of an LTS is \emph{enabled} in a state $p\in S$ if $\source(t)=p$.
  Let $\en(p)$ be the set of transitions that are enabled in $p$.

  A \emph{path} in an LTS is an alternating sequence
  $p_0\,u_0\,p_1\,u_1\,p_2\dots$ of states and transitions,
  starting with a state and either being infinite or ending with a state,
  such that $\source(u_i)=p_i$ and $\target(u_i)=p_{i+1}$ for all relevant $i$.
  The \emph{length} $l(\pi)\in\IN\cup\{\infty\}$ of a path $\pi$ is the number of transitions in it.
  If $\pi$ is a path, then $\hat\pi$ is the sequence of transitions occurring in $\pi$.

\begin{definition}[LTSS]\label{df:LTSS}\rm
  A \emph{labelled transition system with successors} (LTSS) is a tuple
  $(S,\Tr,\source,\target,\ell,\leadsto)$ with
  $(S,\Tr,\source,\target,\ell)$ an LTS, and
  ${\leadsto} \subseteq \Tr \times \Tr \times \Tr$, the \emph{successor relation},\arxiv{}{\pagebreak[3]}
  such that if $(t,u,v)\in{\leadsto}$ then $\source(t)=\source(u)$ and $\source(v)=\target(u)$. We write $t \leadsto_u v$ for $(t,u,v)\in{\leadsto}$.
\end{definition}
We use this successor relation to define 
the concept of (un)affected transitions. Let two transitions $t$ and $u$ be enabled in a state $p$, 
\ie $t,u\in\en(p)$ for some $p\in S$; the \emph{concurrency relation} $\aconc$ is defined as
$
t \aconc u :\Leftrightarrow \exists\, v.~ t \leadsto_u v.
$
Its negation $t \naconc u$ says that the possible occurrence of $t$ is \emph{affected} by the occurrence of $u$.
In case $t$ is unaffected by $u$ (\ie $t \aconc u$), each $v$ with $t \leadsto_u v$ denotes a variant of $t$ that can occur after $u$. 
Note that the concurrency relation can be asymmetric. Examples are traffic lights -- a car passing
traffic lights should be affected by them, but the lights do not care whether the car is there; and read-write operations -- 
reading shared memory can be affected by a write action, but, depending on how the memory is implemented, 
the opposite might not hold. 
In case $t$ and $u$ are mutually unaffected we write $t\conc u$, \ie $t\conc u :\Leftrightarrow t\aconc u \land u\aconc t$.

It is possible to have $t \aconc t$, namely when executing transition $t$ does not disable (a
  future variant of) $t$ to occur again. This can happen when $t$ is a signal emission, say of a
  traffic light shining red, for even after shining for some time it keeps on shining; or when
  $t$ is a broadcast receive action, for receiving a broadcast does not invalidate a system's perpetual
  readiness to again accept a broadcast, either by receiving or ignoring it.

\begin{example}
Consider the 
labelled transition system of \ex{filippo}, let $t_1$ and $t_2$ be the two transitions 
corresponding to \lstinline{y:=y+1} in the first and second state, respectively, and let $u$ be the transition 
for assignment \lstinline{x:=1}. The assignments of \lstinline{x} and \lstinline{y} in the right-hand program are independent, hence $t_1 \leadsto_u t_2$, $u\leadsto_{t_1} u$ and $t_1\conc u$.

For the program on the left-hand side, the situation is different. As the
instructions stem from the same component (program), all transitions affect each other, \ie $\leadsto\;=\; \aconc\; =\; \emptyset$. \hfill$\lrcorner$
\end{example}

\noindent
The successor relation relates transitions one step apart.
We lift it to sequences of transitions.

\begin{definition}[Successor along Path]\label{df:after}\rm
  The relation $\leadsto$ is extended to ${\pleadsto} \subseteq \Tr \times \Tr^* \times \Tr$ by
  (i) $t \pleadsto_\varepsilon w$ iff $w=t$, and
  (ii) $t \pleadsto_{\pi u} w$ iff there is a $v$ with $t \pleadsto_{\pi} v$ and $v\leadsto_u w$.
\end{definition}
  Here, $\varepsilon$ denotes the empty sequence, and $\pi u$ the sequence $\pi$ followed by
  transition $u$. We define a concurrency relation ${\paconc} \subseteq \Tr\times\Tr^*$
  considering sequences of transitions by
  $t \paconc \pi :\Leftrightarrow \exists\,v.~ t \pleadsto_\pi v$.
  Intuitively, $t \paconc \pi$ means that there exists a successor of $t$ after $\pi$ has been executed. Thus,
  $t$ is unaffected by all transitions of $\pi$.

  We are ready to define justness, which is parametrised by a set $B$ of blocking actions.

\begin{definition}[Justness]\label{df:justness}\rm
  Given an LTSS $\C = (S,\Tr,\source,\target,\ell,\leadsto)$ labelled over $\Lab$, and $B\subseteq\Lab$,
  a path $\pi$ in $\C$ is \emph{$B$-just} if for each suffix $\pi_0$ of $\pi$ and for each transition
  $t\in\Tr^\bullet$ with $\ell(t)\notin B$ and $\source(t)$ the first state of $\pi_0$,
  the path $\pi_0$ has a finite prefix $\rho$ such that $t \pnaconc \hat\rho$.
  Here $\Tr^\bullet \coloneqq \{t\in\Tr \mid t \naconc t\}$.
\end{definition}

\section{Enabling Preserving Bisimulation Equivalence}\label{sec:ep-bimilarity}

In this section we introduce enabling preserving bisimulation equivalence, and show how it preserves justness.
In contrast to classical  bisimulations, which are relations of type $S\times S$,
the new equivalence is based on triples. 
The essence of justness is that a transition $t$ enabled in a state $s$ must
eventually be affected by the sequence $\pi$ of transitions the system performs.
As long as $\pi$ does not interfere with $t$, we obtain a transition $t'$ with $t \pleadsto_\pi t'$.
This transition $t'$ represents the interests of $t$, and must eventually be affected by an
extension of $\pi$. Here, executing $t$ or $t'$ as part of this extension is a valid way of interfering.
To create a bisimulation that respects such considerations, for each related
pair of states $p$ and $q$ we also match each enabled transition of $p$ with one of $q$, and
vice versa. These relations are maintained during the evolution of the two related systems,
so that when one system finally interferes with a descendant of $t$, the related system interferes
with the related descendant.

\begin{definition}[Ep-bisimilarity]\label{df:ep-bisimilarity}\rm
  Given an LTSS $(S,\Tr,\source,\target,\ell,\leadsto)$,
  an \emph{enabling preserving bisimulation} (\emph{ep-bisimulation}) is a relation
  $\R \subseteq S\times S\times\Pow(\Tr\times\Tr)$ satisfying
  \begin{enumerate}
    \item if $(p,q,R)\in\R$ then $R \subseteq \en(p)\times\en(q)$ such that\label{ep1}
      \begin{enumerate}
        \item $\forall t\in\en(p).\ \exists\, u\in\en(q).\ t \mathrel R u$,\label{ep1a}
        \item $\forall u\in\en(q).\ \exists\, t\in\en(p).\ t \mathrel R u$,\label{ep1b}
        \item if $t \mathrel R u$ then $\ell(t)=\ell(u)$, and\label{ep1c}
      \end{enumerate}
    \item if $(p,q,R)\in\R$ and $v \mathrel R w$, then $(\target(v),\target(w),R')\in\R$ for some $R'$ such that\label{ep2}
      \begin{enumerate}
        \item if $t \mathrel R u$ and $t \leadsto_v t'$,
          then $\exists\, u'.~ u \leadsto_w u' \land t' \mathrel{R'} u'$, and\label{ep2a}
        \item if $t \mathrel R u$ and $u \leadsto_w u'$,
          then $\exists\, t'.~ t \leadsto_v t' \land t' \mathrel{R'} u'$.\label{ep2b}
      \end{enumerate}
  \end{enumerate}
  Two states $p$ and $q$ in an LTSS are \emph{enabling preserving bisimilar} (ep-bisimilar),
  $p \bisep q$, if there exists an enabling preserving bisimulation $\R$ such that $(p,q,R)\in\R$ for some $R$.
\end{definition}
\df{ep-bisimilarity} without Items \ref{ep2a} and \ref{ep2b} is nothing
else than a reformulation of the classical definition of strong bisimilarity. 
An ep-bisimulation additionally maintains for each pair of related states $p$ and $q$ a relation $R$ between the
transitions enabled in $p$ and $q$. 
Items \ref{ep2a} and \ref{ep2b} strengthen the condition on related target states
by requiring that the successors of related transitions are again related relative to these target states.
It is this requirement (and in particular its implication stated in the following observation)
which distinguishes the transition systems for \ex{filippo}. 

\begin{observation}\label{obs:conc}\rm
  Ep-bisimilarity respects the concurrency relation.\\
For a given ep-bisimulation $\R$, if $(p,q,R)\in\R$, $t \mathrel R u$ and $v \mathrel R w$ then $t \aconc v$ iff $u \aconc w$.
\end{observation}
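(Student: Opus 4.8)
The plan is to observe that the statement is an almost immediate consequence of the two closure conditions, Items~\ref{ep2a} and~\ref{ep2b} of \df{ep-bisimilarity}, applied symmetrically for the two directions of the biconditional. Since $R \subseteq \en(p)\times\en(q)$, the hypotheses $t\mathrel R u$ and $v\mathrel R w$ guarantee $t,v\in\en(p)$ and $u,w\in\en(q)$. Hence both sides of the claimed equivalence, $t\aconc v$ and $u\aconc w$, are well-formed instances of the concurrency relation (which is only defined on co-enabled transitions), and unfolding them just means exhibiting witnesses $t'$ with $t\leadsto_v t'$, respectively $u'$ with $u\leadsto_w u'$.

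First I would apply Item~\ref{ep2} to the pair $(p,q,R)\in\R$ together with the related pair $v\mathrel R w$, obtaining a single relation $R'$ with $(\target(v),\target(w),R')\in\R$ that satisfies \emph{both} Items~\ref{ep2a} and~\ref{ep2b}. The point to keep in mind is that this same $R'$ serves both directions of the proof, so the two implications are genuinely dual.

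For the forward implication I would unfold $t\aconc v$ to a witness $t'$ with $t\leadsto_v t'$, then feed $t\mathrel R u$ and $t\leadsto_v t'$ into Item~\ref{ep2a}; this returns a $u'$ with $u\leadsto_w u'$, which is exactly $u\aconc v$'s counterpart $u\aconc w$. The converse is completely symmetric: from a witness $u'$ of $u\aconc w$, Item~\ref{ep2b} applied to $t\mathrel R u$ and $u\leadsto_w u'$ yields a $t'$ with $t\leadsto_v t'$, giving $t\aconc v$.

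I do not expect a real obstacle here, as the observation is essentially a restatement of Items~\ref{ep2a}--\ref{ep2b} at the level of the derived relation $\aconc$. The only points worth stating explicitly are that the $R'$ extracted in the first step is shared between the two directions (Item~\ref{ep2} delivers one such $R'$ meeting both sub-conditions), and that $\aconc$ is applicable by co-enabledness. It is also worth noting, although not needed for the statement itself, that the witnessing successors $t'$ and $u'$ come out $R'$-related; this stronger conclusion is what makes the observation reusable when tracking related transitions along the evolution of the two systems.
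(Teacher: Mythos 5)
Your proof is correct and matches exactly why the paper states this as an observation without proof: the claim follows immediately by unfolding $t \aconc v$ into a witness for $\leadsto_v$ and applying Items~\ref{ep2a} and~\ref{ep2b} of \df{ep-bisimilarity} in the two directions, just as you do. Your additional remarks -- that a single $R'$ from Item~\ref{ep2} serves both directions, that co-enabledness makes $\aconc$ applicable, and that the witnesses come out $R'$-related -- are all accurate and consistent with how the paper later uses this fact (e.g.\ in \lem{justness-preserving}).
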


\begin{proposition}\label{pr:equivalence}\rm
$\bis{\it ep}$ is an equivalence relation.
\end{proposition}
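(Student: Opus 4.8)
The plan is to verify reflexivity, symmetry and transitivity of $\bisep$, in each case by constructing a witnessing ep-bisimulation from the data at hand.

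For reflexivity I would use the identity ep-bisimulation $\{(p,p,\mathit{Id}_{\en(p)}) \mid p\in S\}$ with $\mathit{Id}_{\en(p)}=\{(t,t)\mid t\in\en(p)\}$: Conditions \textup{(1a)}--\textup{(1c)} hold by matching each enabled transition with itself, and for Condition 2 the two target states coincide, so I reuse the identity on $\en(\target(v))$ as $R'$ and take $t'=u'$ in \textup{(2a)}/\textup{(2b)}, using the LTSS constraint $\source(t')=\target(v)$ whenever $t\leadsto_v t'$. For symmetry, from an ep-bisimulation $\R$ with $(p,q,R)\in\R$ I would form $\R^{-1}=\{(q,p,R^{-1})\mid (p,q,R)\in\R\}$, where $R^{-1}=\{(u,t)\mid t\mathrel R u\}$; this is an ep-bisimulation because \df{ep-bisimilarity} is symmetric under swapping its two state-arguments and inverting $R$, with \textup{(1a)}$\leftrightarrow$\textup{(1b)} and \textup{(2a)}$\leftrightarrow$\textup{(2b)} exchanging roles, so the verification is routine.

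The substantive case is transitivity. Given ep-bisimulations $\R_1\ni(p,q,R_1)$ and $\R_2\ni(q,r,R_2)$, I would take
\[
  \R \;=\; \{\,(p,r,\,R_2\circ R_1)\mid (p,q,R_1)\in\R_1 \text{ and } (q,r,R_2)\in\R_2 \text{ for some } q\,\},
\]
using relational composition $R_2\circ R_1=\{(t,w)\mid\exists u.\ t\mathrel{R_1}u\land u\mathrel{R_2}w\}$. Condition 1 follows by chaining the matches and labels supplied by $\R_1$ and $\R_2$. For Condition 2, a witness $v\mathrel{(R_2\circ R_1)}z$ factors as $v\mathrel{R_1}y\mathrel{R_2}z$; applying Condition 2 of $\R_1$ to $v\mathrel{R_1}y$ yields $(\target(v),\target(y),R_1')\in\R_1$, and applying Condition 2 of $\R_2$ to $y\mathrel{R_2}z$ yields $(\target(y),\target(z),R_2')\in\R_2$, so $(\target(v),\target(z),R_2'\circ R_1')\in\R$ by construction through the intermediate state $\target(y)$; this composed relation is my candidate $R'$.

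The main obstacle I anticipate is verifying clause \textup{(2a)} (and symmetrically \textup{(2b)}) for this composed $R'$: given $t\mathrel{R_1}s\mathrel{R_2}u$ and a successor $t\leadsto_v t'$, I must produce a matching $u'$ with $u\leadsto_z u'$ and $t'\mathrel{(R_2'\circ R_1')}u'$. The point is to thread a single intermediate successor $s'$ through both conditions: \textup{(2a)} of $R_1'$ gives $s'$ with $s\leadsto_y s'$ and $t'\mathrel{R_1'}s'$, and then \textup{(2a)} of $R_2'$, fed exactly the same step $s\leadsto_y s'$, gives $u'$ with $u\leadsto_z u'$ and $s'\mathrel{R_2'}u'$, whence $t'\mathrel{(R_2'\circ R_1')}u'$ as required. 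This is the only place where the successor structure, rather than merely the underlying bisimulation, has to be propagated coherently across two levels of composition, so it is where care is needed; everything else reduces to routine chaining.
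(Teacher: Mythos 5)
Your proposal is correct and matches the paper's proof exactly: the same three witnessing ep-bisimulations are used (the identity relation with $\textit{Id}_s$, the componentwise inverse $\R^{-1}$, and the composition through an intermediate state, where your $R_2\circ R_1$ is the paper's $R_1;R_2$ in different notation). The paper merely asserts that these are ep-bisimulations, whereas you additionally spell out the key verification — threading a single intermediate successor $s'$ through clause (2a) of both composed relations — which is precisely the check the paper leaves implicit.
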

\vspace{-4ex}
\begin{proof}

  \vspace{1ex}
  \noindent
  \emph{Reflexivity}:
  Let $(S,\Tr,\source,\target,\ell,\leadsto)$ be an LTSS.
  The relation
  \vspace{-1ex}
  \begin{center}
    $\R_\textit{Id} \coloneqq \{(s,s,\textit{Id}_s) \mid s\in S\}$
  \end{center}
  \vspace{-1ex}
  is an ep-bisimulation.
  Here $\textit{Id}_s \coloneqq \{(t,t) \mid t\in\en(s)\}$.

  \vspace{1ex}
  \noindent
  \emph{Symmetry}:
  For a given ep-bisimulation $\R$, the relation
  \vspace{-1ex}
  \begin{center}
  $\R^{-1} \coloneqq \{(q,p,R^{-1}) \mid (p,q,R)\in\R\}$
  \end{center}
  \vspace{-1ex}
  is also an ep-bisimulation.
  Here $R^{-1} \coloneqq \{(u,t) \mid (t,u)\in R\}$.

  \vspace{1ex}
  \noindent
  \emph{Transitivity}:
  For given ep-bisimulations $\R_1$ and $\R_2$, the relation
  \vspace{-1ex}
  \begin{center}
    $\R_1;\R_2 \coloneqq \{(p,r,R_1;R_2) \mid \exists\, q.~ (p,q,R_1)\in\R_1 \land (q,r,R_2)\in\R_2\}$
  \end{center}
  \vspace{-1ex}
  is also an ep-bisimulation.
  Here $R_1;R_2 \coloneqq \{(t,v) \mid \exists\, u.~ (t,u)\in R_1 \land (u,v)\in R_2\}$.
\end{proof}

\begin{observation}\label{obs:union}\rm
  The union of any collection of ep-bisimulations is itself an ep-bisimulation.
\end{observation}
Consequently there exists a largest ep-bisimulation.

Before proving that ep-bisimilarity preserves justness, we lift this relation to paths.
\begin{definition}[Ep-bisimilarity of Paths]\label{df:bisimilarity of paths}\rm
  Given an ep-bisimulation $\R$ and two paths $\pi = p_0\,u_0\,p_1\,u_1\,p_2\dots$ and $\pi' = p'_0\,u'_0\,p'_1\,u'_1\,p'_2\dots$,
  we write $\pi \mathrel\R \pi'$ iff $l(\pi)=l(\pi')$, and there exists $R_i \subseteq \Tr\times\Tr$ for all $i\in\IN$ with $i\leq l(\pi)$, such that
  \begin{enumerate}
    \item $(p_i,p'_i,R_i)\in\R$ for each $i\in\IN$ with $i\leq l(\pi)$,
    \item $u_i \mathrel{R_i} u'_i$ for each $i<l(\pi)$,
    \item if $t \mathrel{R_i} t'$ and $t \leadsto_{u_i} v$ with $i<l(\pi)$,
      then $\exists\, v'.~ t' \leadsto_{u'_i} v' \land v \mathrel{R_{i+1}} v'$, and
    \item if $t \mathrel{R_i} t'$ and $t' \leadsto_{u'_i} v'$ with $i < l(\pi)$,
      then $\exists\, v.~ t \leadsto_{u_i} v \land v \mathrel{R_{i+1}} v'$.
  \end{enumerate}
  Paths $\pi$ and $\pi'$ are \emph{enabling preserving bisimilar}, notation $\pi \bisep \pi'$,
  if there exists an ep-bisimulation $\R$ with $\pi \mathrel\R \pi'$.
  If $\pi \bisep \pi'$, we also write $\pi \mathrel{\vec{R}} \pi'$ if $\vec{R} \coloneqq (R_0,R_1,\dots)$ are the $R_i$ required above.
\end{definition}
Note that if $p \bisep q$ and $\pi$ is any path starting from $p$,
then, by \df{ep-bisimilarity}, there is a path $\pi'$ starting from $q$ with $\pi \bisep \pi'$.
The following lemma lifts \obs{conc}.

\begin{lemma}\label{lem:justness-preserving}\rm
If $\pi \mathrel{\vec{R}} \pip $ with $\pi$ finite and $t \mathrel{R_0} t'$ then $t \paconc \hat\pi$ iff $t' \paconc \hatpip$.
\end{lemma}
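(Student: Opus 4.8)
The plan is to prove, by induction on the common length $n \coloneqq l(\pi) = l(\pip)$, a statement slightly stronger than the lemma: one that carries the matching of the relevant successors along the induction. Writing $R_0,\dots,R_n$ for the relations witnessing $\pi \mathrel{\vec R} \pip$, I would establish that whenever $t \mathrel{R_0} t'$ and $t \pleadsto_{\hat\pi} v$, there is a $v'$ with $t' \pleadsto_{\hatpip} v'$ and $v \mathrel{R_n} v'$, and symmetrically that whenever $t \mathrel{R_0} t'$ and $t' \pleadsto_{\hatpip} v'$, there is a $v$ with $t \pleadsto_{\hat\pi} v$ and $v \mathrel{R_n} v'$. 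The lemma follows at once by unfolding the definition of $\paconc$: $t \paconc \hat\pi$ means $\exists v.~ t \pleadsto_{\hat\pi} v$, which by the strengthened claim yields $\exists v'.~ t' \pleadsto_{\hatpip} v'$, i.e.\ $t' \paconc \hatpip$, and the reverse implication is obtained from the symmetric half.

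For the base case $n = 0$ both $\hat\pi$ and $\hatpip$ are empty, so by clause~(i) of \df{after} the only successor of $t$ along $\hat\pi$ is $t$ itself, and likewise for $t'$ along $\hatpip$. Taking $v \coloneqq t$ and $v' \coloneqq t'$ settles both directions, since $t \mathrel{R_0} t'$ by assumption and $t \mathrel{R_0} t'$ is exactly the required $v \mathrel{R_0} v'$.

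For the inductive step, let $u_{n-1}$ be the final transition of $\pi$ and $u'_{n-1}$ that of $\pip$, and write $\pi_-,\pip_-$ for their length-$(n{-}1)$ prefixes. A glance at \df{bisimilarity of paths} shows that its four conditions for $\pi_- \mathrel{(R_0,\dots,R_{n-1})} \pip_-$ are just the restrictions to the relevant indices of those for $\pi \mathrel{\vec R} \pip$, so the prefixes are again ep-bisimilar under the truncated witness. Now assume $t \pleadsto_{\hat\pi} v$. Since $\hat\pi = \widehat{\pi_-}\,u_{n-1}$, clause~(ii) of \df{after} gives a $v_-$ with $t \pleadsto_{\widehat{\pi_-}} v_-$ and $v_- \leadsto_{u_{n-1}} v$. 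The induction hypothesis applied to $\pi_-,\pip_-$ produces a $v'_-$ with $t' \pleadsto_{\widehat{\pip_-}} v'_-$ and $v_- \mathrel{R_{n-1}} v'_-$. Instantiating condition~3 of \df{bisimilarity of paths} at index $n{-}1$ with $v_- \mathrel{R_{n-1}} v'_-$ and $v_- \leadsto_{u_{n-1}} v$ then yields a $v'$ with $v'_- \leadsto_{u'_{n-1}} v'$ and $v \mathrel{R_n} v'$; composing $t' \pleadsto_{\widehat{\pip_-}} v'_-$ with $v'_- \leadsto_{u'_{n-1}} v'$ gives $t' \pleadsto_{\hatpip} v'$, as required. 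The reverse direction is entirely symmetric, using condition~4 in place of condition~3.

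The only genuine subtlety is the choice of induction hypothesis. The bare biconditional of the lemma is not directly strong enough to drive the induction, because passing from $\pi_-$ to $\pi$ requires knowing not merely that \emph{some} successor of $t'$ along $\hat{\pip}_-$ exists, but that it is $R_{n-1}$-related to the intermediate successor $v_-$ of $t$ — precisely the extra data kept by the strengthened statement, and precisely what conditions~3 and~4 of \df{bisimilarity of paths} consume. Once that strengthening is in place, the rest is a routine unfolding of \df{after}.
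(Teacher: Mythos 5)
Your proof is correct and follows essentially the same route as the paper's: both strengthen the biconditional to the claim that every successor $v$ of $t$ along $\hat\pi$ has a matching successor $v'$ of $t'$ along $\hatpip$ with $v \mathrel{R_{l(\pi)}} v'$, and both proceed by induction on the length of $\pi$, splitting off the final transition and combining the induction hypothesis with condition~3 (resp.~4) of \df{bisimilarity of paths} and clause~(ii) of \df{after}. The only cosmetic difference is that you spell out the symmetric direction via condition~4 and explicitly note that prefixes remain related under the truncated witness, where the paper appeals to symmetry and leaves the prefix observation implicit.
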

\begin{proof}
  We have to show that $\exists v.~ t \leadsto_{\hat\pi} v$ iff $\exists v'.~ t' \leadsto_{\hatpip} v'$.
  Using symmetry, we may restrict attention to the `only if' direction.
  We prove a slightly stronger statement, namely that for every transition $v$ with $t \leadsto_{\hat\pi} v$
  there exists a $v'$ such that $t' \leadsto_{\hatpip} v'$ and $v \mathbin{R_{l(\pi)}} v'$.

  We proceed by induction on the length of $\pi$.

  The base case, where $l(\pi)=0$, $\hat \pi=\varepsilon$ and thus $v=t$, holds trivially, taking $v' \coloneqq t'$.

  So assume $\pi u p \mathrel{\vec{R}} \rho u' p'$ and $t \leadsto_{\hat\pi u} w$.
  Then there is a $v$ with $t \leadsto_{\hat\pi} v$ and $v \leadsto_u w$.
  By induction there is a transition $v'$ such that $t' \leadsto_{\hatpip} v'$ and $v \mathrel{R_{l(\pi)}} v'$.
  By \df{bisimilarity of paths}(3), there is a $w'$ with $v' \leadsto_{u'} w'$ and $w \mathrel{R_{l(\pi)+1}} w'$.
  Thus $t' \leadsto_{\hatpip u'} w'$ by \df{after}.
\end{proof}

\begin{theorem}\label{thm:justness-preserving}\rm Ep-bisimilarity preserves justness:
  Given two paths $\pi$ and $\pi'$ in an LTSS with $\pi\mathbin{\bisep}\pi'$, and a set $B$ of blocking actions, then $\pi$ is $B$-just iff $\pi'$ is $B$-just.
\end{theorem}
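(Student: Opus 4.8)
The plan is to use symmetry to cut the biconditional down to a single implication, and then to verify the defining clause of $B$-justness for one path by matching each of its justness obligations with an obligation on the bisimilar path, discharging it there via the justness hypothesis, and pulling the resulting witness back with \cref{lem:justness-preserving}. Fix an ep-bisimulation and a family $\vec R=(R_0,R_1,\dots)$ witnessing $\pi\mathrel{\vec R}\pi'$, with $\pi=p_0\,u_0\,p_1\dots$ and $\pi'=p'_0\,u'_0\,p'_1\dots$. The four clauses of \cref{df:bisimilarity of paths} are symmetric in $\pi$ and $\pi'$ once $\R$ is replaced by the ep-bisimulation $\R^{-1}$ of \cref{pr:equivalence} and each $R_i$ by $R_i^{-1}$, so it suffices to prove that $B$-justness of $\pi$ implies $B$-justness of $\pi'$. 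Assuming $\pi$ is $B$-just, I take an arbitrary suffix $\pi'_0$ of $\pi'$ starting in a state $p'_j$ together with an arbitrary $t'\in\Tr^\bullet$ such that $\ell(t')\notin B$ and $\source(t')=p'_j$, and must exhibit a finite prefix $\rho'$ of $\pi'_0$ with $t'\pnaconc\hat{\rho'}$.

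First I would match $t'$ with a test transition of $\pi$: since $(p_j,p'_j,R_j)\in\R$ and $t'\in\en(p'_j)$, clause (1b) of \cref{df:ep-bisimilarity} yields a $t\in\en(p_j)$ with $t\mathrel{R_j}t'$. I then verify that $t$ is a legitimate justness test for the suffix $\pi_0=p_j\,u_j\,p_{j+1}\dots$ of $\pi$. Its source is $p_j$, the first state of $\pi_0$; clause (1c) gives $\ell(t)=\ell(t')\notin B$; and $t\in\Tr^\bullet$ follows from \cref{obs:conc} applied to $(p_j,p'_j,R_j)\in\R$ with $t\mathrel{R_j}t'$ used for both related pairs, which gives $t\aconc t$ iff $t'\aconc t'$ and hence $t\naconc t$ from $t'\naconc t'$.

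Now $B$-justness of $\pi$, applied to $\pi_0$ and the test transition $t$, provides a finite prefix $\rho$ of $\pi_0$ with $t\pnaconc\hat\rho$. Let $\rho'$ be the equally long prefix of $\pi'_0$, which exists because $l(\pi_0)=l(\pi'_0)$. Restricting $\vec R$ to this index window witnesses $\rho\mathrel{(R_j,R_{j+1},\dots)}\rho'$, a path bisimilarity whose leading relation is $R_j$ with $t\mathrel{R_j}t'$; since $\rho$ is finite, \cref{lem:justness-preserving} applies and gives $t\paconc\hat\rho$ iff $t'\paconc\hat{\rho'}$. From $t\pnaconc\hat\rho$ I conclude $t'\pnaconc\hat{\rho'}$, so $\rho'$ is the required prefix.

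I expect the obstacles to be bookkeeping rather than depth. The first is recognising that the transition $t$ handed over by the bisimulation must be shown to lie in $\Tr^\bullet$ before justness of $\pi$ can be invoked on it: this is exactly where \cref{obs:conc} is indispensable, for without $t\naconc t$ the justness definition does not reason about $t$ at all. The second is checking that slicing $\vec R$ to a suffix-then-prefix window still satisfies every clause of \cref{df:bisimilarity of paths}, so that \cref{lem:justness-preserving} may legitimately be invoked with $R_j$ as its leading relation to carry the interference witness from $\pi$ back to $\pi'$.
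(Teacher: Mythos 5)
Your proposal is correct and takes essentially the same route as the paper's proof: both match the test transition across the relevant $R_i$ (you via clause~1.b, the paper via 1.a), transfer membership in $\Tr^\bullet$ and the label condition using \obs{conc} and \df{ep-bisimilarity}, and carry the interference witness across equally long finite prefixes with \lem{justness-preserving}, relying on symmetry (via $\R^{-1}$) to reduce the biconditional to one implication. The only difference is presentational: the paper argues contrapositively (unjustness of $\pi$ yields unjustness of $\pi'$, so the universally quantified unaffectedness is transferred wholesale), whereas you argue directly from justness of $\pi$ to justness of $\pi'$, transferring one existential witness prefix at a time -- logically the same argument.
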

\begin{proof}
Let $\pi = p_0\,u_0\,p_1\,u_1\,p_2\dots$ and $\pi' = p'_0\,u'_0\,p'_1\,u'_1\,p'_2\dots$\,.
Suppose $\pi$ is $B$-unjust, so there exist an $i\in \IN$ with $i\leq l(\pi)$ and a transition
$t \in \Tr^\bullet$ with $\ell(t)\notin B$ and $\source(t)=p_i$ such that $t \aconc \hat\rho$
for each finite prefix $\rho$ of the suffix $p_i\,u_i\,p_{i+1}\,u_{i+1}\,p_{i+2}\dots$ of $\pi$.
It suffices to show that also $\pi'$ is $B$-unjust.

Take an ep-bisimulation $\R$ such that $\pi \mathrel\R \pi'$. Choose $R_i\subseteq \Tr \times \Tr$ for all
$i \in \IN$ with $i\leq l(\pi)$, satisfying the four conditions of \df{bisimilarity of paths}.
Pick any $t'\in \Tr$ with $t \mathrel{R_i} t'$  --  such a $t'$ must exist by \df{ep-bisimilarity}.
Now $\source(t')=p'_i$. By \df{ep-bisimilarity}, $t' \in \Tr^\bullet$ and $\ell(t')=\ell(t)\notin B$.
It remains to show that $t' \aconc \hat\rho'$ for each finite
prefix $\rho'$ of the suffix $p_i'\,u_i'\,p_{i+1}'\,u_{i+1}'\,p_{i+2}'\dots$ of $\pi'$.
This follows by \lem{justness-preserving}.
\end{proof}

\section{Stating and Verifying Liveness Properties}

The main purpose of ep-bisimilarity is as a vehicle for proving
liveness properties. A \emph{liveness property} is any property
saying that eventually something good will happen \cite{Lam77}. Liveness properties are
\emph{linear-time properties}, in the sense that they are interpreted primarily on the (complete)
\emph{runs} of a system. When a distributed system is formalised as a state in an (extended) LTS,
a run of the distributed system is modelled as a path in the transition system, starting from that state. 
However, not every such path models a realistic system run.
A \emph{completeness criterion} \cite{synchrons} selects some of the paths of a system as
\emph{complete paths}, namely those that model runs of the represented system.

A state $s$ in an (extended) LTS is said to satisfy a linear time property $\varphi$ when employing
the completeness criterion $\it CC$, notation $s \models^{\it CC} \varphi$, if each complete run of
$s$ satisfies $\varphi$ \cite{EPTCS322.6}. Writing $\pi \models \varphi$ when property $\varphi$
holds for path $\pi$, we thus have $s \models^{\it CC} \varphi$ iff $\pi \models \varphi$
for all complete paths $\pi$ starting from $s$.
When simplifying a system $s$ into an equivalent system $s'$, so that $s \sim s'$ for
some equivalence relation $\sim$, it is important that judgements $\models^{\it CC} \varphi$ are preserved:\\[-1.5mm]
\centerline{$s \sim s' ~~\Rightarrow~~ (s \models^{\it CC} \varphi \Leftrightarrow s' \models^{\it CC} \varphi).$}\\[1.5mm]
This is guaranteed when for each path $\pi$ of $s$ there exists a path $\pi'$ of $s'$, such that
(a) $\pi \models \varphi$ iff $\pi' \models \varphi$, and (b) $\pi'$ is complete iff $\pi$ is complete.
Here (a) is already guaranteed when $\pi$ and $\pi'$ are related by
strong bisimilarity. Taking ${\it CC}$ to be  $B$-justness  for any classification of a set of actions
$B$ as blocking, and $\sim$ to be $\bis{\it ep}$, \thm{justness-preserving} now ensures (b) as well.

\section{Interpreting Justness in Process Algebras}\label{sec:ABCdE}

Rather than using LTSs directly to model distributed systems,
one usually employs other formalisms such as process algebras or Petri nets, for they are often easier to use for 
system modelling. Their formal semantics maps their syntax into states of LTSs. In this section we introduce the
 \emph{Algebra of Broadcast Communication with discards and Emissions} (\ABCdE),
an extension of Milner's \emph{Calculus of Communication Systems} (CCS)~\cite{Mi90ccs} with
broadcast communication and signal emissions. In particular, we give a structural operational
semantics~\cite{Pl04} that interprets process expressions as states in an LTS\@.
Subsequently, we define the successor relation $\leadsto$ for \ABCdE,
thereby enriching the LTS into an LTSS\@.

We use \ABCdE here, as for many realistic applications CCS is not expressive enough
\cite{vG05d,GH15b}.
The presented approach can be applied to a wide range of process algebras.
\ABCdE is largely designed to be a starting point for 
transferring the presented theory to algebras used for `real' applications.
For example, broadcast communication is needed for verifying routing protocols (e.g.\ \cite{DIST16});
signals are employed to correctly render and verify protocols for
mutual exclusion~\cite{GH15b,EPTCS255.2}.
Another reason is that broadcasts as well as signals, in different ways, give rise
to asymmetric concurrency relations, and we want to show that our approach is flexible enough to handle~this.

\subsection{Algebra of Broadcast Communication with Discards and Emissions}\label{sec:abcde}

\ABCdE is parametrised with sets
$\A$ of \emph{agent identifiers},
$\Ch$ of \emph{handshake communication names},
$\B$ of $\emph{broadcast communication names}$, and
$\Sig$ of $\emph{signals}$;
each $A\in\A$ comes with a defining equation \plat{$A \defis P$} with $P$ being a guarded \ABCdE expression as defined below.
$\bar{\Ch} \coloneqq \{\bar{c} \mid c\in\Ch\}$ is the set of \emph{handshake communication co-names},
and $\bar{\Sig} \coloneqq \{\bar{s} \mid s\in\Sig\}$ is the set of signal emissions.
The collections $\B!$, $\B?$, and $\B{:}$ of \emph{broadcast}, \emph{receive}, and \emph{discard} actions are given by
$\B\sharp \coloneqq \{b\sharp \mid b\in \B\}$ for $\sharp \in \{!,?,{:}\}$.
$\textit{Act} \coloneqq \Ch \djcup \bar{\Ch} \djcup \{\tau\} \djcup \B! \djcup \B? \djcup \Sig$
is the set of {\em actions}, where $\tau$ is a special \emph{internal action}.
$\Lab \coloneqq \textit{Act} \djcup \B{:} \djcup \bar{\Sig}$ is the set of \emph{transition labels}.
Complementation extends to $\Ch \djcup \bar{\Ch} \djcup \Sig \djcup \bar{\Sig}$ by $\bar{\bar{c}}\mathbin{\coloneqq}c$.

Below,
$c$ ranges over $\Ch \djcup \bar{\Ch} \djcup \Sig \djcup \bar{\Sig}$,
$\eta$  over $\Ch \djcup \bar{\Ch} \djcup \{\tau\} \djcup \Sig \djcup \bar{\Sig}$,
$\alpha$ over $\textit{Act}$,
$\ell$ over $\Lab$,
$b$ over $\B$,
$\sharp,\sharp_1,\sharp_2$  over $\{{!,?,:}\}$ and
$s,r$ over $\Sig$. A \emph{relabelling} is a function $f:(\Ch\to\Ch)\djcup(\B\to\B)\linebreak[2]\djcup(\Sig\to\Sig)$;
it extends to $\Lab$ by $f(\bar{c})=\overline{f(c)}$, $f(\tau)\coloneqq\tau$, and $f(b\sharp)=f(b)\sharp$.

The set $\cT$ of \ABCdE expressions or \emph{processes} is the smallest set including:
\begin{center}\small
  \vspace{-1ex}
  \begin{tabular}{@{}l@{\ }l@{\ }l@{\ }|@{\ }l@{\ }l@{\ }l@{}}
    ${\bf 0}$       &                                                            & \emph{inaction} &
    $\alpha.P$      & for $\alpha\mathop\in\textit{Act}$ and $P\mathop\in\cT$    & \emph{action prefixing} \\
    $P+Q$           & for $P,Q\mathop\in\cT$                                     & \emph{choice} &
    $P|Q$           & for $P,Q\mathop\in\cT$                                     & \emph{parallel composition} \\
    $P\backslash L$ & for $L\mathop\subseteq\Ch\djcup\Sig$, $P\mathop\in\cT$     & \emph{restriction} &
    $P[f]$          & for $f$ a relabelling, $P\mathop\in\cT$                    & \emph{relabelling} \\
    $A$             & for $A\mathop\in\A$                                        & \emph{agent identifier} &
    $P\signals s$   & for $s\mathop\in\Sig$                                      & \emph{signalling}
  \end{tabular}
\end{center}
We abbreviate $\alpha.{\bf 0}$ by $\alpha$, and $P\backslash\{c\}$ by $P\backslash c$.
An expression is guarded if each agent identifier occurs within the scope of a prefixing operator.

The semantics of \ABCdE is given by the labelled transition relation
${\rightarrow} \subseteq \cT \times \Lab \times\cT$, where transitions 
\plat{$P \goto{\ell} Q$} are derived from the rules of Tables~\ref{tab:ABCdE Basic}--\ref{tab:ABCdE Signal}.
Here $\overline{L} \coloneqq \{\bar{c} \mid c\in L\}$.

\begin{table*}[t]
  \caption{Structural operational semantics of {\ABCdE} -- Basic\label{tab:ABCdE Basic}}
  \vspace{-1ex}
  \footnotesize
  \centering
    \framebox{$\begin{array}{ccc}
      \alpha.P \goto{\alpha} P \mylabel{Act} &
      \displaystyle\frac{P \goto{\alpha} P'}{P+Q \goto{\alpha} P'} \mylabel{Sum-l} &
      \displaystyle\frac{Q \goto{\alpha} Q'}{P+Q \goto{\alpha} Q'} \mylabel{Sum-r} \\[4ex]
      \displaystyle\frac{P \goto{\eta} P'}{P|Q \goto{\eta} P'|Q} \mylabel{Par-l} &
      \displaystyle\frac{P \goto{c} P',~ Q \goto{\bar{c}} Q'}{P|Q \goto{\tau} P'|Q'} \mylabel{Comm} &
      \displaystyle\frac{Q \goto{\eta} Q'}{P|Q \goto{\eta} P|Q'} \mylabel{Par-r} \\[4ex]
      \displaystyle\frac{P \goto{\ell} P'}{P\backslash L \goto{\ell} P'\backslash L}
        ~(\ell\notin L \djcup \overline{L}) \mylabel{Res} &
      \displaystyle\frac{P \goto{\ell} P'}{P[f] \goto{f(\ell)} P'[f]} \mylabel{Rel} &
      \displaystyle\frac{P \goto{\alpha} P'}{A \goto{\alpha} P'}
        ~(A \defis P) \mylabel{Rec}
    \end{array}$}
\end{table*}

Table~\ref{tab:ABCdE Basic} shows the basic operational semantics rules, identical to the ones of CCS~\cite{Mi90ccs}.
  The process $\alpha.P$ performs the action $\alpha$ first and subsequently acts as $P$.
The choice operator $P+Q$ may act as either $P$ or $Q$, depending on which of the processes is able to act at all.
The parallel composition $P|Q$ executes an action $\eta$ from $P$, an action $\eta$ from $Q$, or in the case where
$P$ and $Q$ can perform complementary actions $c$ and $\bar{c}$, the process can perform a synchronisation, resulting in an internal action $\tau$.
The restriction operator $P \backslash L$ inhibits execution of the actions from $L$ and their complements. 
The relabelling $P[f]$ acts like process $P$ with all labels $\ell$ replaced by $f(\ell)$.
Finally, an agent $A$ can do the same actions as the body $P$ of its defining equation.
When we take $\B=\Sig \coloneqq \emptyset$, only the rules of \tab{ABCdE Basic} matter, and \ABCdE simplifies to CCS\@.

\begin{table*}[b]
  \caption{Structural operational semantics of {\ABCdE} -- Broadcast\label{tab:ABCdE Broadcast}}
  \vspace{-1ex}
  \footnotesize
  \centering
    \framebox{$\begin{array}{c@{\qquad}c@{\qquad}c}
      {\bf 0} \goto{b{:}} {\bf 0}  \mylabel{Dis-nil}&
      \alpha.P \goto{b{:}} \alpha.P
        ~(\alpha\neq b?) \mylabel{Dis-act}&
      \displaystyle\frac{P \goto{b{:}} P',~ Q \goto{b{:}} Q'}{P+Q \goto{b{:}} P'+Q'}  \mylabel{Dis-sum}\\[4ex]
      \multicolumn{2}{c@{\qquad}}{
        \displaystyle\frac{P \goto{b\sharp_1} P',~ Q \goto{b\sharp_2} Q'}{P|Q \goto{b\sharp} P'|Q'}
          \scriptstyle (\sharp_1\circ\sharp_2 = \sharp \neq \_)
          \textstyle ~\text{with}~
          \begin{array}{c@{\ }|@{\ }c@{\ \ }c@{\ \ }c}
            \scriptstyle \circ & \scriptstyle ! & \scriptstyle ? & \scriptstyle : \\
            \hline
            \scriptstyle ! & \scriptstyle \_ & \scriptstyle ! & \scriptstyle ! \\
            \scriptstyle ? & \scriptstyle !  & \scriptstyle ? & \scriptstyle ? \\
            \scriptstyle : & \scriptstyle !  & \scriptstyle ? & \scriptstyle : \\
          \end{array} \mylabel{Bro}} &
      \displaystyle\frac{P \goto{b{:}} P'}{A \goto{b{:}} A}
       ~(A \defis P)\mylabel{Dis-rec}
    \end{array}$}
    \vspace{-4mm}
\end{table*}

Table~\ref{tab:ABCdE Broadcast} augments CCS with a mechanism for broadcast communication.
The rules are similar to the ones for the Calculus of Broadcasting Systems (CBS)~\cite{CBS91}; 
they also appear in the process algebra ABC~\cite{GH15a}, a strict subalgebra of \ABCdE.
The Rule \myref{Bro} presents the core of broadcast communication, where any broadcast-action $b!$
performed by a component in a parallel composition is guaranteed to be received by any other
component that is ready to do so, \ie in a state that admits a $b?$-transition.
Since it is vital that the sender of a broadcast can always proceed with it, regardless of the state
of other processes running in parallel, the process algebra features discard actions $b{:}$, in such a way
that each process in any state can either receive a particular broadcast $b$, by performing the
action $b?$, or discard it, by means of a $b{:}$, but not both. A broadcast transmission $b!$ can
synchronise with either $b?$ or $b{:}$, and thus is never blocked by lack of a listening party. 
In order to ensure associativity of the parallel composition, one requires rule \myref{Bro}
to consider receipt at the same time ($\sharp_1 = \sharp_2 = \mathord{?}$).
The remaining four rules of Table~\ref{tab:ABCdE Broadcast} generate the discard-actions.
The Rule \myref{Dis-nil} allows the nil process (inaction) to discard any incoming message; 
in the same spirit \myref{Dis-act} allows a message to be discarded by a process 
that cannot receive it.  A process offering a choice can only perform a discard-action 
if neither choice-option can handle it (Rule \myref{Dis-sum}).
Finally, an agent A can discard a broadcast iff the body $P$ of its defining equation can discard it.
Note that in all these cases a process does not change state by discarding a broadcast.

There exists a variant of CBS, ABC and \ABCdE without discard actions, see~\cite{GH15a,synchrons}. 
This approach, however, features negative premises in the operational rules. 
As a consequence, the semantics are not in De Simone format~\cite{dS85}.
Making use of discard actions and staying within the De Simone format allows us 
to use meta-theory about this particular format. For example 
we know, without producing our own proof, that the operators $+$ and $|$ of ABC and \ABCdE are associative and commutative, up to strong bisimilarity~\cite{CMR08}. Moreover, strong bisimilarity \cite{Mi90ccs}
is a congruence for all operators of \ABCdE.

Next to the standard operators of CCS and a broadcast mechanism, \ABCdE features also signal emission.
Informally, the signalling operator $P\signals s$ emits the signal $s$ to be read by another process. 
Signal emissions cannot block other actions of $P$. Classical examples are the modelling of read-write processes 
or traffic lights (see \Sec{ltss}).

Formally, our process algebra features a set $\Sig$ of signals.
The semantics of signals is presented in Table~\ref{tab:ABCdE Signal}.
\begin{table*}[t]
  \caption{Structural operational semantics of {\ABCdE} -- Signals\label{tab:ABCdE Signal}}
  \vspace{-1ex}
  \footnotesize
  \centering
    \framebox{$\begin{array}{@{}c@{\quad}c@{\quad}c@{\quad}c@{}}
      P\signals s \goto{\bar{s}} P\signals s \mylabel{Sig}&
      \displaystyle\frac{P \goto{\bar{s}} P'}{P+Q \goto{\bar{s}} P'+Q} \mylabel{Sig-sum-l}&
      \multicolumn{2}{l}
      {\displaystyle\frac{Q \goto{\bar{s}} Q'}{P+Q \goto{\bar{s}} P+Q'} \mylabel{Sig-sum-r}}\\[4ex]
      \displaystyle\frac{P \goto{\bar{s}} P'}{P\signals r \goto{\bar{s}} P'\signals r} \mylabel{Sig-sig}&
      \displaystyle\frac{P \goto{\bar{s}} P'}{A \goto{\bar{s}} A}
        ~(A \defis P) \mylabel{Sig-rec} &
        \displaystyle\frac{P \goto{\alpha} P'}{P\signals r \goto{\alpha} P'} \mylabel{Act-sig}&
        \displaystyle\frac{P \goto{b{:}} P'}{P\signals r \goto{b{:}} P'\signals r} \mylabel{Dis-sig}
    \end{array}$}
    \vspace{-3mm}
\end{table*}
The first rule \myref{Sig} models the emission $\bar s$ of signal $s$ to the environment. 
The environment (processes running in parallel) can read the signal by performing a read action $s$.
This action synchronises with the emission $\bar s$, via the rules of Table~\ref{tab:ABCdE Basic}.
Reading does not change the state of the emitter. The next four rules describe the interaction between signal 
emission and other operators,
namely choice, signal emission and recursion. In short, these operators 
do not prevent the emission of a signal, and emitting signals never changes the state of the
  emitting process. Other operators, such as relabelling and restriction do not 
need special attention as they are already handled by the corresponding rules in Table~\ref{tab:ABCdE Basic}.
This is achieved by carefully selecting the types of the labels: while \myref{Sum-l} features a label $\alpha$ 
of type $\textit{Act}$, the rules for restriction and relabelling use a label $\ell\in\Lab$.
In case a process performs a `proper' action, the signal emission ceases (Rule \myref{Act-sig}),
but if the process performs a broadcast discard transition, it does not (Rule \myref{Dis-sig}).

The presented semantics stays within the realm of the De Simone format \cite{dS85}, which brings many advantages. 
However, there exists an alternative, equivalent semantics, which is based on predicates. Rather than explicitly modelling
$P$ emitting $s$ by the transition $P \goto{\bar s} P$, one can introduce the predicate
$P^{\curvearrowright{s}}$. The full semantics can be found in \cite{EPTCS255.2}. Some readers might
find this notation more intuitive as signal emitting processes do not perform an actual action when a component reads the emitted signal.

\subsection{Naming Transitions}

The operational semantics of \ABCdE presented in \Sec{abcde} interprets the language as an LTS.
In \Sec{successors}, we aim to extend this LTS into an LTSS by defining a
successor relation $\leadsto$ on the transitions, and thereby also a concurrency relation
$\aconc$. However, the standard interpretation of CCS-like languages, which takes as transitions the
triples $P \goto\alpha Q$ that are derivable from the operational rules, does not work for our purpose.
\begin{example}
Let $P=A|B$ with \plat{$A \defis \tau.A + a.A$} and \plat{$B \defis \bar a.B$}. Now the transition $P \goto\tau P$
arises in two ways; either as a transition solely of the left component, or as a synchronisation
between both components. The first form of that transition is concurrent with the transition
$P \goto{\bar{a}} P$, but the second is not. In fact, an infinite path that would only perform the
$\tau$-transition stemming from the left component would not be just, whereas a path that schedules
both $\tau$-transitions infinitely often is. This shows that we want to distinguish these
two $\tau$-transitions, and hence not see a transition as a triple $P \goto\alpha Q$.\hfill$\lrcorner$
\end{example}
Instead, we take as the set $\Tr$ of transitions in our LTSS the \emph{derivations} of the
\emph{transition triples} $P \goto\alpha Q$ from our operational rules.
This is our reason to start with a definition of an LTS that features transitions as a primitive rather
than a derived concept.

\begin{definition}[Derivation]\label{df:derivation}\rm
  A \emph{derivation} of a transition triple $\varphi$ is
  a \emph{well-founded} (without infinite paths that keep going up), ordered, upwardly branching tree
  with the nodes labelled by transition triples,
  such that
  \begin{enumerate}
    \item the root is labelled by $\varphi$, and
    \item if $\mu$ is the label of a node and $K$ is the sequence of labels of this node's children
      then \plat{$\frac{K}{\mu}$} is a substitution instance of a rule from Tables~\ref{tab:ABCdE Basic}--\ref{tab:ABCdE Signal}.
  \end{enumerate}
  Given a derivation, we refer to the subtrees obtained by deleting the root node as its \emph{direct subderivations}. 
  Furthermore, by definition, \plat{$\frac{K_\varphi}{\varphi}$} is a substitution instance of a rule,
    where $\varphi$ is the label of the derivation's root and $K_\varphi$ is the sequence of labels of the root's children;
    the derivation is said to be \emph{obtained} by this rule.
\end{definition}

We interpret \ABCdE as an LTS $(S,\Tr,\source,\target,\ell)$ by taking as states $S$ the \ABCdE expressions $\cT$
and as transitions $\Tr$ the derivations of transition triples $P \goto{\alpha} Q$.
  Given a derivation $t$ of a triple $P \goto{\alpha} Q$, we define its label $\ell(t) \coloneqq \alpha$,
  its source $\source(t) \coloneqq P$,~and its target $\target(t) \coloneqq Q$.

\begin{definition}[Name of Derivation]\label{df:name of derivation}\rm
  The derivation obtained by application of \myref{Act} is called $\actsyn{\alpha}P$.
  The derivation obtained by application of \myref{Comm} or \myref{Bro} is called $t|u$, where $t,u$
  are the names of its direct subderivations.%
  \footnote{The order of a rule's premises should be maintained in the names of derivations obtained by it.
    Here $t$ should be the derivation corresponding to the first premise and $u$ to the second.
    As a result, $t \neq u \implies t|u \neq u|t$.}
  The derivation obtained by application of \myref{Par-l} is called $t|Q$
  where $t$ is the \premise's name and $Q$ is the process on the right hand side of $|$ in the derivation's source.
      In the same way, the derivation obtained by application of \myref{Par-r} is called $P|t$,
  while \myref{Sum-l}, \myref{Sum-r}, \myref{Res}, \myref{Rel}, and \myref{Rec} yield $t{+}Q$, $P{+}t$, $t\backslash L$, $t[f]$ and $A{:}t$,
  where $t$ is the \premise's name.
  The remaining four rules of Table~\ref{tab:ABCdE Broadcast} yield $b{:}{\bf 0}$, $b{:}\alpha.P$, $t{+}u$ and $A{:}t$,
  where $t,u$ are {\premise}s' names.
  The derivation of $P\signals s \goto{\bar{s}} P\signals s$ obtained by \myref{Sig} is called $P\sigsyn{s}$.
  Rules \myref{Act-sig}, \myref{Dis-sig} and~\myref{Sig-sig} yield $t\signals r$, and rules \myref{Sig-sum-l},
  \myref{Sig-sum-r} and  \myref{Sig-rec} yield $t{+}Q$, $P{+}t$ and $A{:}t$,  where $t$ is the \premise's name.
\end{definition}

\noindent
A derivation's name reflects the syntactic structure of that derivation.
The derivations' names not only provide a convenient way to identify derivations but also highlight the compositionality of derivations.
For example, given a derivation $t$ of $P \goto{c} P'$ and a derivation $u$ of $Q \goto{\bar{c}} Q'$
with $c \in \Ch \djcup \bar{\Ch} \djcup \Sig \djcup \bar{\Sig}$, $t|u$ will be a derivation of $P|Q \goto{\tau} P'|Q'$.

Hereafter, we refer to a derivation of a transition triple as a `transition'.

\subsection{Successors}\label{sec:successors}

In this section we extend the LTS of \ABCdE into an LTSS, by defining the successor relation $\leadsto$.
For didactic reasons, we do so first for CCS, and then extend our work to \ABCdE.

Note that $\chi \leadsto_{\zeta} \chi'$ can hold only when $\source(\chi)=\source(\zeta)$, \ie
transitions $\chi$ and $\zeta$ are both enabled in the state $\sR \coloneqq \source(\chi)=\source(\zeta)$.
It can thus be defined by structural induction on $\sR$.
The meaning of $\chi \leadsto_{\zeta} \chi'$ is (a) that $\chi$ is unaffected by $\zeta$ -- denoted $\chi \aconc \zeta$ --
and (b) that when doing $\zeta$ instead of $\chi$, afterwards a variant $\chi'$ of $\chi$ is still enabled.
Restricted to CCS, the relation $\aconc$ is moreover symmetric, and we can write $\chi \conc \zeta$.

In the special case that $\sR={\bf 0}$ or $\sR=\alpha.Q$, there are no two concurrent transitions enabled in
$\sR$, so this yields no triples $\chi \leadsto_{\zeta} \chi'$.
When $\sR = P+Q$, any two concurrent transitions $\chi \conc \zeta$  enabled in $\sR$ must either stem
both from $P$ or both from $Q$. In the former case, these transitions have the form $\chi=t+Q$ and
$\zeta=v+Q$, and we must have $t \conc v$, in the sense that $t$ and $v$ stem from different
parallel components within $P$. So $t \leadsto_v t'$ for some transition $t'$.
As the execution of $\zeta$ discards the summand $Q$, we also obtain $\chi \leadsto_{\zeta} t'$.
This motivates Item \ref{sum1CCS} in \df{leadstoCCS} below.
Item  \ref{sum2CCS} follows by symmetry.

Let $\sR=P|Q$. One possibility for $\chi \leadsto_{\zeta} \chi'$
is that $\chi$ comes from the left component and $\zeta$ from the right. So $\chi$ has the form $t|Q$ and $\zeta=P|w$.
In that case $\chi$ and $\zeta$ must be concurrent: we always have $\chi\conc\zeta$.
When doing $w$ on the right, the left component does not change, and afterwards $t$ is still possible.
Hence $\chi \leadsto_{\zeta} t|\target(w)$. This explains Item~\ref{par1CCS} in \df{leadstoCCS}.

Another possibility is that $\chi$ and $\zeta$ both stem from the left component.
In that case $\chi=t|Q$ and $\zeta=v|Q$, and it must be that $t \conc u$ within the left component.
Thus $t \leadsto_v t'$ for some transition $t'$, and we obtain $\chi \leadsto_{\zeta} t'|Q$. 
This motivates the first part of Item~\ref{par2CCS}.

It can also happen that $\chi$ stems form the left component, whereas $\zeta$ is a synchronisation,
involving both components. Thus $\chi=t|Q$ and $\zeta=v|w$. For $\chi\conc\zeta$ to hold, it must be
that $t\conc v$, whereas the $w$-part of $\zeta$ cannot interfere with $t$. This yields the second
part of Item~\ref{par2CCS}.

The last part of Item~\ref{par2CCS} is explained in a similar vain from the possibility that $\zeta$
stems from the left while $\chi$ is a synchronisation of both components.
Item~\ref{par3CCS} follows by symmetry.

In case both $\chi$ and $\zeta$ are synchronisations involving both components, \ie $\chi=t|u$ and
$\zeta=v|w$, it must be that $t \conc v$ and $u \conc w$. Now the resulting variant $\chi'$ of
$\chi$ after $\zeta$ is simply $t'|v'$, where $t \leadsto_v t'$ and $u \leadsto_v u'$.
This underpins Item~\ref{par4CCS}.

If $\sR$ has the form $P[f]$, $\chi$ and $\zeta$ must have the form $t[f]$ and $v[f]$, respectively.
Whether $t$ and $v$ are concurrent is not influenced by the renaming operator. So $t\conc v$.
The variant of $t$ that remains after doing $v$ is also not affected by the renaming,
so if $t \leadsto_v t'$ then $\chi \leadsto_{\zeta} t'[f]$. The case that $\sR$ has the form
$P{\setminus}L$ is equally trivial. This yields the first two parts of Item~\ref{othersCCS}.

In case $\sR=A$ with \plat{$A \defis P$}, then $\chi$ and $\zeta$ must have the forms $A{:}t$ and  $A{:}v$,
respectively, where $t$ and $v$ are enabled in $P$. Now $\chi\conc \zeta$ only if $t \conc v$,
so $t \leadsto_v t'$ for some transition $t'$. As the recursion around $P$ disappears upon executing
$\zeta$, we obtain $\chi \leadsto_{\zeta} t'$. This yields the last part of Item~\ref{othersCCS}.
Together, this motivates the following definition.

\begin{definition}[Successor Relation for CCS]\label{df:leadstoCCS}\rm
  The relation ${\leadsto} \subseteq \nabla\times\nabla\times\nabla$ is the smallest relation satisfying
  \begin{enumerate}
    \item $t \leadsto_v t'$ implies $t+Q \leadsto_{v+Q} t'$,\label{sum1CCS}
    \item $u \leadsto_w u'$ implies $P+u \leadsto_{P+w} u'$,\label{sum2CCS}\vspace{1ex}

    \item $t|Q \leadsto_{P|w} (t|\target(w))$ and $P|u \leadsto_{v|Q} (\target(v)|u)$,\label{par1CCS}
    \item $t \leadsto_v t'$ implies $t|Q \leadsto_{v|Q} t'|Q$, $t|Q \leadsto_{v|w} (t'|\target(w))$, and $t|u \leadsto_{v|Q} t'|u$,\label{par2CCS}
    \item $u \leadsto_w u'$ implies $P|u \leadsto_{P|w} P|u'$, $P|u \leadsto_{v|w} (\target(v)|u')$, and $t|u \leadsto_{P|w} t|u'$,\label{par3CCS}
    \item $t \leadsto_v t' \land u \leadsto_w u'$ implies $t|u \leadsto_{v|w} t'|u'$,\label{par4CCS}\vspace{1ex}

    \item $t \leadsto_v t'$ implies $t\backslash L \leadsto_{v\backslash L} t'\backslash L$, $t[f] \leadsto_{v[f]} t'[f]$
       and $A{:}t \leadsto_{A{:}v} t'$.\label{othersCCS}\vspace{1ex}
  \end{enumerate}
  for all $t,t',u,u',v,w\in\nabla$, $P,Q\mathbin\in\cT$ and $L,f,A$ with $\source(t)\mathbin=\source(v)\mathbin=P$,
  $\source(u)\mathbin=\source(w)\mathbin=Q$, $\source(t')\mathbin=\target(v)$, $\source(u')\mathbin=\target(w)$,
  $L\subseteq\Ch$, $f$ a relabelling and $A\in\A$ -- provided that the composed transitions exist.
\end{definition}
By projecting the ternary relation $\leadsto$ on its first two components, we obtain a
characterisation of the concurrency relation $\conc$ between CCS transitions:

\begin{observation}[Concurrency Relation for CCS]\label{obs:aconcCCS}\rm
  The relation ${\conc} \subseteq \nabla\times\nabla$ is the smallest relation satisfying
  \begin{enumerate}
    \item $t \conc v$ implies $t+Q \conc v+Q$,
    \item $u \conc w$ implies $P+u \conc P+w$,\vspace{1ex}

    \item $t|Q \conc P|w$ and $P|u \conc v|Q$,
    \item $t \conc v$ implies $t|Q \conc v|Q$, $t|Q \conc v|w$, and $t|u \conc v|Q$,
    \item $u \conc w$ implies $P|u \conc P|w$, $P|u \conc v|w$, and $t|u \conc P|w$,
    \item $t \conc v \land u \conc w$ implies $t|u \conc v|w$,\vspace{1ex}

    \item $t \conc v$ implies $t\backslash L \conc v\backslash L$, $t[f] \conc v[f]$ and $A{:}t \conc A{:}v$,\vspace{1ex}
  \end{enumerate}
  for all $t,u,v,w\in\nabla$, $P,Q\mathbin\in\cT$ and $L,f,A$ with $\source(t)\mathbin=\source(v)\mathbin=P$, $\source(u)\mathbin=\source(w)\mathbin=Q$,
  $L\subseteq\Ch$, $f$ a relabelling and $A\in\A$ -- provided that the composed transitions exist.
\end{observation}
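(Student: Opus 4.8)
The plan is to read \obs{aconcCCS} as the image of \df{leadstoCCS} under the projection $\pi$ that forgets a triple's third component, $\pi(t,v,t') = (t,v)$. Writing $\aconc \coloneqq \{(t,v) \mid \exists t'.\ t \leadsto_v t'\} = \pi({\leadsto})$ for this projection --- which is exactly the concurrency relation, and coincides with $\conc$ on CCS since $\aconc$ is symmetric there --- it suffices to show that $\aconc$ is the smallest relation closed under the seven clauses of \obs{aconcCCS}. (That a smallest such relation exists is immediate, since each clause is a Horn-style implication and is therefore preserved under arbitrary intersection.) I would prove the equality by the usual two inclusions, both of which reduce to a clause-by-clause matching between the two rule systems, since each clause of the observation is obtained from the correspondingly numbered clause of \df{leadstoCCS} by simply deleting the successor witness.

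For the inclusion $C \subseteq \aconc$, where $C$ denotes the relation inductively defined by \obs{aconcCCS}, I would show that $\aconc$ itself satisfies all seven closure conditions, so that minimality of $C$ yields $C \subseteq \aconc$. Concretely, for each clause I assume its premise pairs lie in $\aconc$, unfold the witnessing $\leadsto$-triples, apply the matching clause of \df{leadstoCCS} to build a $\leadsto$-triple for the conclusion pair, and project again. For instance, from $t \aconc v$ witnessed by some $t'$ with $t \leadsto_v t'$, clause \ref{sum1CCS} gives $t{+}Q \leadsto_{v{+}Q} t'$, hence $t{+}Q \aconc v{+}Q$; the premise-free clause \ref{par1CCS} directly yields $t|Q \aconc P|w$ and $P|u \aconc v|Q$; and the two-premise clause \ref{par4CCS} turns $t \aconc v$ and $u \aconc w$ into $t|u \aconc v|w$.

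For the converse $\aconc \subseteq C$, I would argue by rule induction on the derivation of a membership $t \leadsto_v t'$ from \df{leadstoCCS}: whichever clause was applied last to place $(t,v,t')$ into $\leadsto$, its premises are $\leadsto$-triples to which the induction hypothesis applies, giving the required premise pairs in $C$, and the matching clause of \obs{aconcCCS} then places $(t,v)$ in $C$. Combining the two inclusions gives $\aconc = C$, and symmetry of $\aconc$ on CCS identifies this with $\conc$.

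I expect the only real care to be needed at the side conditions. In both directions one must check that the provisos attached to \df{leadstoCCS} --- the source/target constraints $\source(t')=\target(v)$ and the like, the requirement $L \subseteq \Ch$, and the blanket ``provided that the composed transitions exist'' --- transfer faithfully. In the $C \subseteq \aconc$ direction this is automatic, because any witness $t'$ extracted from $t \leadsto_v t'$ already meets the constraint $\source(t')=\target(v)$ built into \df{LTSS}, and the composite transitions ($t{+}Q$, $t|u$, $t\backslash L$, $A{:}t$, and so on) exist under precisely the same provisos in both rule systems. Since the clauses are in perfect one-to-one correspondence and differ only by the dropped third coordinate, no clause can generate a pair on one side without its counterpart on the other; the argument is thus essentially bookkeeping, with the substantive content already residing in \df{leadstoCCS}.
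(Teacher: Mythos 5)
Your proposal is correct and takes essentially the same approach as the paper: the paper states this result as an observation justified only by the remark that it arises ``by projecting the ternary relation $\leadsto$ on its first two components'', and your two inclusions --- closure of $\aconc$ under the seven clauses via the matching clauses of the successor definition, and rule induction on $\leadsto$-derivations for the converse --- are precisely that projection argument spelled out. The one point you leave implicit (that the witness transitions compose, e.g.\ that $t'|u'$ exists in the clause for two synchronising premises, which needs the easy inductive fact that $\leadsto$ preserves labels in CCS) is at the same level of bookkeeping the paper itself suppresses.
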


The same concurrency relation appeared earlier in \cite{GH15a}.
\df{leadstoCCS} and \obs{aconcCCS} implicitly provide SOS rules for $\leadsto$ and $\conc$, such as 
\plat{${\frac{t \conc v'}{t+Q \conc v+Q}}$}. It is part of future work to investigate a rule format for ep-bisimilarity.

\df{leadsto} below generalises \df{leadstoCCS} to all of \ABCdE.
In the special case that $\zeta$ is a broadcast discard or signal emission, \ie $\ell(\zeta) \in \B{:} \djcup \bar\Sig$,
the transition $\zeta$ does not change state -- we have $\source(\zeta)=\target(\zeta)=\sR$ -- and is
supposed not to interfere with any other transition $\chi$ enabled in $\sR$. Hence $\chi \aconc \zeta$ and
$\chi \leadsto_{\zeta} \chi$. This is Item~\ref{signals} from \df{leadsto}.

Consequently, in Item~\ref{others}, which corresponds to Item~\ref{othersCCS} from \df{leadstoCCS},
we can now safely restrict attention to the case $\ell(\zeta) \in\textit{Act}$.
The last part of that item says that if within the scope of a signalling operator an action $v$ occurs,
one escapes from this signalling context, similarly to the cases of choice and recursion.
That would not apply if $v$ is a broadcast discard or signal emission, however.

An interesting case is when $\chi$ is a broadcast receive or discard transition, \ie $\ell(\chi)=b?$ or $b{:}$\,.
We postulate that one can never interfere with such an activity, as each process is always able
to synchronise with a broadcast action, either by receiving or by discarding it.
So we have $\chi\aconc \zeta$ for all $\zeta$ with $\sR=\source(\zeta)=\source(\chi)$.
It could be, however, that in $\chi \leadsto_{\zeta} \chi'$, one has $\ell(\chi)=b?$ and $\ell(\chi')= b{:}$, or vice versa.
Item~\ref{discard base} says that if $\sR=\alpha.P$, with $\zeta$ the $\alpha$-transition to $P$, then $\chi'$
can be any transition labelled $b?$ or $b{:}$ that is enabled in $P$.
The second parts of Items~\ref{dis-sum1} and~\ref{dis-sum2} generalise this idea to discard actions
enabled in a state of the form $P+Q$. Finally, Items~\ref{sum-recv1} and~\ref{sum-recv2} state that
when $\chi$ is a broadcast receive stemming from the left side of $\sR=P+Q$ and $\zeta$ an
action from the right, or vice versa, then $\chi'$ may be any transition labelled $b?$ or $b{:}$
that is enabled in $\target(\zeta)$. In all other cases, successors of $\chi$ are inherited from
successors of their building block, similar to the cases of other transitions.%

\begin{definition}[Successor Relation for \ABCdE]\label{df:leadsto}\rm
  The relation ${\leadsto} \subseteq \nabla\times\nabla\times\nabla$ is the smallest relation satisfying
  \setlength\leftmargini{1.8em}
  \begin{enumerate}
    \item $\ell(\zeta)\in\B{:}\djcup\bar{\Sig}$ and $\source(\zeta)=\source(\chi)$ implies $\chi \leadsto_\zeta \chi$,\label{signals}
    \item $\ell(t)\in\{b?,b{:}\}$ implies $\actsyn{b?}P \leadsto_{\actsyn{b?}P} t$ and $b{:}\alpha.P \leadsto_{\actsyn{\alpha}P} t$,\vspace{1ex}\label{discard base}

    \item $\ell(v)\notin\bar{\Sig} \land t \leadsto_v t'$ implies $t+Q \leadsto_{v+Q} t'$ and $t+u \leadsto_{v+Q} t'$,\label{dis-sum1}
    \item $\ell(w)\notin\bar{\Sig} \land u \leadsto_w u'$ implies $P+u \leadsto_{P+w} u'$ and $t+u \leadsto_{P+w} u'$,\label{dis-sum2}
    \item $\ell(w)\notin\bar{\Sig} \land \ell(t)=b? \land \ell(u')\in\{b?,b{:}\}$ implies $t+Q \leadsto_{P+w} u'$,\label{sum-recv1}
    \item $\ell(v)\notin\bar{\Sig} \land \ell(u)=b? \land \ell(t')\in\{b?,b{:}\}$ implies $P+u \leadsto_{v+Q} t'$,\label{sum-recv2}\vspace{1ex}

    \item $t|Q \leadsto_{P|w} (t|\target(w))$ and $P|u \leadsto_{v|Q} (\target(v)|u)$,
    \item $t \leadsto_v t'$ implies $t|Q \leadsto_{v|Q} t'|Q$, $t|Q \leadsto_{v|w} (t'|\target(w))$, and $t|u \leadsto_{v|Q} t'|u$,
    \item $u \leadsto_w u'$ implies $P|u \leadsto_{P|w} P|u'$, $P|u \leadsto_{v|w} (\target(v)|u')$, and $t|u \leadsto_{P|w} t|u'$,
    \item $t \leadsto_v t' \land u \leadsto_w u'$ implies $t|u \leadsto_{v|w} t'|u'$,\vspace{1ex}

    \item $\ell(v)\in\textit{Act} \land t \leadsto_v t'$ implies
      $t\backslash L \leadsto_{v\hspace{-1pt}\backslash\hspace{-1pt} L} t'\backslash L$, $t[f] \leadsto_{v[f]} t'[f]$,
      $A{:}t \leadsto_{A{:}v} t'$ and $t\signals r \leadsto_{v\hspace{1pt}\signals\hspace{2pt} r} t'$,\label{others}\vspace{1ex}
  \end{enumerate}
  for all $t,t',u,u',v,w\in\nabla$, $P,Q\mathbin\in\cT$ and $\alpha,L,f,A,b,r$  with $\source(t)\mathbin=\source(v)\mathbin=P$,
  $\source(u)\mathbin=\source(w)\mathbin=Q$, $\source(t')\mathbin=\target(v)$ and $\source(u')\mathbin=\target(w)$,
  $\alpha\in\textit{Act}$, $L\subseteq\Ch\djcup\Sig$, $f$ a relabelling, $A\in\A$, $b\in\B$ and
  $r\in\Sig$ -- provided that the composed transitions exist.
\end{definition}

\noindent
Although we have chosen to inductively define the $\leadsto$ relations, in
\arxiv{Appendix~\ref{Synchrons}}{\cite[Appendix~B]{GHW21}}
we follow a different approach in which \df{leadsto} appears as a theorem rather than a definition.
Following \cite{synchrons}, we understand each transition as the synchronisation of a number of
elementary particles called \emph{synchrons}.
Then relations on synchrons are proposed in terms of which
the $\sleadsto$ relation is defined. That this leads to the same result indicates that the above
definition is more principled than arbitrary.

\subsection{Congruence and Other Basic Properties of Ep-bisimilarity}

As mentioned before, the operators $+$ and $|$ are associative and commutative up to strong bisimilarity. We can strengthen this result.
\begin{theorem}
  The operators $+$ and $|$ are associative and commutative up to $\bisep$.
\end{theorem}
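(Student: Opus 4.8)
The plan is to establish each of the four claims (associativity and commutativity of both $+$ and $|$) by exhibiting an explicit ep-bisimulation relating the two sides. Since strong bisimilarity already guarantees that the underlying LTSs are isomorphic up to the stated laws, the content of the theorem lies entirely in Items~\ref{ep2a} and \ref{ep2b} of \df{ep-bisimilarity}: I must check that the successor relation $\leadsto$ is respected by the canonical correspondence between transitions on the two sides. First I would fix the relevant law, say commutativity of $|$, and define a bijection $\phi$ on transitions: for each state $P|Q$ and its mirror $Q|P$, the naming conventions of \df{name of derivation} give a syntactic recipe sending $t|Q \mapsto Q|t$, $P|u \mapsto u|P$, and $t|u \mapsto u|t$. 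The candidate relation is $\R \coloneqq \{(P|Q,\, Q|P,\, R_{P,Q}) \mid P,Q\in\cT\}$ with $R_{P,Q}$ the graph of $\phi$ restricted to $\en(P|Q)$, together with the identity-based relation $\R_{\textit{Id}}$ from \pr{equivalence} on all other state pairs so that targets which are not themselves parallel compositions are handled by reflexivity.

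The key steps, in order, are: (i) verify Item~\ref{ep1}, i.e.\ that $\phi$ is a label-preserving bijection between $\en(P|Q)$ and $\en(Q|P)$ -- this is immediate from the symmetry of Rules~\myref{Par-l}, \myref{Par-r}, \myref{Comm}, and \myref{Bro} under swapping the two operands; (ii) verify Item~\ref{ep2} by taking any $v \mathrel{R_{P,Q}} w$ and checking that $\target(v)$ and $\target(w)$ are again related by $\R$ (they will again be parallel compositions in swapped order, or handled by $\R_{\textit{Id}}$); and (iii) verify the successor-matching conditions \ref{ep2a} and \ref{ep2b} by a case analysis following exactly the clause structure of \df{leadsto}. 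For commutativity of $|$, each clause among Items~6--9 of \df{leadsto} has a mirror clause obtained by interchanging $t|Q \leftrightarrow P|u$ and $v|Q \leftrightarrow P|w$, so that $\chi \leadsto_\zeta \chi'$ on the left corresponds under $\phi$ to $\phi(\chi) \leadsto_{\phi(\zeta)} \phi(\chi')$ on the right; I would tabulate this correspondence clause by clause. Associativity is analogous but uses the richer reassociation bijection $(t|Q)|R \mapsto t|(Q|R)$, $(P|u)|R \mapsto P|(u|R)$, $P|(u|R)\text{-type terms}$, and the various synchronisation patterns, again matched against Items~6--9. The cases for $+$ are considerably lighter, since all summands are discarded as soon as any transition fires, so only Items~\ref{discard base}--\ref{sum-recv2} and \ref{dis-sum1}--\ref{dis-sum2} of \df{leadsto} are relevant, and the target states are typically not choices.

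The main obstacle I anticipate is the associativity of $|$ combined with the asymmetric clauses governing broadcast and signals. Ordinary binary synchronisations reassociate cleanly, but Rule~\myref{Bro} allows a three-way broadcast interaction (a $b!$ synchronising with two receivers/discarders) to be bracketed in two genuinely different ways, and the successor clauses \ref{signals} and \ref{discard base}--\ref{sum-recv2} treat broadcast receive/discard transitions specially -- in particular $\chi'$ in Item~\ref{discard base} is only constrained to be \emph{some} transition labelled $b?$ or $b{:}$, not a uniquely determined one. I would therefore need to argue that for each successor $\chi'$ chosen on one side of the associativity law, a matching successor with the same label and appropriately related target exists on the other side, which is where the freedom in the definition of $\leadsto$ both creates the difficulty and supplies the solution. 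I expect the cleanest route is to prove a helper lemma stating that $\chi \leadsto_\zeta \chi'$ holds for a reassociated transition iff it holds for the original, by simultaneous induction on the derivation structure mirroring \df{name of derivation}, and then to let Items~\ref{ep2a}--\ref{ep2b} follow directly from this lemma rather than re-deriving them inside the bisimulation proof.
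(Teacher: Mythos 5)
Your construction is essentially the paper's own proof: the paper likewise exhibits, for each of the four laws, an explicit ep-bisimulation whose transition component is precisely your mirror (resp.\ reassociation) bijection on derivation names -- relating $t|Q$ with $Q|t$, $P|u$ with $u|P$ and $t|u$ with $u|t$ (including the broadcast synchronisation pairs), and adding identity triples $(I,I,\textit{Id}_I)$ only for the $+$ laws, where firing a proper action discards the sum so that targets coincide -- while leaving the clause-by-clause verification against \df{ep-bisimilarity} and \df{leadsto} implicit. Your proposed transfer lemma and your handling of the nondeterministically chosen $b?$/$b{:}$ successors are exactly the details that the paper's omitted verification would contain, so your route is correct and the same as the paper's.
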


\begin{proof} Remember that $\cT$ denotes the set of \ABCdE expressions or processes.

  \vspace{1ex}
  \noindent
  \emph{Commutativity of $+$, \ie $P+Q \bisep Q+P$}:
  The relation
  \[
    \{(I, I, \textit{Id}_I) \mid I\in\cT\} \djcup \{(P+Q, Q+P, R_{\scriptscriptstyle P,Q}) \mid P,Q\in\cT\}
  \]
  is an ep-bisimulation. Here $\textit{Id}_I \coloneqq \{(t,t) \mid t\mathbin\in\nablasub{I}\}$ and
  \[
    \begin{array}{@{}l@{}l@{}}
      R_{\scriptscriptstyle P,Q} \coloneqq & \{(t+Q,Q+t) \mid t\in\nablasub{P} \land \ell(t) \notin \B{:}\} \djcup {} \\
                                           & \{(P+u,u+P) \mid u\in\nablasub{Q} \land \ell(u) \notin \B{:}\} \djcup {} \\
                                           & \{(t+u,u+t) \mid t\in\nablasub{P} \land u\in\nablasub{Q} \land \ell(t)=\ell(u)\in\B{:}\}\,.
    \end{array}
  \]
  $R_{\scriptscriptstyle P,Q}$ relates transitions, \ie derivations of transition triples,
  that are composed of the same sets of {\premise}s, even though their order is reversed.

  \vspace{1ex}
  \noindent
  \emph{Associativity of $+$, \ie $(O+P)+Q \bisep O+(P+Q)$}:
  The relation
  \[
    \{(I, I, \textit{Id}_I) \mid I\in\cT\} \djcup \{((O+P)+Q, O+(P+Q), R_{\scriptscriptstyle O,P,Q}) \mid O,P,Q\in\cT\}
  \]
  is an ep-bisimulation. Here $\textit{Id}_I$ and $R_{\scriptscriptstyle O,P,Q}$ are defined similarly to the previous case.

  \vspace{1ex}
  \noindent
  \emph{Commutativity of $|$, \ie $P|Q \bisep Q|P$}:
  The relation $\{(P|Q, Q|P, R_{\scriptscriptstyle P,Q}) \mid P,Q\in\cT\}$ is an ep-bisimulation. Here
  \[
    \begin{array}{@{}l@{}l@{}}
      R_{\scriptscriptstyle P,Q} = & \{(t|Q,Q|t) \mid t\in\nablasub{P} \land \ell(t) \notin \B! \djcup \B? \djcup \B{:}\} \djcup {} \\
                                   & \{(P|u,u|P) \mid u\in\nablasub{Q} \land \ell(u) \notin \B! \djcup \B? \djcup \B{:}\} \djcup {} \\
                                   & \{(t|u,u|t) \mid t\in\nablasub{P} \land u\in\nablasub{Q} \land \ell(t)=\overline{\ell(u)} \in \Ch \djcup \bar{\Ch} \djcup \Sig \djcup \bar{\Sig}\} \djcup {} \\
                                   & \begin{array}{@{}l@{}l@{}}
                                       \{(t|u,u|t) \mid {} & t\in\nablasub{P} \land u\in\nablasub{Q} \land {} \\
                                                           & \exists\, b\in\B.~ \{\ell(t),\ell(u)\}\in\{\{b!,b?\},\{b!,b{:}\},\{b?\},\{b?,b{:}\},\{b{:}\}\}\}\,.
                                     \end{array}
    \end{array}
  \]

  \vspace{1ex}
  \noindent
  \emph{Associativity of $|$, \ie $(O|P)|Q \bisep O|(P|Q)$}: The relation 
  \[
    \{(O|P)|Q, O|(P|Q), R_{\scriptscriptstyle O,P,Q}) \mid O,P,Q\in\cT\}
  \]
  is an ep-bisimulation, where $R_{\scriptscriptstyle O,P,Q}$ is defined similarly to the previous case.
\end{proof}

Additionally, not only strong bisimilarity should be a congruence for all operators of {\ABCdE}
 -- which follows immediately from the De Simone format -- but also our new ep-bisimilarity.
This means that if two process terms are ep-bisimilar, then they are also ep-bisimilar
in any context. 

\begin{theorem}\label{thm:conc} Ep-bisimilarity is a congruence for all operators of \ABCdE. 
\end{theorem}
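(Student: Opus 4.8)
The plan is to show that for each operator of \ABCdE, ep-bisimilarity is preserved. That is, assuming \plat{$P \bisep P'$} (and \plat{$Q \bisep Q'$} where a binary operator is involved), I must exhibit an ep-bisimulation witnessing that the two contexts are ep-bisimilar, e.g.\ \plat{$P|Q \bisep P'|Q'$}, \plat{$P+Q \bisep P'+Q'$}, \plat{$\alpha.P \bisep \alpha.P'$}, \plat{$P\backslash L \bisep P'\backslash L$}, \plat{$P[f] \bisep P'[f]$}, \plat{$P\signals s \bisep P'\signals s$}, and the case of recursion through agent identifiers. Since \obs{union} guarantees a largest ep-bisimulation, it suffices to treat each operator separately and then close under the operator applications. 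The natural strategy is to fix witnessing ep-bisimulations for the hypotheses and build, for each operator $\textit{op}$, a candidate relation \plat{$\R_{\textit{op}}$} on triples whose state-components are the compound terms and whose transition-relation component $R$ lifts the given relations through the syntactic structure of the derivations, exactly as the names of derivations from \df{name of derivation} record that structure.

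The key steps, carried out one operator at a time, are as follows. First I would define the candidate relation; for parallel composition, given \plat{$(P,P',R_P)\in\R_1$} and \plat{$(Q,Q',R_Q)\in\R_2$}, the triple \plat{$(P|Q,\,P'|Q',\,R)$} should use an $R$ that relates \plat{$t|Q$} to \plat{$t'|Q'$} when \plat{$t\mathrel{R_P}t'$}, relates \plat{$P|u$} to \plat{$P'|u'$} when \plat{$u\mathrel{R_Q}u'$}, and relates synchronisations \plat{$t|u$} to \plat{$t'|u'$} when both components are related (covering handshake, broadcast, and discard combinations consistently with rules \myref{Comm} and \myref{Bro}). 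Second, I would verify Item~\ref{ep1} of \df{ep-bisimilarity}: that $R$ is contained in \plat{$\en(P|Q)\times\en(P'|Q')$}, that every enabled transition on one side is matched (using \ref{ep1a},\ref{ep1b} of the hypotheses together with the operational rules, which determine exactly which transitions are enabled in a compound state), and that labels agree (\ref{ep1c}), which follows from \df{name of derivation} since the label of a composed derivation is determined by the labels of its building blocks. Third, and most substantially, I would verify Item~\ref{ep2}, the successor-preservation conditions \ref{ep2a} and \ref{ep2b}: given related transitions and a successor step \plat{$\chi \leadsto_\zeta \chi'$} in the compound term, I must produce the matching successor on the other side and check that the resulting variants are again related by the appropriate \plat{$R'$}. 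Here I would proceed by a case analysis following the clauses of \df{leadsto}, in each case peeling off the outermost syntactic layer of $\chi$, $\zeta$, $\chi'$ to expose the successor step at the level of the building blocks, apply Item~\ref{ep2} of the hypothesis ep-bisimulations to that inner step, and reassemble using the same clause of \df{leadsto} on the primed side. The recursion case \plat{$A\defis P$}, \plat{$A'\defis P'$} needs the extra observation that the triple for the bodies seeds the triple for the identifiers, with clause~\ref{others} of \df{leadsto} ($A{:}t \leadsto_{A{:}v} t'$) aligning the successor relation across the unfolding.

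The main obstacle will be Item~\ref{ep2} for parallel composition, specifically the clauses of \df{leadsto} involving broadcast receives and discards (Items~\ref{discard base}--\ref{sum-recv2} and the broadcast synchronisation clauses), where a successor \plat{$\chi'$} need \emph{not} have the same label as $\chi$ (a \plat{$b?$} may have a \plat{$b{:}$} successor and vice versa) and where, via rule~\myref{Bro}, a single compound transition decomposes into building blocks whose individual concurrency behaviour is asymmetric. The subtlety is that the existential \plat{$\chi'$} produced by \df{leadsto} is not uniquely determined by its building blocks, so matching it on the primed side requires choosing the inner successors compatibly and then verifying that \obs{conc} (ep-bisimilarity respects $\aconc$) keeps the two sides aligned; I expect this bookkeeping, rather than any deep difficulty, to be where the proof's real work lies. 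The remaining operators (prefixing, restriction, relabelling, signalling, choice) are comparatively routine, since their successor clauses are structural and label-preserving up to the fixed rewriting recorded in \df{name of derivation}.
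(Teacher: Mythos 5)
Your proposal is correct and takes essentially the same route as the paper's Appendix~\ref{app:congruence proofs}: one proposition per operator, each defining the smallest relation containing all triples of existing ep-bisimulations and closed under lifting the transition relations through the operator via the derivation names (exactly your $R$ for $|$, $+$, restriction, etc.), then verifying the requirements of Definition~\ref{df:ep-bisimilarity} by structural induction with a case analysis on the clauses of Definition~\ref{df:leadsto}, where the non-uniquely-determined $b?$/$b{:}$ successors you flag are dispatched simply by picking any transition related by the target relation $R'$ and appealing to label agreement (this bites in the choice case, not the parallel one). The only cosmetic difference is that the paper needs no recursion case, since agent identifiers are parameterless constants and hence impose no congruence obligation.
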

We cannot get it directly from the existing meta-theory on structural operational semantics, 
as nobody has studied ep-bisimilarity before.
As is standard, the proof is a case distinction on the type of the operator. 
For example, the case for action prefixing requires\vspace{-1pt}
\[
P \bisep Q\Rightarrow \alpha.P \bisep \alpha.Q  \text{ for } \alpha\in\text{\textit{Act}}\,.
\]
Such properties can be checked by inspecting the syntactic form of the transition rules, 
using structural induction.
While proofs for some statements, such as the one for action prefixing, are 
merely a simple exercise, others require more care, including long and tedious case distinctions.
A detailed proof of \thm{conc} can be found in Appendix~\ref{app:congruence proofs}.

\section{Failed Alternatives for Ep-Bisimulation}\label{sec:discussion}
On an LTSS $(S,\Tr,\source,\target,\ell,\leadsto)$
an ep-bisimulation has the type $S\times S\times\Pow(\Tr\times\Tr)$.
This is different from that of other classical bisimulations, which have the type $S\times S$.
While developing ep-bisimulation we have also explored dozens of other candidates, many of them being of type $(S\times S) \djcup (\Tr\times \Tr)$.
The inclusion of a relation between transitions is necessary to reflect the concept of components
or concurrency in one way or the other.
One such candidate definition declares a relation $\R \subseteq (S\times S)\djcup (\Tr\times \Tr)$
a valid bisimulation iff the set of triples\vspace{-2pt}
      \[\{(p,q,R) \mid (p,q)\in \R \cap (S\times S) \land R = \R \cap (\en(p)\times\en(q))\}\]
is an ep-bisimulation.
However, neither this candidate nor any of the others leads to a transitive notion of bisimilarity.
The problem stems from systems, not hard to model in \ABCdE, with multiple paths $\pi_i$ from states
$p$ to $p'$, such that a triple $(p,q,R)$ in an ep-bisimulation $\R$ forces triples
$(p',q',R'_i)$ to be in $\R$ for multiple relations $R_i \subseteq \en(p') \times \en(q')$,
depending on the chosen path $\pi_i$.

\section{Related Work}

Our LTSSs generalise the concurrent transition systems of \cite{Sta89}.
There $t \leadsto_v u$ is written as $t{\uparrow} v = u$, and $\uparrow$
is a partial function rather than a relation, in the sense that for given $t$ and $v$ there can be at
most one $u$ with $t{\uparrow} v = u$. This condition is not satisfied by broadcast communication, which
is one of the reasons we switched to the notation $t \leadsto_v u$. 
As an example, consider
  $b!|a.(b?+b?)$. The $b!$-transition after the $a$-transition has two variants, namely \plat{$\actsyn{b!}\mathbf{0}|(\actsyn{b?}\mathbf{0}{+}b?)$} and \plat{$\actsyn{b!}\mathbf{0}|(b?{+}\actsyn{b?}\mathbf{0})$}.
Another property of  concurrent transition systems that is not maintained in our framework is the
symmetry of the induced concurrency relation. Finally, \cite{Sta89} requires that
$(v{\uparrow}t){\uparrow}(u{\uparrow}t) = (v{\uparrow}u){\uparrow}(t{\uparrow}u)$,
the \emph{cube axiom}, whereas we have so far not found reasons to restrict attention to processes satisfying this axiom.
We are, however, open to the possibility that for future applications of LTSSs, some closure
properties may be imposed on them.

In \cite{DBLP:journals/fac/BoudolCHK94} a location-based bisimulation is proposed. It also keeps
track of the components participating in transitions. The underlying model is quite different from
ours, which makes it harder to formally argue that this notion of bisimilarity is incomparable to ours.
We do not know yet whether it could be used to reason about justness.\pagebreak[4]

\section{Conclusion and Future Work}\label{sec:conclusion}
In related work, it has been argued that 
fairness assumptions used for verifying liveness properties of distributed systems are too strong or unrealistic~\cite{GH19,TR13,GH15a}. 
As a consequence, justness, a minimal fairness assumption required for the verification 
of liveness properties, has been proposed. 
Unfortunately, all classical semantic equivalences, such as strong bisimilarity, fail to preserve 
justness.

In this paper, we have introduced labelled transition systems augmented by a successor relation, 
and, based on that, the concept of enabling preserving bisimilarity, a finer variant of 
strong bisimilarity. We have proven that this semantic equivalence is a congruence for all 
classical operators.
As it also preserves justness, it is our belief that enabling preserving bisimilarity in combination with justness
can and should be used for verifying liveness properties of large-scale distributed systems.

  Casually speaking, ep-bisimilarity is strong bisimilarity  augmented with 
  the requirement that the relation between enabled transitions is inherited by successor transitions.
  A straightforward question is whether this feature can be combined with other semantic equivalences, 
such as weak bisimilarity or trace equivalence.

We have further shown how process algebras can be mapped into LTSSs. Of course, 
process algebra is only one of many formal frameworks for modelling concurrent systems.
For accurately capturing causalities between event occurrences, models like Petri nets~\cite{Rei13}, event
structures~\cite{Wi87a} or higher dimensional automata~\cite{Pr91a,vG06}
are frequently preferable. Part of future work is therefore to develop a 
formal semantics with respect to LTSSs for these frameworks.

In order to understand the scope of justness in real-world applications, we plan to study systems that depend heavily on liveness. 
As a starting point we plan to verify locks, such as ticket lock.

\newcommand{\noopsort}[1]{}%
\bibliography{refs}

\newpage
\appendix
\section{Congruence Proofs}\label{app:congruence proofs}

Ep-bisimilarity is a congruence for all operators of \ABCdE iff
Propositions~\ref{pr:congruence for action prefixing}--\ref{pr:congruence for signalling} below hold.
We prove them one by one.

\begin{proposition}\label{pr:congruence for action prefixing}\rm
  If $P \bisep Q$ and $\alpha\in\textit{Act}$ then $\alpha.P \bisep \alpha.Q$.
\end{proposition}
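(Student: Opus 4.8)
The plan is to exhibit an explicit ep-bisimulation containing the triple for $\alpha.P$ and $\alpha.Q$.

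Let me think about what transitions are enabled in $\alpha.P$.The plan is to construct an explicit ep-bisimulation $\R$ witnessing $\alpha.P \bisep \alpha.Q$, starting from a given ep-bisimulation $\R_0$ that witnesses $P \bisep Q$. First I would analyse which transitions are enabled in the state $\alpha.P$. Applying \myref{Act} gives the single transition $\actsyn{\alpha}P$ with target $P$. In addition, if $\alpha \neq b?$ for some $b$, Rule \myref{Dis-act} supplies discard transitions $b{:}\alpha.P$ for every relevant $b$, each of which is a self-loop (its source and target are both $\alpha.P$). So $\en(\alpha.P)$ consists of the action transition together with a family of discard self-loops, and analogously for $\alpha.Q$.

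The idea is to take
\[
  \R \coloneqq \R_0 \djcup \{(\alpha.P,\alpha.Q,R_\alpha)\}\,,
\]
where $R_\alpha$ matches the action transition $\actsyn{\alpha}P$ of $\alpha.P$ with $\actsyn{\alpha}Q$ of $\alpha.Q$, and matches each discard self-loop $b{:}\alpha.P$ with the corresponding $b{:}\alpha.Q$ carrying the same label. Concretely,
\[
  R_\alpha \coloneqq \{(\actsyn{\alpha}P,\actsyn{\alpha}Q)\} \djcup \{(b{:}\alpha.P,\,b{:}\alpha.Q) \mid b\in\B,\ \alpha\neq b?\}\,.
\]
I would then verify the two clauses of \df{ep-bisimilarity}. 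Clause~\ref{ep1} is routine: $R_\alpha$ clearly relates every enabled transition of $\alpha.P$ to one of $\alpha.Q$ and vice versa, and label-matching \ref{ep1c} holds by construction. For clause~\ref{ep2}, I consider which related pairs $v \mathrel{R_\alpha} w$ can be executed. If $v=w$ is the action pair, the targets are $P$ and $Q$, and I use the assumed triple $(P,Q,R')\in\R_0$ for some $R'$ as the required witness; clauses~\ref{ep2a}/\ref{ep2b} then hold because there are essentially no genuine successor triples to propagate out of $\alpha.P$ --- by \df{leadsto}, the only successors available in state $\alpha.P$ are those produced by Items~\ref{signals} and~\ref{discard base}, and I must check these are matched appropriately after the action step. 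If instead $v,w$ is a discard pair, it is a self-loop, so the target state is again $(\alpha.P,\alpha.Q)$ and I reuse the same triple with $R'\coloneqq R_\alpha$, again checking the successor conditions \ref{ep2a}/\ref{ep2b} against \df{leadsto}.

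The main obstacle I anticipate is clause~\ref{ep2} for the discard self-loops together with the successor conditions governed by \df{leadsto}(\ref{signals}) and~(\ref{discard base}). Because a broadcast discard does not change state, Item~\ref{signals} forces $\chi \leadsto_{b{:}\alpha.P} \chi$ for every enabled $\chi$, so I must confirm that the relation $R_\alpha$ is literally preserved along the discard loop --- which it is, since taking $R'=R_\alpha$ makes \ref{ep2a} and \ref{ep2b} immediate. The subtler point is the action step: after executing $\actsyn{\alpha}P$, Item~\ref{discard base} may turn the action transition into $b?$- or $b{:}$-successors enabled in $P$, and I must ensure these successors on the $P$-side are matched by corresponding successors on the $Q$-side inside the witnessing triple $(P,Q,R')\in\R_0$. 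Since $R'$ already relates all transitions enabled in $P$ to those in $Q$ with equal labels (clause~\ref{ep1}), and \df{leadsto}(\ref{discard base}) allows any target transition of the right label, the matching goes through; verifying this carefully is the one place where I would slow down rather than wave it away.
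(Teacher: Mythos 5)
Your proposal is correct and matches the paper's own proof in essence: the paper uses exactly your relation on transitions (there called $\alpha.R$, pairing $\actsyn{\alpha}P$ with $\actsyn{\alpha}Q$ and each $b{:}\alpha.P$ with $b{:}\alpha.Q$), merely packaging it as the smallest relation closed under prefixing of arbitrary triples rather than your single-triple extension $\R_0 \djcup \{(\alpha.P,\alpha.Q,R_\alpha)\}$, which is an inessential difference. If anything, you are more explicit than the paper, whose induction step says the verification ``follows directly with Definitions~\ref{df:ep-bisimilarity} and~\ref{df:leadsto}'': the points you slow down on --- the discard self-loops handled by \df{leadsto}(\ref{signals}) with $R'=R_\alpha$, and the matching of \df{leadsto}(\ref{discard base}) successors after the action step via clause~\ref{ep1} of the triple $(P,Q,R')$ --- are precisely the content being waved through there.
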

\begin{proof}
  A {\transition} enabled in $\alpha.P$ is either $\actsyn{\alpha}P$ or $b{:}\alpha.P$ for some $b\in\B$ with $\alpha\neq b?$\,.
  \\[1ex]
  Let $\R \subseteq \cT\times\cT\times\Pow(\nabla\times\nabla)$ be the smallest relation satisfying
  \begin{enumerate}
    \item if $(P,Q,R)\in\R'$ for some ep-bisimulation $\R'$ then $(P,Q,R)\in\R$,
    \item if $(P,Q,R)\in\R$ and $\alpha\in\textit{Act}$ then $(\alpha.P,\alpha.Q,\alpha.R)\in\R$, where
      \[
        \alpha.R \coloneqq \{(\actsyn{\alpha}P,\actsyn{\alpha}Q)\} \djcup \{(b{:}\alpha.P,b{:}\alpha.Q) \mid b\in\B \land \alpha\neq b?\}\,.
      \]
  \end{enumerate}
  It suffices to show that $\R$ is an ep-bisimulation.
  I.e., all entries in $\R$ satisfy the requirements of \df{ep-bisimilarity}.
  We proceed by structural induction.

  \vspace{1ex}
  \noindent
  \emph{Induction base}:
  Suppose $(P,Q,R)\in\R'$ for some ep-bisimulation $\R'$.
  Since $\R' \subseteq \R$, all requirements of \df{ep-bisimilarity} are satisfied.

  \vspace{1ex}
  \noindent
  \emph{Induction step}:
  Suppose $(P,Q,R)\in\R$ satisfies all requirements of \df{ep-bisimilarity},
  we prove that $(P',Q',R')$, where $P'=\alpha.P$, $Q'=\alpha.Q$, and $R'=\alpha.R$, also satisfies those requirements.
  This follows directly with Definitions~\ref{df:ep-bisimilarity} and~\ref{df:leadsto}.
\end{proof}

\begin{proposition}\label{pr:congruence for choice}\rm
  If $P_\Left \bisep Q_\Left$ and $P_\Right \bisep Q_\Right$ then $P_\Left+P_\Right \bisep Q_\Left+Q_\Right$.
\end{proposition}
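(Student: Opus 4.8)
The plan is to follow the template of the proof of Proposition~\ref{pr:congruence for action prefixing}. Let $\R_0$ be the largest ep-bisimulation, which exists by \obs{union}, and fix $R_\Left$, $R_\Right$ with $(P_\Left,Q_\Left,R_\Left)\in\R_0$ and $(P_\Right,Q_\Right,R_\Right)\in\R_0$; these exist because $P_\Left\bisep Q_\Left$ and $P_\Right\bisep Q_\Right$. I would then show that
\[
  \R \coloneqq \R_0 \djcup \{(P_\Left{+}P_\Right,\, Q_\Left{+}Q_\Right,\, R_{+})\}
\]
is an ep-bisimulation, where $R_{+}$ relates two transitions enabled in $P_\Left{+}P_\Right$ and $Q_\Left{+}Q_\Right$ exactly when they are assembled from $R_\Left$- or $R_\Right$-related components in the same syntactic shape. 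Concretely $R_{+}$ consists of the pairs $(t{+}P_\Right,u{+}Q_\Right)$ with $t\mathrel{R_\Left}u$ and $\ell(t)\notin\B{:}$, the symmetric pairs $(P_\Left{+}t,Q_\Left{+}u)$ with $t\mathrel{R_\Right}u$ and $\ell(t)\notin\B{:}$, and the joint-discard pairs $(t{+}t',u{+}u')$ with $t\mathrel{R_\Left}u$, $t'\mathrel{R_\Right}u'$ and $\ell(t)=\ell(t')\in\B{:}$; this mirrors the relation used for commutativity of $+$.

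Checking Item~\ref{ep1} of \df{ep-bisimilarity} is routine. Every transition enabled in $P_\Left{+}P_\Right$ is either a summand transition $t{+}P_\Right$ or $P_\Left{+}t$ with label in $\textit{Act}\djcup\bar{\Sig}$ (obtained by \myref{Sum-l}, \myref{Sum-r}, \myref{Sig-sum-l} or \myref{Sig-sum-r}), or a joint discard $t{+}t'$ (obtained by \myref{Dis-sum}). Totality of $R_\Left$ and $R_\Right$ (Items~\ref{ep1a} and~\ref{ep1b}) together with label matching (Item~\ref{ep1c}) yields a partner of the same shape in $Q_\Left{+}Q_\Right$, and vice versa, so Item~\ref{ep1} holds; label equality is immediate from that of the constituent relations.

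For Item~\ref{ep2} I would fix $(v,w)\in R_{+}$ and choose $R'$ by cases on $\ell(v)$. If $\ell(v)\in\B{:}\djcup\bar{\Sig}$ then $v$ and $w$ preserve their source states, so I take $R'\coloneqq R_{+}$, and since by Item~\ref{signals} of \df{leadsto} the only successor of any enabled transition along such a $v$ (resp.\ $w$) is that transition itself, Items~\ref{ep2a} and~\ref{ep2b} hold with $t'=t$, $u'=u$. If $\ell(v)\in\textit{Act}$ then $v=t{+}P_\Right$ and $w=u{+}Q_\Right$ (the right-hand case being symmetric) with $t\mathrel{R_\Left}u$; both escape the choice, leading to $\target(t)$ and $\target(u)$, and I take $R'$ to be the relation supplied, via Item~\ref{ep2}, by $\R_0$ for the executed pair $t\mathrel{R_\Left}u$, so that $(\target(v),\target(w),R')\in\R_0\subseteq\R$.

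The main obstacle will be verifying Items~\ref{ep2a} and~\ref{ep2b} in this last case, since an action escaping the choice discards the opposite summand, yet broadcast receives and discards can never be interfered with. I would split on the shape of a related transition $t_0\mathrel{R_{+}}u_0$ with $t_0\leadsto_v t_0'$. When $t_0$ stems from the same summand as $v$, or is a joint discard, its successor along $v$ is inherited from a component successor (Item~\ref{dis-sum1} of \df{leadsto}), and the matching $u_0'$ is produced directly by Item~\ref{ep2a} for $\R_0$ with executed pair $t\mathrel{R_\Left}u$. The delicate case is $t_0=P_\Left{+}t_2$ from the discarded summand: here $t_0$ admits a successor along $v$ only if $\ell(t_2)=b?$, and then Item~\ref{sum-recv2} of \df{leadsto} lets that successor be an arbitrary $b?$- or $b{:}$-labelled transition enabled in $\target(v)$. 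I would match it not through the right component but by invoking totality of $R'$ (Item~\ref{ep1a}) in the new state $\target(v)$ to pick a partner $u_0'$ of the same label, which Item~\ref{sum-recv2} then certifies as a successor of $u_0$ along $w$. Exploiting this nondeterminism of the successor relation for unblockable broadcast transitions, together with the fact that ep-bisimilarity keeps the enabled-transition relation total at every reachable pair, is the only genuinely non-routine step; Item~\ref{ep2b} is symmetric, and the remaining bookkeeping follows Proposition~\ref{pr:congruence for action prefixing}.
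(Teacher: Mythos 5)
Your proposal is correct and takes essentially the same route as the paper: the same three-shape classification of the transitions enabled in a sum, the same relation $R_{+}$ (the paper's $R_\Left+R_\Right$), the same case split on the label of the executed pair (with $R'\coloneqq R_{+}$ when $\ell(v)\in\B{:}\djcup\bar{\Sig}$, since such transitions leave the state unchanged and have only identity successors), and the same key move in the delicate discarded-summand case, namely invoking totality and label-preservation (Items~\ref{ep1a} and~\ref{ep1c} of \df{ep-bisimilarity}) of $R'$ at the target pair and letting the nondeterminism of Items~\ref{sum-recv1}/\ref{sum-recv2} of \df{leadsto} certify the chosen partner as a successor. The only difference is packaging -- you add a single triple on top of the largest ep-bisimulation, which indeed suffices because $\textit{Act}$-labelled transitions escape the choice context so successor triples land back in $\R_0$, whereas the paper takes the smallest relation closed under the sum construction over all ep-bisimulation triples and verifies the requirements by structural induction.
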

\begin{proof}
  A {\transition} enabled in $P+Q$ is either
  \begin{itemize}
    \item $t+Q$ for some $t\in\nablasub{P}$ with $\ell(t)\notin\B{:}$,
    \item $P+u$ for some $u\in\nablasub{Q}$ with $\ell(u)\notin\B{:}$, or
    \item $t+u$ for some $t\in\nablasub{P}$ and $u\in\nablasub{Q}$ with $\ell(t)=\ell(u)\in\B{:}$\,.
  \end{itemize}
  \vspace{1ex}
  \noindent
  Let $\R \subseteq \cT\times\cT\times\Pow(\nabla\times\nabla)$ be the smallest relation satisfying
  \begin{enumerate}
    \item if $(P,Q,R)\in\R'$ for some ep-bisimulation $\R'$ then $(P,Q,R)\in\R$,
    \item if $(P_\Left,Q_\Left,R_\Left),(P_\Right,Q_\Right,R_\Right)\in\R$ then $(P_\Left+P_\Right,Q_\Left+Q_\Right,R_\Left+R_\Right)\in\R$, where
      \[
        \begin{array}{@{}l@{}l@{}}
          R_\Left+R_\Right \coloneqq & \{(t+P_\Right,v+Q_\Right) \mid t \mathrel{R_\Left} v \land \ell(t) \notin \B{:}\} \djcup {} \\
                                     & \{(P_\Left+u,Q_\Left+w) \mid u \mathrel{R_\Right} w \land \ell(u) \notin \B{:}\} \djcup {} \\
                                     & \{(t+u,v+w) \mid t \mathrel{R_\Left} v \land u \mathrel{R_\Right} w \land \ell(t)=\ell(u)\in\B{:}\}\,.
        \end{array}
      \]
  \end{enumerate}
  It suffices to show that $\R$ is an ep-bisimulation.
  I.e., all entries in $\R$ satisfy the requirements of \df{ep-bisimilarity}.
  We proceed by structural induction.

  \vspace{1ex}
  \noindent
  \emph{Induction base}:
  Suppose $(P,Q,R)\in\R'$ for some ep-bisimulation $\R'$.
  Since $\R' \subseteq \R$, all requirements of \df{ep-bisimilarity} are satisfied.

  \vspace{1ex}
  \noindent
  \emph{Induction step}:
  Suppose $(P_\Left,Q_\Left,R_\Left),(P_\Right,Q_\Right,R_\Right)\in\R$ satisfy all requirements of \df{ep-bisimilarity},
  we prove that $(P,Q,R)$, where $P=P_\Left+P_\Right$, $Q=Q_\Left+Q_\Right$ and $R=R_\Left+R_\Right$, also satisfies those requirements.

  \vspace{1ex}
  \noindent
  $R\subseteq\nablasub{P}\times\nablasub{Q}$ follows from $R_\Left\subseteq\nablasub{P_\Left}\times\nablasub{Q_\Left}$ and $R_\Right\subseteq\nablasub{P_\Right}\times\nablasub{Q_\Right}$.

  \vspace{1ex}
  \noindent
  \emph{Requirement 1.a}:
  It suffices to find, for each $\chi\in\nablasub{P}$, a $\zeta\in\nablasub{Q}$ with $\chi \mathrel R \zeta$.
  \begin{enumerate}
    \item Suppose $\chi=t+P_\Right$ for some $t\in\nablasub{P_\Left}$ with $\ell(t)\notin\B{:}$\,. \\
      We obtain $v\in\nablasub{Q_\Left}$ with $t \mathrel{R_\Left} v$, and pick $\zeta=v+Q_\Right$.
    \item Suppose $\chi=P_\Left+u$ for some $u\in\nablasub{P_\Right}$ with $\ell(u)\notin\B{:}$\,. \\
      We obtain $w\in\nablasub{Q_\Right}$ with $u \mathrel{R_\Right} w$, and pick $\zeta=Q_\Left+w$.
    \item Suppose $\chi=t+u$ for some $t\in\nablasub{P_\Left}$ and $u\in\nablasub{P_\Right}$ with $\ell(t)=\ell(u)\in\B{:}$\,. \\
      We obtain $v\in\nablasub{Q_\Left}$ and $w\in\nablasub{Q_\Right}$ with $t \mathrel{R_\Left} v$ and $u \mathrel{R_\Right} w$, and pick $\zeta=v+w$.
  \end{enumerate}
  In all cases, $\zeta\in\nablasub{Q}$ and $\chi \mathrel R \zeta$ hold trivially.

  \vspace{1ex}
  \noindent
  \emph{Requirement 1.b}:
  The proof is similar to that of Requirement 1.(a) and is omitted.

  \vspace{1ex}
  \noindent
  \emph{Requirement 1.c}:
  This follows directly from the observations that
  \begin{itemize}
    \item $\ell(t)=\ell(v) \implies \ell(t+P_\Right)=\ell(v+Q_\Right)$,
    \item $\ell(u)=\ell(w) \implies \ell(P_\Left+u)=\ell(Q_\Left+w)$, and
    \item $\ell(t)=\ell(v) \land \ell(u)=\ell(w) \implies \ell(t+u)=\ell(v+w)$;
  \end{itemize}
  provided that the composed {\transition}s exist.

  \vspace{1ex}
  \noindent
  \emph{Requirement 2}:
  It suffices to find, for arbitrary $\Upsilon,\Upsilon'$ with $\Upsilon \mathrel R \Upsilon'$, an $R'$ with \\
  $(\target(\Upsilon),\target(\Upsilon'),R')\in\R$, such that
  \begin{enumerate}[(a)]
    \item for arbitrary $\chi,\chi'$ with $\chi \mathrel R \chi'$ and $\chi \leadsto_\Upsilon \zeta$, we can find a $\zeta'$ with
      $\chi' \leadsto_{\Upsilon'} \zeta'$ and $\zeta \mathrel{R'} \zeta'$,
    \item for arbitrary $\chi,\chi'$ with $\chi \mathrel R \chi'$ and $\chi' \leadsto_{\Upsilon'} \zeta'$, we can find a $\zeta$ with
      $\chi \leadsto_\Upsilon \zeta$ and $\zeta \mathrel{R'} \zeta'$.
  \end{enumerate}
  Below we focus merely on (a), as (b) will follow by symmetry.

  \vspace{1ex}
  \noindent
  Suppose $\ell(\Upsilon)\in\B{:}\djcup\bar{\Sig}$.
  Pick $R'=R$.
  Then $(\target(\Upsilon),\target(\Upsilon'),R') = (P,Q,R)\in\R$.
  From $\chi \leadsto_\Upsilon \zeta$ we have $\zeta=\chi$.
  Pick $\zeta'=\chi'$.
  Then $\chi' \leadsto_{\Upsilon'} \zeta'$.
  $\zeta \mathrel{R'} \zeta'$ is given by $\chi \mathrel R \chi'$.

  \vspace{1ex}
  \noindent
  We further split the cases when $\ell(\Upsilon)\in\textit{Act}$.
  \begin{enumerate}
    \item Suppose $\Upsilon=v+P_\Right$ and $\Upsilon'=v'+Q_\Right$ with $v \mathrel{R_\Left} v'$.
      We obtain $R_\Left'$ that satisfies Requirement~2 with respect to $v$ and $v'$.
      Pick $R'=R_\Left'$.
      Then $(\target(\Upsilon),\target(\Upsilon'),R')=(\target(v),\target(v'),R_\Left')\in\R$.
      \begin{enumerate}
        \item Suppose $\chi=t+P_\Right$ and $\chi'=t'+Q_\Right$ with $t \mathrel{R_\Left} t'$.
          From $\chi \leadsto_\Upsilon \zeta$ we have $t \leadsto_v \zeta$.
          Then we obtain $x'$ with $t' \leadsto_{v'} x'$ and $\zeta \mathrel{R_\Left'} x'$.
          Pick $\zeta'=x'$.
          Then $\chi' \leadsto_{\Upsilon'} \zeta'$ follows from $t' \leadsto_{v'} x'$.
        \item Suppose $\chi=P_\Left+u$ and $\chi'=Q_\Left+u'$ with $u \mathrel{R_\Right} u'$.
          We obtain $x'$ with $\zeta \mathrel{R_\Left'} x'$ and pick $\zeta'=x'$.
          From $\chi \leadsto_\Upsilon \zeta$ we have $\ell(\chi)=b?$ and $\ell(\zeta)\in\{b?,b{:}\}$ for some $b\in\B$.
          Then $\chi' \leadsto_{\Upsilon'} \zeta'$ follows from $\ell(\chi')=b?$ and $\ell(x')\in\{b?,b{:}\}$.
        \item Suppose $\chi=t+u$ and $\chi'=t'+u'$ with $t \mathrel{R_\Left} t'$ and $u \mathrel{R_\Right} u'$.
          From $\chi \leadsto_\Upsilon \zeta$ we have $t \leadsto_v \zeta$.
          Then we obtain $x'$ with $t' \leadsto_{v'} x'$ and $\zeta \mathrel{R_\Left'} x'$.
          Pick $\zeta'=x'$.
          Then $\chi' \leadsto_{\Upsilon'} \zeta'$ follows from $t' \leadsto_{v'} x'$.
      \end{enumerate}
      In all cases, $\zeta \mathrel{R'} \zeta'$ is given by $\zeta \mathrel{R_\Left'} x'$.
    \item Suppose $\Upsilon=P_\Left+w$ and $\Upsilon'=Q_\Left+w'$ with $w \mathrel{R_\Right} w'$.
      The proof is similar to that of the previous case.
    \qedhere
  \end{enumerate}
\end{proof}

\begin{proposition}\label{pr:congruence for parallel composition}\rm
  If $P_\Left \bisep Q_\Left$ and $P_\Right \bisep Q_\Right$ then $P_\Left|P_\Right \bisep Q_\Left|Q_\Right$.
\end{proposition}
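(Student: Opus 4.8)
The plan is to adapt the strategy of the preceding congruence proofs. First I would record that a transition enabled in $P_\Left|P_\Right$ has one of four shapes, according to the rule deriving it: $t|P_\Right$ (\myref{Par-l}) and $P_\Left|u$ (\myref{Par-r}) with the component label outside $\B!\djcup\B?\djcup\B{:}$, a handshake/signal synchronisation $t|u$ (\myref{Comm}) with $\ell(t)=\overline{\ell(u)}\in\Ch\djcup\bar\Ch\djcup\Sig\djcup\bar\Sig$, and a broadcast synchronisation $t|u$ (\myref{Bro}) whose component labels combine through the $\circ$-table. I would then let $\R$ be the smallest relation that contains every ep-bisimulation and satisfies: if $(P_\Left,Q_\Left,R_\Left),(P_\Right,Q_\Right,R_\Right)\in\R$ then $(P_\Left|P_\Right,\,Q_\Left|Q_\Right,\,R_\Left|R_\Right)\in\R$, where $R_\Left|R_\Right$ pairs a transition of $P_\Left|P_\Right$ with a transition of $Q_\Left|Q_\Right$ iff the two are obtained by the same rule from $R_\Left$- and $R_\Right$-related ingredients (the two-sided analogue of the relation used for commutativity of $|$). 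As in the choice case it then suffices to prove, by structural induction, that $\R$ is an ep-bisimulation.

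Conditions \ref{ep1a}--\ref{ep1c} of \df{ep-bisimilarity} would be routine. Given a transition $\chi$ of $P_\Left|P_\Right$ I match it componentwise using \ref{ep1a}/\ref{ep1b} for $R_\Left$ and $R_\Right$; the essential point is that \ref{ep1c} forces matched subtransitions to carry equal labels, so that handshake complementation and the broadcast $\circ$-combination are preserved, and hence the matching composite transition genuinely exists in $Q_\Left|Q_\Right$. Equality of the labels of the composites, as \ref{ep1c} demands, is then immediate.

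The real work is condition \ref{ep2}. I would case on the stepping transition $\Upsilon$. If $\ell(\Upsilon)\in\B{:}\djcup\bar\Sig$ I take $R'=R$ and invoke clause \ref{signals} of \df{leadsto}, exactly as in the first case of the choice proof. Otherwise $\Upsilon$ activates the left component only (\myref{Par-l}), the right only (\myref{Par-r}), or both (\myref{Comm}/\myref{Bro}); applying the induction hypotheses (condition \ref{ep2}) to the active component(s) yields $R_\Left'$ and/or $R_\Right'$, and I set $R'$ to $R_\Left'|R_\Right$, $R_\Left|R_\Right'$, or $R_\Left'|R_\Right'$ accordingly, so that $(\target(\Upsilon),\target(\Upsilon'),R')\in\R$ by construction. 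For each of the four shapes of an affected transition $\chi$ I would then read off which parallel-composition clause of \df{leadsto} decomposes $\chi\leadsto_\Upsilon\zeta$ into component successors, feed those into conditions \ref{ep2a}/\ref{ep2b} of the component induction hypotheses to obtain matching component successors, and recompose $\zeta'$ by the same clause; the non-moving slot of $\chi$ remains enabled because its underlying state is unchanged, so $\zeta'$ exists.

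The main obstacle is the combinatorial case analysis -- four shapes of $\Upsilon$ against four shapes of $\chi$, coupled through the parallel clauses of \df{leadsto}. The genuinely delicate entries are the mixed ones, where exactly one of $\chi,\Upsilon$ is a synchronisation and one must track the ``passenger'' component correctly, and the broadcast entries, where the successor of a receive may switch its label between $b?$ and $b{:}$: there I rely on \ref{ep1c} for the component relation $R_\Left'$ (resp.\ $R_\Right'$) to guarantee that the matched component successor carries the same, possibly switched, label, so that the recomposed $\zeta'$ is a legal \myref{Bro} transition and $\zeta\mathrel{R'}\zeta'$ holds. \obs{conc} offers a useful sanity check that concurrency is matched across the two sides, but the component induction hypotheses in fact supply the required successors directly.
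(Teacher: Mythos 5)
Your proposal is correct and follows essentially the same route as the paper's proof: the same smallest relation $\R$ closed under composing triples $(P_\Left|P_\Right,Q_\Left|Q_\Right,R_\Left|R_\Right)$ with the componentwise definition of $R_\Left|R_\Right$, the same structural induction, and the same case split on the shape of the stepping transition $\Upsilon$ with $R'$ chosen as $R_\Left'|R_\Right$, $R_\Left|R_\Right'$ or $R_\Left'|R_\Right'$, followed by decomposition and recomposition of $\chi\leadsto_\Upsilon\zeta$ via the parallel clauses of \df{leadsto}. The only (harmless) deviations are that you front-load a separate case $\ell(\Upsilon)\in\B{:}\djcup\bar{\Sig}$ with $R'=R$ -- which the paper does in its choice-congruence proof but absorbs into the uniform shape analysis for parallel composition -- and that you make explicit the label bookkeeping via Condition~\ref{ep1c} needed for the recomposed broadcast transition $\zeta'$ to exist, which the paper leaves implicit.
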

\begin{proof}
  A {\transition} enabled in $P|Q$ is either
  \begin{itemize}
    \item $t|Q$ for some $t\in\nablasub{P}$ with $\ell(t)\notin\B!\djcup\B?\djcup\B{:}$,
    \item $P|u$ for some $u\in\nablasub{Q}$ with $\ell(u)\notin\B!\djcup\B?\djcup\B{:}$,
    \item $t|u$ for some $t\in\nablasub{P}$ and $u\in\nablasub{Q}$ with $\ell(t)=\overline{\ell(u)}\in\Ch\djcup\bar{\Ch}\djcup\Sig\djcup\bar{\Sig}$, or
    \item $t|u$ for some $t\mathbin\in\nablasub{P}$ and $u\mathbin\in\nablasub{Q}$ with $\{\ell(t),\ell(u)\} \mathbin\in \{\{b!,b?\},\{b!,b{:}\},\{b?\},\{b?,b{:}\},\{b{:}\}\}$ for some $b\in\B$.
  \end{itemize}
  \vspace{1ex}
  \noindent
  Let $\R \subseteq \cT\times\cT\times\Pow(\nabla\times\nabla)$ be the smallest relation satisfying
  \begin{enumerate}
    \item if $(P,Q,R)\in\R'$ for some ep-bisimulation $\R'$ then $(P,Q,R)\in\R$,
    \item if $(P_\Left,Q_\Left,R_\Left),(P_\Right,Q_\Right,R_\Right)\in\R$ then $(P_\Left|P_\Right,Q_\Left|Q_\Right,R_\Left|R_\Right)\in\R$, where
      \[
        \begin{array}{@{}l@{}l@{}}
          R_\Left|R_\Right \coloneqq & \{(t|P_\Right,v|Q_\Right) \mid t \mathrel{R_\Left} v \land \ell(t) \notin \B! \djcup \B? \djcup \B{:}\} \djcup {} \\
                            & \{(P_\Left|u,Q_\Left|w) \mid u \mathrel{R_\Right} w \land \ell(u) \notin \B! \djcup \B? \djcup \B{:}\} \djcup {} \\
                            & \{(t|u,v|w) \mid t \mathrel{R_\Left} v \land u \mathrel{R_\Right} w \land \ell(t)=\overline{\ell(u)} \in \Ch \djcup \bar{\Ch} \djcup \Sig \djcup \bar{\Sig}\} \djcup {} \\
                            & \begin{array}{@{}l@{}l@{}}
                                \{(t|u,v|w) \mid {} & t \mathrel{R_\Left} v \land u \mathrel{R_\Right} w \land {} \\
                                                    & \exists\, b\in\B.~ \{\ell(t),\ell(u)\}\in\{\{b!,b?\},\{b!,b{:}\},\{b?\},\{b?,b{:}\},\{b{:}\}\}\}\,.
                              \end{array}
        \end{array}
      \]
  \end{enumerate}
  It suffices to show that $\R$ is an ep-bisimulation.
  I.e., all entries in $\R$ satisfy the requirements of \df{ep-bisimilarity}.
  We proceed by structural induction.

  \vspace{1ex}
  \noindent
  \emph{Induction base}:
  Suppose $(P,Q,R)\in\R'$ for some ep-bisimulation $\R'$.
  Since $\R' \subseteq \R$, all requirements of \df{ep-bisimilarity} are satisfied.

  \vspace{1ex}
  \noindent
  \emph{Induction step}:
  Suppose $(P_\Left,Q_\Left,R_\Left),(P_\Right,Q_\Right,R_\Right)\in\R$ satisfy all requirements of \df{ep-bisimilarity},
  we prove that $(P,Q,R)$, where $P=P_\Left|P_\Right$, $Q=Q_\Left|Q_\Right$ and $R=R_\Left|R_\Right$, also satisfies those requirements.

  \vspace{1ex}
  \noindent
  $R\subseteq\nablasub{P}\times\nablasub{Q}$ follows from $R_\Left\subseteq\nablasub{P_\Left}\times\nablasub{Q_\Left}$ and $R_\Right\subseteq\nablasub{P_\Right}\times\nablasub{Q_\Right}$.

  \vspace{1ex}
  \noindent
  \emph{Requirement 1.a}:
  It suffices to find, for each $\chi\in\nablasub{P}$, a $\zeta\in\nablasub{Q}$ with $\chi \mathrel R \zeta$.
  \begin{enumerate}
    \item Suppose $\chi=t|P_\Right$ for some $t\in\nablasub{P_\Left}$ with $\ell(t)\notin\B!\djcup\B?\djcup\B{:}$\,. \\
      We obtain $v\in\nablasub{Q_\Left}$ with $t \mathrel{R_\Left} v$ and pick $\zeta=v|Q_\Right$.
    \item Suppose $\chi=P_\Left|u$ for some $u\in\nablasub{P_\Right}$ with $\ell(u)\notin\B!\djcup\B?\djcup\B{:}$\,. \\
      We obtain $w\in\nablasub{Q_\Right}$ with $u \mathrel{R_\Right} w$ and pick $\zeta=Q_\Left|w$.
    \item Suppose $\chi=t|u$ for some $t\in\nablasub{P_\Left}$, $u\in\nablasub{P_\Right}$ with $\ell(t)=\overline{\ell(u)} \in \Ch \djcup \bar{\Ch} \djcup \Sig \djcup \bar{\Sig}$. \\
      We obtain $v\in\nablasub{Q_\Left}$ and $w\in\nablasub{Q_\Right}$ with $t \mathrel{R_\Left} v$ and $u \mathrel{R_\Right} w$, and pick $\zeta=v|w$.
    \item Suppose $\chi=t|u$ for some $t\in\nablasub{P_\Left}$ and $u\in\nablasub{P_\Right}$ with \\
      $\{\ell(t),\ell(u)\}\in\{\{b!,b?\},\{b!,b{:}\},\{b?\},\{b?,b{:}\},\{b{:}\}\}$ for some $b\in\B$. \\
      We obtain $v\in\nablasub{Q_\Left}$ and $w\in\nablasub{Q_\Right}$ with $t \mathrel{R_\Left} v$ and $u \mathrel{R_\Right} w$, and pick $\zeta=v|w$.
  \end{enumerate}
  In all cases, $\zeta\in\nablasub{Q}$ and $\chi \mathrel R \zeta$ hold trivially.

  \vspace{1ex}
  \noindent
  \emph{Requirement 1.b}:
  The proof is similar to that of Requirement 1.(a) and is omitted.

  \vspace{1ex}
  \noindent
  \emph{Requirement 1.c}:
  This follows directly from the observation that
  \begin{itemize}
    \item $\ell(t)=\ell(v) \implies \ell(t|P_\Right)=\ell(v|Q_\Right)$,
    \item $\ell(u)=\ell(w) \implies \ell(P_\Left|u)=\ell(Q_\Left|w)$, and
    \item $\ell(t)=\ell(v) \land \ell(u)=\ell(w) \implies \ell(t|u)=\ell(v|w)$;
  \end{itemize}
  provided that the composed {\transition}s exist.

  \vspace{1ex}
  \noindent
  \emph{Requirement 2}:
  It suffices to find, for arbitrary $\Upsilon,\Upsilon'$ with $\Upsilon \mathrel R \Upsilon'$, an $R'$ with \\
  $(\target(\Upsilon),\target(\Upsilon'),R')\in\R$, such that
  \begin{enumerate}[(a)]
    \item for arbitrary $\chi,\chi'$ with $\chi \mathrel R \chi'$ and $\chi \leadsto_\Upsilon \zeta$, we can find a $\zeta'$ with
      $\chi' \leadsto_{\Upsilon'} \zeta'$ and $\zeta \mathrel{R'} \zeta'$,
    \item for arbitrary $\chi,\chi'$ with $\chi \mathrel R \chi'$ and $\chi' \leadsto_{\Upsilon'} \zeta'$, we can find a $\zeta$ with
      $\chi \leadsto_\Upsilon \zeta$ and $\zeta \mathrel{R'} \zeta'$.
  \end{enumerate}
  Below we focus merely on (a), as (b) will follow by symmetry.
  \begin{enumerate}
    \item Suppose $\Upsilon=v|P_\Right$ and $\Upsilon'=v'|Q_\Right$ with $v \mathrel{R_\Left} v'$.
      We obtain $R_\Left'$ that satisfies Requirement~2 with respect to $v$ and $v'$.
      Pick $R'=R_\Left'|R_\Right$.
      Then $(\target(\Upsilon),\target(\Upsilon'),R')=(\target(v)|P_\Right,\target(v')|Q_\Right,R_\Left'|R_\Right)\in\R$.
      \begin{enumerate}
        \item Suppose $\chi=t|P_\Right$ and $\chi'=t'|Q_\Right$ with $t \mathrel{R_\Left} t'$.
          From $\chi \leadsto_\Upsilon \zeta$ we have $\zeta=x|P_\Right$ for some $x$ with $t \leadsto_v x$.
          Then we obtain $x'$ with $t' \leadsto_{v'} x'$ and $x \mathrel{R_\Left'} x'$.
          Pick $\zeta'=x'|Q_\Right$.
          Then $\chi' \leadsto_{\Upsilon'} \zeta'$ follows from $t' \leadsto_{v'} x'$;
          $\zeta \mathrel{R'} \zeta'$ is given by $x \mathrel{R_\Left'} x'$.
        \item Suppose $\chi=P_\Left|u$ and $\chi'=Q_\Left|u'$ with $u \mathrel{R_\Right} u'$.
          From $\chi \leadsto_\Upsilon \zeta$ we have $\zeta=\target(v)|u$.
          Pick $\zeta'=\target(v')|u'$.
          Then $\chi' \leadsto_{\Upsilon'} \zeta'$ follows directly;
          $\zeta \mathrel{R'} \zeta'$ is given by $u \mathrel{R_\Right} u'$.
        \item Suppose $\chi=t|u$ and $\chi'=t'|u'$ with $t \mathrel{R_\Left} t'$ and $u \mathrel{R_\Right} u'$.
          From $\chi \leadsto_\Upsilon \zeta$ we have $\zeta=x|u$ for some $x$ with $t \leadsto_v x$.
          Then we obtain $x'$ with $t' \leadsto_{v'} x'$ and $x \mathrel{R_\Left'} x'$.
          Pick $\zeta'=x'|u'$.
          Then $\chi' \leadsto_{\Upsilon'} \zeta'$ follows from $t' \leadsto_{v'} x'$;
          $\zeta \mathrel{R'} \zeta'$ is given by $x \mathrel{R_\Left'} x'$ and $u \mathrel{R_\Right} u'$.
      \end{enumerate}
    \item Suppose $\Upsilon=P_\Left|w$ and $\Upsilon'=Q_\Left|w'$ with $w \mathrel{R_\Right} w'$.
      The proof is similar to that of the previous case.
    \item Suppose $\Upsilon=v|w$ and $\Upsilon'=v'|w'$ with $v \mathrel{R_\Left} v'$ and $w \mathrel{R_\Right} w'$.
      We obtain $R_\Left'$ that satisfies Requirement~2 with respect to $v$ and $v'$, and $R_\Right'$ that satisfies Requirement~2 with respect to $w$ and $w'$.
      Pick $R'=R_\Left'|R_\Right'$.
      Then $(\target(\Upsilon),\target(\Upsilon'),R')=(\target(v)|\target(w),\target(v')|\target(w'),R_\Left'|R_\Right')\in\R$.
      \begin{enumerate}
        \item Suppose $\chi=t|P_\Right$ and $\chi'=t'|Q_\Right$ with $t \mathrel{R_\Left} t'$.
          From $\chi \leadsto_\Upsilon \zeta$ we have $\zeta=x|\target(w)$ for some $x$ with $t \leadsto_v x$.
          Then we obtain $x'$ with $t' \leadsto_{v'} x'$ and $x \mathrel{R_\Left'} x'$.
          Pick $\zeta'=x'|\target(w')$.
          Then $\chi' \leadsto_{\Upsilon'} \zeta'$ follows from $t' \leadsto_{v'} x'$;
          $\zeta \mathrel{R'} \zeta'$ is given by $x \mathrel{R_\Left'} x'$.
        \item Suppose $\chi=P_\Left|u$ and $\chi'=Q_\Left|u'$ with $u \mathrel{R_\Right} u'$.
          From $\chi \leadsto_\Upsilon \zeta$ we have $\zeta=\target(v)|y$ for some $y$ with $u \leadsto_w y$.
          Then we obtain $y'$ with $u' \leadsto_{w'} y'$ and $y \mathrel{R_\Right'} y'$.
          Pick $\zeta'=\target(v')|y'$.
          Then $\chi' \leadsto_{\Upsilon'} \zeta'$ follows from $u' \leadsto_{w'} y'$;
          $\zeta \mathrel{R'} \zeta'$ is given by $y \mathrel{R_\Right'} y'$.
        \item Suppose $\chi=t|u$ and $\chi'=t'|u'$ with $t \mathrel{R_\Left} t'$ and $u \mathrel{R_\Right} u'$.
          From $\chi \leadsto_\Upsilon \zeta$ we have $\zeta\mathbin=x|y$ for some $x,y$ with $t \leadsto_v x$ and $u \leadsto_w y$.
          Then we obtain $x',y'$ with $t' \leadsto_{v'} x'$, $u' \leadsto_{w'} y'$, $x \mathrel{R_\Left'} x'$, and $y \mathrel{R_\Right'} y'$.
          Pick $\zeta'=x'|y'$.
          Then $\chi' \leadsto_{\Upsilon'} \zeta'$ follows from $t' \leadsto_{v'} x'$ and $u' \leadsto_{w'} y'$;
          $\zeta \mathrel{R'} \zeta'$ is given by $x \mathrel{R_\Left'} x'$ and $y \mathrel{R_\Right'} y'$.
        \qedhere
      \end{enumerate}
  \end{enumerate}
\end{proof}

\begin{proposition}\label{pr:congruence for restriction}\rm
  If $P \bisep Q$ and $L\subseteq\Ch\djcup\Sig$ then $P\backslash L \bisep Q\backslash L$.
\end{proposition}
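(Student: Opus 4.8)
The plan is to follow the same template as the preceding congruence proofs. The first step is to record the shape of the transitions enabled in $P\backslash L$: by Rule~\myref{Res} these are exactly the derivations $t\backslash L$ with $t\in\nablasub{P}$ and $\ell(t)\notin L\djcup\overline{L}$ (note that, since $L\subseteq\Ch\djcup\Sig$, broadcast actions are never restricted). Guided by this, I would define $\R\subseteq\cT\times\cT\times\Pow(\nabla\times\nabla)$ as the smallest relation such that (i) every triple of any ep-bisimulation belongs to $\R$, and (ii) whenever $(P,Q,R)\in\R$ and $L\subseteq\Ch\djcup\Sig$ then $(P\backslash L,Q\backslash L,R\backslash L)\in\R$, where
\[
  R\backslash L \coloneqq \{(t\backslash L,v\backslash L)\mid t\mathrel R v \land \ell(t)\notin L\djcup\overline{L}\}\,.
\]
The goal is then to show that $\R$ is an ep-bisimulation; the induction base is immediate because every witnessing ep-bisimulation $\R'$ satisfies $\R'\subseteq\R$, so the work lies in the induction step for a triple $(P\backslash L,Q\backslash L,R\backslash L)$ built from some $(P,Q,R)\in\R$ already known to satisfy the requirements of \df{ep-bisimilarity}.

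Requirements 1.a and 1.b are straightforward: a transition $t\backslash L\in\nablasub{P\backslash L}$ forces $\ell(t)\notin L\djcup\overline{L}$, and applying Requirement 1.a of $R$ yields a $v$ with $t\mathrel R v$; since $\ell(v)=\ell(t)$ by Requirement 1.c, the transition $v\backslash L$ exists and matches $t\backslash L$. Requirement 1.c for $R\backslash L$ is then immediate, as $\ell(t\backslash L)=\ell(t)=\ell(v)=\ell(v\backslash L)$.

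The real content is Requirement 2, and here I would split on the label of the executed transition $\Upsilon=v\backslash L$ (with $\Upsilon'=v'\backslash L$ and $v\mathrel R v'$), exactly as in the choice and parallel-composition cases. If $\ell(\Upsilon)\in\B{:}\djcup\bar{\Sig}$, then $\Upsilon$ does not change state, so $\target(\Upsilon)=P\backslash L$; I pick $R\backslash L$ itself as the witness, and since Item~\ref{signals} from \df{leadsto} forces $\zeta=\chi$ from $\chi\leadsto_\Upsilon\zeta$, I take $\zeta'=\chi'$, again justified by Item~\ref{signals}. If instead $\ell(\Upsilon)\in\textit{Act}$, I apply Requirement 2 of $R$ to $v\mathrel R v'$ to obtain a relation $S$ with $(\target(v),\target(v'),S)\in\R$, and offer $S\backslash L$ as the witnessing relation; then $(\target(\Upsilon),\target(\Upsilon'),S\backslash L)=(\target(v)\backslash L,\target(v')\backslash L,S\backslash L)\in\R$ by clause (ii). For $\chi=t\backslash L\mathrel{R\backslash L}t'\backslash L=\chi'$ with $\chi\leadsto_\Upsilon\zeta$, the only applicable clause is Item~\ref{others} from \df{leadsto}, which gives $\zeta=x\backslash L$ with $t\leadsto_v x$; Requirement 2.a of $R$ then yields $x'$ with $t'\leadsto_{v'}x'$ and $x\mathrel S x'$, and I take $\zeta'=x'\backslash L$.

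I expect the only obstacles to be bookkeeping ones. First, one must check that all freshly formed derivations are well defined, \ie that $v\backslash L$, $x\backslash L$ and $x'\backslash L$ all satisfy the side condition $\ell(\cdot)\notin L\djcup\overline{L}$; this follows because the relevant labels are preserved by $\mathrel R$ and $\mathrel S$ via Requirement 1.c, and because Item~\ref{others} of \df{leadsto} only produces composed transitions that already exist. Second, one must remember that Item~\ref{others} is restricted to $\ell(v)\in\textit{Act}$, which is precisely why the case split above is exhaustive (as $\Lab=\textit{Act}\djcup\B{:}\djcup\bar{\Sig}$) and why discard and signal-emission transitions through a restriction are handled by Item~\ref{signals} rather than by the inheritance clause. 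With these checks in place, the backward direction 2.(b) follows by symmetry, so $\R$ is an ep-bisimulation and hence $P\backslash L\bisep Q\backslash L$.
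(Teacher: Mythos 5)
Your proposal is correct and takes essentially the same approach as the paper: the paper's proof uses exactly the same relation $\R$ with the same definition of $R\backslash L$, established as an ep-bisimulation by the same structural induction. The only difference is that the paper compresses the induction step into the remark that it ``follows directly with Definitions~\ref{df:ep-bisimilarity} and~\ref{df:leadsto}'', whereas you spell out the case analysis (the $\B{:}\djcup\bar{\Sig}$ case via Item~\ref{signals} and the $\textit{Act}$ case via Item~\ref{others}, including the label-preservation bookkeeping ensuring the restricted derivations exist), which is precisely the content the paper leaves implicit.
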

\begin{proof}
  A {\transition} enabled in $P\backslash L$ is $t\backslash L$ for some $t\in\nablasub{P}$ with $\ell(t)\notin L\djcup\overline{L}$.
  \\[1ex]
  Let $\R \subseteq \cT\times\cT\times\Pow(\nabla\times\nabla)$ be the smallest relation satisfying
  \begin{enumerate}
    \item if $(P,Q,R)\in\R'$ for some ep-bisimulation $\R'$ then $(P,Q,R)\in\R$,
    \item if $(P,Q,R)\in\R$ and $L\subseteq\Ch\djcup\Sig$ then $(P\backslash L,Q\backslash L,R\backslash L)\in\R$, where
      \[
        R\backslash L \coloneqq \{(t\backslash L,v\backslash L) \mid t \mathrel R v \land
        \ell(t)\notin L \djcup\overline{L}\}\,.
      \]
  \end{enumerate}
  It suffices to show that $\R$ is an ep-bisimulation.
  I.e., all entries in $\R$ satisfy the requirements of \df{ep-bisimilarity}.
  We proceed by structural induction.

  \vspace{1ex}
  \noindent
  \emph{Induction base}:
  Suppose $(P,Q,R)\in\R'$ for some ep-bisimulation $\R'$.
  Since $\R' \subseteq \R$, all requirements of \df{ep-bisimilarity} are satisfied.

  \vspace{1ex}
  \noindent
  \emph{Induction step}:
  Suppose $(P,Q,R)\in\R$ satisfies all requirements of \df{ep-bisimilarity},
  we prove that $(P',Q',R')$, where $P'=P\backslash L$, $Q'=Q\backslash L$, and $R'=R\backslash L$, also satisfies those requirements.
  This follows directly with Definitions~\ref{df:ep-bisimilarity} and~\ref{df:leadsto}.
\end{proof}

\begin{proposition}\label{pr:congruence for relabelling}\rm
  If $P \bisep Q$ and $f$ is a relabelling then $P[f] \bisep Q[f]$.
\end{proposition}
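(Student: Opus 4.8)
The plan is to imitate the proof for restriction (Proposition~\ref{pr:congruence for restriction}) almost verbatim, since relabelling is a unary context that neither splits a transition into several components nor filters any of them out. First I would record that, by rule \myref{Rel}, every transition enabled in $P[f]$ has the form $t[f]$ for a unique $t\in\nablasub{P}$, and that conversely every $t\in\nablasub{P}$ gives rise to such a transition regardless of the type of $\ell(t)$ (no side condition, unlike restriction, where $\ell(t)\notin L\djcup\overline L$ is required). I would then define $\R \subseteq \cT\times\cT\times\Pow(\nabla\times\nabla)$ as the smallest relation such that (i) every triple of an existing ep-bisimulation $\R'$ lies in $\R$, and (ii) if $(P,Q,R)\in\R$ and $f$ is a relabelling then $(P[f],Q[f],R[f])\in\R$, where $R[f] \coloneqq \{(t[f],v[f]) \mid t\mathrel R v\}$. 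It then suffices to show that $\R$ is an ep-bisimulation, which I would do by structural induction, the base case being immediate from $\R'\subseteq\R$.

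For the induction step I assume $(P,Q,R)\in\R$ already satisfies \df{ep-bisimilarity} and verify the requirements for $(P[f],Q[f],R[f])$. First, $R[f]\subseteq\nablasub{P[f]}\times\nablasub{Q[f]}$ is inherited from $R\subseteq\nablasub{P}\times\nablasub{Q}$. Requirement~1 is then routine: for $\chi=t[f]\in\nablasub{P[f]}$ I take the $v\in\nablasub{Q}$ with $t\mathrel R v$ supplied by the inner Requirement~1.a and set $\zeta=v[f]$ (symmetrically for 1.b); label matching (1.c) follows from $\ell(t[f])=f(\ell(t))=f(\ell(v))=\ell(v[f])$, using $\ell(t)=\ell(v)$. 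Because relabelling performs no filtering, there is nothing analogous to the restriction side condition, so this part is in fact slightly simpler than its counterpart.

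Requirement~2 is where the successor relation enters, and here I would split on the label of the matched transition $\Upsilon=v[f]\mathrel{R[f]}\Upsilon'=w[f]$ (with $v\mathrel R w$), along the two clauses of \df{leadsto} that can produce a step out of a relabelled transition. If $\ell(\Upsilon)\in\B{:}\djcup\bar\Sig$, then $\ell(v),\ell(w)\in\B{:}\djcup\bar\Sig$ as well, the transition is stationary by Item~\ref{signals}, and I keep $R'\coloneqq R[f]$, so that $(\target(\Upsilon),\target(\Upsilon'),R')=(P[f],Q[f],R[f])\in\R$ and every $\chi\leadsto_\Upsilon\chi$ is matched by $\chi'\leadsto_{\Upsilon'}\chi'$. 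If instead $\ell(\Upsilon)\in\textit{Act}$, then $\ell(v)\in\textit{Act}$; I invoke the inner Requirement~2 for $v\mathrel R w$ to obtain $R'$ with $(\target(v),\target(w),R')\in\R$, and take the outer relation to be $R'[f]$, the closure rule giving $(\target(v)[f],\target(w)[f],R'[f])\in\R$. Propagation is then immediate: by Item~\ref{others}, $\chi=t[f]\leadsto_{v[f]}\zeta$ forces $\zeta=x[f]$ with $t\leadsto_v x$, the inner Requirement~2.a yields $x'$ with $t'\leadsto_w x'$ and $x\mathrel{R'}x'$, and setting $\zeta'\coloneqq x'[f]$ gives $\chi'=t'[f]\leadsto_{w[f]}\zeta'$ with $\zeta\mathrel{R'[f]}\zeta'$; part (b) is symmetric.

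The only point that genuinely needs to be checked—rather than merely copied from the restriction proof—is that the relabelling $f$ preserves the partition of $\Lab$ into $\textit{Act}$, $\B{:}$ and $\bar\Sig$, i.e.\ that $f(\ell)\in\textit{Act}\iff\ell\in\textit{Act}$ and that $f$ maps $\B{:}$ to $\B{:}$ and $\bar\Sig$ to $\bar\Sig$; this is what makes the case split above legitimate, and it transfers cleanly from the definition of a relabelling ($f(\bar c)=\overline{f(c)}$, $f(\tau)\coloneqq\tau$, $f(b\sharp)=f(b)\sharp$). I would also note that $f$ need not be injective, but this is harmless, since the same $f$ is applied on both sides and the argument never needs to recover $t$ from $t[f]$. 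Consequently I expect no real obstacle: as with restriction, the verification should ``follow directly'' from Definitions~\ref{df:ep-bisimilarity} and~\ref{df:leadsto}.
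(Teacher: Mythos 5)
Your proposal is correct and is exactly the argument the paper intends: its own proof of \pr{congruence for relabelling} is just ``an easy structural induction, similar to the proof for restriction (\pr{congruence for restriction})'', and you instantiate that same template, defining the smallest $\R$ closed under $(P[f],Q[f],R[f])$ and verifying \df{ep-bisimilarity} by induction on the generation of $\R$. Your added observations -- that rule \myref{Rel} carries no side condition (so $R[f]$ needs no filtering clause, unlike $R\backslash L$) and that a relabelling preserves the partition of $\Lab$ into $\textit{Act}$, $\B{:}$ and $\bar\Sig$ (which legitimises the case split on $\ell(\Upsilon)$ in Requirement~2 via Items~\ref{signals} and~\ref{others} of \df{leadsto}) -- are precisely the points the paper leaves implicit, and they check out.
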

\begin{proof}
  An easy structural induction on the structure of $P$;
  similar to the proof for restriction (\pr{congruence for restriction}).
\end{proof}

\begin{proposition}\label{pr:congruence for signalling}\rm
  If $P \bisep Q$ and $s\in\Sig$ $P\signals s \bisep Q\signals s$.
\end{proposition}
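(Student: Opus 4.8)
The plan is to follow the same template as the preceding congruence proofs, in particular \pr{congruence for restriction}, building a relation $\R$ and verifying that it is an ep-bisimulation by structural induction. First I would record which transitions are enabled in $P\signals s$: these are exactly the transitions $t\signals s$ obtained from $t\in\nablasub{P}$ by \myref{Act-sig}, \myref{Sig-sig} or \myref{Dis-sig} according to whether $\ell(t)\in\textit{Act}$, $\bar\Sig$ or $\B{:}$, together with the single emission transition $P\sigsyn{s}$ from \myref{Sig}. In all cases $\ell(t\signals s)=\ell(t)$ and $\ell(P\sigsyn{s})=\bar s$, while $\target(t\signals s)=\target(t)$ when $\ell(t)\in\textit{Act}$ (the signalling scope is escaped) and $\target(t\signals s)=\target(t)\signals s$ otherwise, and $\target(P\sigsyn{s})=P\signals s$.

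I would then define $\R$ as the smallest relation containing every ep-bisimulation and closed under the rule that $(P,Q,R)\in\R$ implies $(P\signals s, Q\signals s, R\signals s)\in\R$, where
\[
  R\signals s \coloneqq \{(t\signals s, u\signals s) \mid t\mathrel R u\} \djcup \{(P\sigsyn{s},Q\sigsyn{s})\}.
\]
The induction base is immediate, so the work is the induction step: assuming $(P,Q,R)$ satisfies the requirements of \df{ep-bisimilarity}, show that $(P\signals s,Q\signals s,R\signals s)$ does. Requirement~1 is routine, since $\ell(t\signals s)=\ell(t)$ and $\ell(P\sigsyn{s})=\bar s=\ell(Q\sigsyn{s})$ give~1.c, and 1.a/1.b follow by matching $t\signals s$ with the $u\signals s$ supplied by the induction hypothesis and matching $P\sigsyn{s}$ with $Q\sigsyn{s}$.

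The interesting part is Requirement~2, where I would split on $\ell(\Upsilon)$. If $\ell(\Upsilon)\in\B{:}\djcup\bar\Sig$ -- which covers both $\Upsilon=P\sigsyn{s}$ and $\Upsilon=v\signals s$ with $\ell(v)\in\B{:}\djcup\bar\Sig$ -- the source is unchanged, so $\target(\Upsilon)=P\signals s$, I take $R'=R\signals s$, and by Item~\ref{signals} of \df{leadsto} every successor is trivial ($\chi\leadsto_\Upsilon\chi$), exactly as in the restriction case. If instead $\ell(\Upsilon)\in\textit{Act}$, then $\Upsilon=v\signals s$ and $\Upsilon'=w\signals s$ with $v\mathrel R w$ and $\ell(v)=\ell(w)\in\textit{Act}$, the signalling scope is escaped so $\target(\Upsilon)=\target(v)$, and I take the $R'$ furnished by Requirement~2 of the induction hypothesis for $v,w$, noting $(\target(v),\target(w),R')\in\R$. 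For a pair $\chi=t\signals s\mathrel{R\signals s}u\signals s=\chi'$, the last clause of Item~\ref{others} forces any $\zeta$ with $\chi\leadsto_\Upsilon\zeta$ to be a bare $t'$ with $t\leadsto_v t'$, whence the induction hypothesis yields $u'$ with $u\leadsto_w u'$ and $t'\mathrel{R'}u'$, and $\zeta'\coloneqq u'$ works via the same clause; the symmetric 2.b is analogous.

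The main obstacle, and the place where signalling genuinely differs from restriction, is the emission transition $P\sigsyn{s}$ under an action step. Here I would argue that the pair $(P\sigsyn{s},Q\sigsyn{s})$ imposes no obligation: when $\ell(\Upsilon)\in\textit{Act}$ no clause of \df{leadsto} produces a triple $P\sigsyn{s}\leadsto_{v\signals s}\zeta$, since Item~\ref{signals} requires the middle label in $\B{:}\djcup\bar\Sig$, while the signalling clause of Item~\ref{others} matches only sources of the form $t\signals r$ and the remaining clauses match sums, parallel compositions and base cases. Hence $P\sigsyn{s}$ has no successor along $v\signals s$ and Requirement~2 is vacuous for this pair; dually, when $\ell(\Upsilon)\in\B{:}\djcup\bar\Sig$ its only successor is $P\sigsyn{s}$ itself. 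Reading off from \df{leadsto} precisely which derivations can stand in the successor relation with an emission or $\signals$-structured transition is the one step needing care; everything else mirrors the restriction and choice proofs.
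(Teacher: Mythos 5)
Your proposal is correct and follows exactly the route the paper intends: its own proof of \pr{congruence for signalling} is just the remark that an easy structural induction in the style of \pr{congruence for restriction} works, and your construction of $\R$ closed under $(P,Q,R)\mapsto(P\signals s,Q\signals s,R\signals s)$ is the standard template used in Propositions~\ref{pr:congruence for action prefixing}--\ref{pr:congruence for relabelling}. You moreover correctly identify and discharge the one point where signalling differs from restriction -- that the emission pair $(P\sigsyn{s},Q\sigsyn{s})$ incurs no successor obligations under an $\textit{Act}$-labelled step, since only Item~\ref{signals} and the $t\signals r$ clause of Item~\ref{others} of \df{leadsto} can apply to a source of the form $P\signals s$ -- a detail the paper leaves implicit.
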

\begin{proof}
  An easy structural induction on the structure of $P$;
  similar to the proof for restriction (\pr{congruence for restriction}).
\end{proof}

\arxiv{
\newpage
\section{Synchrons -- an Alternative Interpretation of \ABCdE as an LTSS}\label{Synchrons}

In \cite{synchrons}, each transition is seen as the synchronisation of a number of so-called synchrons.
Each synchron of a transition $t\in\Tr$ represents a path in the proof tree $t$ from its root to a leaf.
If $\source(t)=P \in \cT$ a synchron of $t$ is also a synchron of $P$,
  and can be seen as a path in the parse tree of $P$ to an unguarded subexpression
  ${\bf 0}$, $\alpha.Q$ or $Q\signals s$ of $P$
  -- except that recursion \plat{$A \defis P$} gets unfolded in the construction of such a path.
Here a subexpression of $P$ occurs \emph{unguarded} if it does not lay within a subexpression $\beta.O$ of $P$.

\begin{definition}[Synchrons \cite{synchrons}]\label{df:synchron}\rm
  Let the set $\textit{Arg}$ of \emph{arguments} be
    $\{{+_\Left},{+_\Right},{|_\Left},{|_\Right},{\backslash L},[f],A{:},{{}\signals r}
    \mid L\subseteq\Ch\djcup\Sig \land f \text{ is a relabelling} \land A\in\A \land r\in\Sig\}$.
  A \emph{synchron} is an expression $\sigma(\actsyn{\alpha}P)$, $\sigma(b{:})$ or $\sigma(P\sigsyn{s})$
    with $\sigma\in\textit{Arg}^*$, $\alpha\in\textit{Act}$, $b\in\B$, $s\in\Sig$ and $P\in\cT$.
  An arguments $\iota\in\textit{Arg}$ is applied componentwise to a set $\Sigma$ of synchrons:
    $\iota(\Sigma) \coloneqq \{\iota\varsigma \mid \varsigma\in\Sigma\}$.
  The set $\varsigma(P)$ of synchrons of an \ABCdE process $P$ is inductively defined by \vspace{-1ex}
  \[\begin{array}{l@{~=~}l@{\hspace{1cm}}l@{~=~}l}
    \varsigma({\bf 0})       & \{(b{:}) \mid b\in\B\} &
    \varsigma(\alpha.P)      & \{(\actsyn{\alpha}P)\} \djcup \{(b{:}) \mid b\in\B \land b?\neq\alpha\} \\
    \varsigma(P{+}Q)         & {+_\Left}\varsigma(P) \djcup {+_\Right}\varsigma(Q) &
    \varsigma(P|Q)           & {|_\Left}\varsigma(P) \djcup {|_\Right}\varsigma(Q) \\
    \varsigma(P\backslash L) & {\backslash L}\,\varsigma(P) &
    \varsigma(P[f])          & [f]\varsigma(P) \\
    \varsigma(A)             & A{:}\varsigma(P) ~\text{when}~ \plat{$A \defis P$} &
    \varsigma(P\signals s)   & \{(P\sigsyn{s})\} \djcup {{}\signals s}\,\varsigma(P) \,.
  \end{array}\]
  The set $\varsigma(t)$ of synchrons of a {\transition} $t$ is inductively defined by \vspace{-1ex}
  \[\begin{array}{l@{~=~}l@{\hspace{1cm}}l@{~=~}l@{\hspace{1cm}}l@{~=~}l}
    \varsigma(b{:}{\bf 0})      & \{(b{:})\} &
    \varsigma(\actsyn{\alpha}P) & \{(\actsyn{\alpha}P)\} &
    \varsigma(b{:}\alpha.P)     & \{(b{:})\} \\
    \varsigma(t{+}Q)            & {+_\Left}\varsigma(t) &
    \varsigma(t{+}u)            & {+_\Left}\varsigma(t) \djcup {+_\Right}\varsigma(u) &
    \varsigma(P{+}u)            & {+_\Right}\varsigma(u) \\
    \varsigma(t|Q)              & {|_\Left}\varsigma(t) &
    \varsigma(t|u)              & {|_\Left}\varsigma(t) \djcup {|_\Right}\varsigma(u) &
    \varsigma(P|u)              & {|_\Right}\varsigma(u) \\
    \varsigma(t\backslash L)    & {\backslash L}\,\varsigma(t) &
    \varsigma(t[f])             & [f]\varsigma(t) &
    \varsigma(A{:}t)            & A{:}\varsigma(t) \\
    \varsigma(P\sigsyn{s})      & \{(P\sigsyn{s})\} &
    \varsigma(t\signals r)      & {{}\signals r}\,\varsigma(t) \,.
  \end{array}\]
\end{definition}
Note that we use the symbol $\varsigma$ as a variable ranging over synchrons, and as the name of two
  functions -- disambiguated by context.
Also $\upsilon$, $\nu$ and $\xi$ will range over synchrons.
Here $\upsilon$ -- the Greek letter \emph{upsilon} -- should not be confused with the Latin letter $v$,
  denoting~a~transition.

\begin{lemma}[\cite{synchrons}]\label{lem:synchrons of source}\rm
  If $t\in\nabla$ and $P=\source(t)$ then $\varsigma(t)\subseteq\varsigma(P)$.
\end{lemma}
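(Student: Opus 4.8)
The plan is to prove the inclusion by structural induction on the \emph{name} of the transition $t$ (\df{name of derivation}), since both $\varsigma(t)$ and $\source(t)$ are defined by case analysis on exactly this syntactic structure. For each possible shape of $t$ I would read off $\source(t)$ from the operational rule that introduces that name, unfold the matching clause of $\varsigma(t)$ and of $\varsigma(\source(t))$ in \df{synchron}, and verify $\varsigma(t)\subseteq\varsigma(\source(t))$, appealing to the induction hypothesis on each direct subderivation.

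The four base cases are the atomic names $b{:}\mathbf{0}$, $\actsyn{\alpha}P$, $b{:}\alpha.P$ and $P\sigsyn{s}$. For $b{:}\mathbf{0}$ we have $\source(t)=\mathbf{0}$ and $\varsigma(t)=\{(b{:})\}\subseteq\{(b{:})\mid b\in\B\}=\varsigma(\mathbf{0})$; for $\actsyn{\alpha}P$ and $P\sigsyn{s}$ the sole synchron of $t$ is literally the first set listed in the clause for $\varsigma(\alpha.P)$ respectively $\varsigma(P\signals s)$. The one base case that is not purely mechanical is $b{:}\alpha.P$: this transition is obtained by \myref{Dis-act}, whose side condition $\alpha\neq b?$ is exactly what guarantees $(b{:})\in\varsigma(\alpha.P)$, since $\varsigma(\alpha.P)$ contains $(b{:})$ precisely for those $b$ with $b?\neq\alpha$. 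This is the only place where a rule's side condition must be invoked.

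For the inductive cases I would treat all operators uniformly. Each compound name applies one argument $\iota\in\textit{Arg}$ -- or, in the binary cases $t{+}u$ and $t|u$, a pair of arguments -- componentwise to the synchrons of its direct subderivation(s), while the corresponding clause for $\varsigma(\source(t))$ applies the same argument(s) to the synchrons of the subprocess(es), possibly adjoining further synchrons. Since $\source(t{+}Q)=\source(t){+}Q$, $\source(t|u)=\source(t)|\source(u)$, $\source(A{:}t)=A$ with $A\defis\source(t)$, and analogously for the remaining operators, the induction hypothesis lifts through the monotone operation $\Sigma\mapsto\iota(\Sigma)$, and any extra synchrons on the right (for instance the emission synchron $(\source(t)\sigsyn{r})$ in the signalling case, or the synchrons of the discarded summand in a left-sided choice rule) only enlarge the target set. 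Each inductive step therefore collapses to monotonicity of $\iota(-)$ plus one or two uses of the induction hypothesis. There is no serious obstacle here: the whole lemma is a routine structural induction, and the only point genuinely requiring attention is the side condition in the $b{:}\alpha.P$ base case discussed above.
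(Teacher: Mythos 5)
Your proof is correct and is exactly the paper's proof — the paper disposes of this lemma with the single line ``a structural induction on $t$'', and your case analysis (including the observation that the side condition $\alpha\neq b?$ of \myref{Dis-act} is what makes the $b{:}\alpha.P$ case go through) is a faithful elaboration of that induction.
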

\begin{proof}
  A structural induction on $t$.
\end{proof}

\begin{lemma}\label{lem:transition uniquely identified by source and synchrons}\rm
  A {\transition} $t$ is uniquely identified by $source(t)$ and $\varsigma(t)$.
\end{lemma}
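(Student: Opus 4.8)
The plan is to prove the statement in the unfolded form: given two transitions $t,t'\in\nabla$ with $\source(t)=\source(t')$ and $\varsigma(t)=\varsigma(t')$, show that $t=t'$. Recall that the name of a derivation (\df{name of derivation}) records its full tree structure, so a transition is determined by its name; it therefore suffices to show that $\source(t)$ together with $\varsigma(t)$ determines both the outermost name constructor of $t$ and the synchron sets of its direct subderivations, after which the claim closes under induction. The right induction measure is the derivation $t$ itself (the size of its proof tree), \emph{not} the structure of $P:=\source(t)$: in the recursion case a transition $A{:}r$ has a direct subderivation $r$ whose source is the body of $A$'s defining equation, which is not a subterm of $A$, so structural induction on the source breaks down, whereas $r$ is always a strictly smaller derivation. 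I would thus adopt the hypothesis that for every proper subderivation $r$ of $t$ and every transition $r'$ with $\source(r)=\source(r')$ and $\varsigma(r)=\varsigma(r')$, one has $r=r'$.

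The core is a case distinction on the shape of $P=\source(t)=\source(t')$, which constrains the finite list of names a transition out of $P$ can carry (read off the rules of Tables~\ref{tab:ABCdE Basic}--\ref{tab:ABCdE Signal} through \df{name of derivation}). In the base cases $P=\mathbf{0}$ and $P=\alpha.Q$ the synchron set is a singleton $\{(b{:})\}$ or $\{(\actsyn{\alpha}Q)\}$ that pins the transition down outright; this also exhibits why the source is indispensable, since the very set $\{(b{:})\}$ names both a transition out of $\mathbf{0}$ and one out of $\alpha.Q$, and only together with the source does it become unambiguous.

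For the composite operators the key observation is that, by \df{synchron}, every synchron carries its construction history as a leading argument. For $P=P_1{+}P_2$ each element of $\varsigma(t)$ has leading argument $+_\Left$ or $+_\Right$, so I would split $\varsigma(t)$ by this prefix and, using that every transition has a \emph{nonempty} synchron set, read off from which of the two parts are inhabited whether $t$ is $r{+}P_2$, $P_1{+}r'$, or the discard $r{+}r'$; stripping the prefix yields the sub-synchron sets, and the sources of the subderivations are $P_1$ and $P_2$, so the induction hypothesis identifies them. The same prefix-splitting handles $P_1{|}P_2$ (prefixes $|_\Left,|_\Right$, with both parts inhabited exactly for the communication/broadcast name $r{|}r'$), while the unary operators $\backslash L$, $[f]$ and $A{:}$ amount to stripping a single common leading argument and recursing. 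The signalling case $P=R\signals s$ is the one needing slight care: the synchron $(R\sigsyn{s})$ has \emph{empty} prefix, so its presence in $\varsigma(t)$ separates the emission $R\sigsyn{s}$, for which $\varsigma(t)=\{(R\sigsyn{s})\}$, from the transitions $r\signals s$, whose synchrons all carry the leading argument ${\signals s}$; here \lem{synchrons of source} ($\varsigma(t)\subseteq\varsigma(R\signals s)$) is convenient to confirm that $(R\sigsyn{s})$ is the only candidate synchron with empty prefix.

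I expect the main obstacle to be bookkeeping rather than conceptual: verifying rule by rule that the case analysis on $P$ is exhaustive and that each admissible name is separated from the others by the synchron set. The delicate points are that the ``unary'' names such as $r{+}P_2$ and $t{|}Q$ are never confused with the ``binary'' discard/communication names $r{+}r'$ and $r{|}r'$ — which hinges precisely on nonemptiness of synchron sets so that an inhabited right part forces the binary case — and that the signal-emission synchron $(R\sigsyn{s})$ is never generated by a prefixed subderivation, so that the emission transition is cleanly distinguished. Once these separations are established, every composite case reduces to an application of the induction hypothesis to strictly smaller subderivations, yielding $t=t'$.
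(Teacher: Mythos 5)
Correct, and in essence the same proof as the paper's: the paper merely packages your case analysis as an explicit partial inverse, a function $\textit{retrieve}(P,\Sigma)$ whose defining clauses correspond one-to-one to your cases --- splitting $\Sigma$ on the leading arguments ${+_\Left}/{+_\Right}$ and ${|_\Left}/{|_\Right}$ with exactly the nonemptiness side conditions you identify to separate the unary from the binary names, the two base cases over $\alpha.P$ distinguished by the shape of the single synchron, and the unprefixed emission synchron $(P\sigsyn{s})$ separating the emission from the $t\signals r$ transitions. Your decision to induct on the derivation rather than on the source is likewise implicit in the paper's recursion, whose clause $\textit{retrieve}(A,A{:}\Sigma)=A{:}\textit{retrieve}(P,\Sigma)$ for \plat{$A \defis P$} also descends into the unfolded body rather than into a subterm of $A$.
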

\begin{proof}
  We support this lemma with a partial function $\textit{retrieve}$ to retrieve the {\transition} from its source and set of synchrons
  -- provided that the corresponding {\transition} exists.
  $\textit{retrieve}(P,\Sigma)$ where $P\in\cT$ and $\Sigma\subseteq\varsigma(P)$ is inductively defined by
  \vspace{-1ex}
  \[\begin{array}[b]{@{\textit{retrieve}}l@{,~}l@{~=~}ll}
    ({\bf 0}       & \{(b{:})\})                                           & b{:}{\bf 0}                                                          & \\
    (\alpha.P      & \{(\plat{$\actsyn{\alpha}P$})\})                      & \plat{$\actsyn{\alpha}P$}                                            & \\
    (\alpha.P      & \{(b{:})\})                                           & b{:}\alpha.P                                                         & \\
    (P+Q           & {+_\Left}\Sigma_\Left)                                & \textit{retrieve}(P,\Sigma_\Left)+Q                                  & \\
    (P+Q           & {+_\Right}\Sigma_\Right)                              & P+\textit{retrieve}(Q,\Sigma_\Right)                                 & \\
    (P+Q           & {+_\Left}\Sigma_\Left \djcup {+_\Right}\Sigma_\Right) & \textit{retrieve}(P,\Sigma_\Left)+\textit{retrieve}(Q,\Sigma_\Right) & \text{when}~ \Sigma_\Left,\Sigma_\Right\neq\emptyset \\
    (P|Q           & {|_\Left}\Sigma_\Left)                                & \textit{retrieve}(P,\Sigma_\Left)|Q                                  & \\
    (P|Q           & {|_\Right}\Sigma_\Right)                              & P|\textit{retrieve}(Q,\Sigma_\Right)                                 & \\
    (P|Q           & {|_\Left}\Sigma_\Left \djcup {|_\Right}\Sigma_\Right) & \textit{retrieve}(P,\Sigma_\Left)|\textit{retrieve}(Q,\Sigma_\Right) & \text{when}~ \Sigma_\Left,\Sigma_\Right\neq\emptyset \\
    (P\backslash L & {\backslash L}\,\Sigma)                               & \textit{retrieve}(P,\Sigma)\backslash L                              & \\
    (P[f]          & [f]\Sigma)                                            & \textit{retrieve}(P,\Sigma)[f]                                       & \\
    (A             & A{:}\Sigma)                                           & A{:}\textit{retrieve}(P,\Sigma)                                      & \text{when}~ \plat{$A \defis P$} \\
    (P\signals s   & \{(P\sigsyn{s})\})                                    & P\sigsyn{s}                                                          & \\
    (P\signals r   & {{}\signals r}\,\Sigma)                               & \textit{retrieve}(P,\Sigma)\signals r                                & .
  \end{array}\qedhere\]
\end{proof}

\subsection{Concurrency Relations between Synchrons and Transitions}

\begin{definition}[Active and Necessary Synchrons]\label{df:active and necessary synchrons}\rm
  All synchrons of the form \plat{$\sigma(\actsyn{\alpha}P)$} are \emph{active}; synchrons
  $\sigma(P\sigsyn{s})$ and $\sigma(b{:})$ are \emph{passive}. Moreover,
  all synchrons that are not of the form \plat{$\sigma(\actsyn{b?}P)$} or \plat{$\sigma(b{:})$}
  are \emph{necessary}.
  Let $a\varsigma(t)$ denote the set of active synchrons of the {\transition} $t$,
  and $n\varsigma(t)$ its set of necessary synchrons.
\end{definition}

\noindent
Intuitively, the execution of an active synchron \plat{$\sigma(\actsyn{\alpha}P)$}
  causes a transition \plat{$\alpha.P \mathbin{\goto{\alpha}} P$} in the relevant component of the represented system,
  whereas the execution of a passive synchron does not cause any state change.
Moreover, since for any $b\mathop{\in}\B$ a translation labelled $b?$ or $b{:}$ is enabled by any \ABCdE process in any state,
  whether a broadcast action $b!$ can take place depends only on the presence of a b!-synchron;
  its synchronisation partners of the form \plat{$\sigma(\actsyn{b?}P)$} or \plat{$\sigma(b{:})$}
  are not necessary for a $b!$-transition to occur.
In this paper we choose to see a transition labelled $b?$ or $b{:}$ merely as
  the potential for receiving or discarding a broadcast $b!$;
  none of the synchrons of such a transition is necessary.

\begin{definition}[Concurrency \cite{synchrons}]\label{df:sconc}\rm
  Two synchrons $\varsigma$ and $\upsilon$ are \emph{concurrent}, notation $\varsigma \sconc_d \upsilon$,
    if $\varsigma=\sigma{|_\D}\varsigma'$ and $\upsilon=\sigma{|_\E}\upsilon'$
    with $\sigma\in\textit{Arg}^*$, $\{\mathrm{D},\mathrm{E}\}=\{\mathrm{L},\mathrm{R}\}$, and $\varsigma',\upsilon'$ synchrons.

  A synchron $\varsigma$ is \emph{unaffected} by a {\transition} $u$, notation $\varsigma \saconc_d u$,
    if $\forall \upsilon\in a\varsigma(u).~ \varsigma \sconc_d \upsilon$.

  A {\transition} $t$ is \emph{unaffected} by a {\transition} $u$, notation $t \ssaconc u$,
    iff $\source(t)=\source(u)$ and $\forall \varsigma\in n\varsigma(t).~ \varsigma \saconc_d u$.
\end{definition}

\begin{example}
  Let $P = (b!.{\bf 0})\signals s \mid (s.{\bf 0} + b?.{\bf 0}) \in \cT$,
    $t = (\actsyn{b!}{\bf 0})\signals s \mid (s.{\bf 0} + (\actsyn{b?}{\bf 0})) \in \Tr$
    and $u = (b!.{\bf 0})\sigsyn{s} \mid ((\actsyn{s}{\bf 0}) + b?.{\bf 0}) \in \Tr$.
  Then $\ell(t)=b!$, $\ell(u)=\tau$, $\source(t)=\source(u)=P$, $\target(t)={\bf 0}|{\bf 0}$ and
    $\target(u)=(b!.{\bf 0})\signals s \mid {\bf 0}$.
  Transition $t$ has exactly two synchrons, namely \plat{$\varsigma \coloneqq {|_\Left}{{}\signals s}\,(\actsyn{b!}{\bf 0})$},
    and \plat{$\upsilon \coloneqq {|_\Right}{+_\Right}(\actsyn{b?}{\bf 0})$}.
  Also $u$ has two synchrons, namely $\nu \coloneqq {|_\Left} (b!.{\bf 0})\sigsyn{s}$ and
    \plat{$\xi \coloneqq {|_\Right}{+_\Left}(\actsyn{s}{\bf 0})$}.
  Of these synchrons, $\varsigma$, $\upsilon$ and $\xi$ are active, whereas $\varsigma$, $\nu$ and $\xi$ are necessary.

  We have $\varsigma \sconc_d \upsilon$, since these two synchrons stem from opposite sides of a parallel composition.
  Likewise $\varsigma \sconc_d \xi$, $\upsilon \sconc_d \nu$ and $\nu \sconc_d \xi$,
    yet $\varsigma \nsconc_d \nu$ and $\upsilon \nsconc_d \xi$.

  Transition $u$ is affected by $t$, $u \naconc' t$, since its necessary synchron $\nu$ is affected by $t$.
  I.e., $\nu \nsaconc_d t$. This is because $\varsigma$ is an active synchron of $t$.
  This verdict is consistent with the observation that after doing $t$ it is no longer possible to perform a $\tau$-action.

  Yet $t \ssaconc u$.
  I.e., $t$ is not affected by $u$.
  Namely $\varsigma$, the only necessary synchron of $t$, is not affected by $u$; \ie $\varsigma \saconc_d u$.
  This holds because $\varsigma \sconc_d \xi$.
  The fact that $\varsigma \nsconc_d \nu$ does not stand in the way of $\varsigma \saconc_d u$,
    because the synchron $\nu$ is not active.
  The fact that $\upsilon \nsconc_d \xi$ and hence $\upsilon \naconc_d u$ does not stand in the way of $t \ssaconc u$,
    because the synchron $\upsilon$ is not necessary.
  This verdict is consistent with the observation that after doing $u$ it is still possible to perform a $b!$-action.
  \hfill$\lrcorner$
\end{example}

\noindent
Let $\Tr^{s\bullet} \subseteq \Tr$ be the set of transitions $t$ with $\ell(t)\notin\B?\djcup\B{:}$\,.
In \cite{synchrons}, a concurrency relation ${\aconc} \subseteq \Tr^{s\bullet}\times\Tr$
  that also relates transitions that do not share the same source, was defined.
Restricted to pairs of transitions that do share the same source it coincides with the relation $\ssaconc$ of \df{sconc},
  as was remarked in \cite[Section~11]{synchrons}.
In \cite{synchrons} the relation $\sconc_d$ between synchrons was called \emph{direct} concurrency, which explains the subscript $d$.
There it was used to define a more liberal relation $\sconc$ between synchrons,
  which in turn was used to define the relation ${\aconc} \subseteq \Tr^{s\bullet}\times\Tr$.
Since in this paper we study concurrency relations merely between transitions that share the same source,
  we may use $\sconc_d$ instead of $\sconc$.

Later on, we will show that the relation $\ssaconc$ coincides with the relation $\aconc$ from the present paper,
  which is derived from \df{leadsto}.

\subsection{Classifying the Sets of Synchrons of Transitions}

Define the \emph{label} of a synchron by $\ell(\sigma(\actsyn{\alpha}P))=\alpha$, $\ell(\sigma(b{:}))=b{:}$ and
  $\ell(\sigma(P\sigsyn{s}))=\bar{s}$.

\begin{lemma}\label{lem:synchrons of transition}\rm
  Let $t \in \Tr$.
  \begin{itemize}
    \item Either $t$ has exactly one synchron $\varsigma$, with
      $\ell(\varsigma) = \ell(t) \in \Ch \djcup \bar{\Ch} \djcup \{\tau\} \djcup \Sig \djcup \bar{\Sig}$,
    \item or $\varsigma(t) = \{\varsigma,\upsilon\}$, with
      $\ell(\varsigma) \in \Ch\djcup\bar{\Ch}\djcup\Sig\djcup\bar{\Sig}$, $\ell(\upsilon)=\overline{\ell(\varsigma)}$,
      $\varsigma \sconc_d \upsilon$ and $\ell(t)=\tau$,
    \item or $\varsigma(t)\neq\emptyset$, there is a $b\in\B$ such that
      $\ell(\varsigma)\in\{b!,b?,b{:}\}$ for all $\varsigma\in\varsigma(t)$, and
      \begin{itemize}
        \item $\ell(t)=b!$ and $\ell(\varsigma)=b!$ for exactly one $\varsigma\in\varsigma(t) $, or
        \item $\ell(t)=b?$ and $b? \in \{\ell(\varsigma) \mid \varsigma\in\varsigma(t)\} \subseteq \{b?,b{:}\}$, or
        \item $\ell(t)=b{:}$ and $\ell(\varsigma)=b{:}$ for all $\varsigma\in\varsigma(t)$.
      \end{itemize}
  \end{itemize}
\end{lemma}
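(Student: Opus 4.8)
The plan is to prove \lem{synchrons of transition} by structural induction on the transition $t$—equivalently, on the derivation underlying it—following the recursion that defines $\varsigma(t)$ in \df{synchron}. The case analysis is driven by the form of the name of $t$ from \df{name of derivation}; since the clause for $\varsigma$ depends only on this form, rules sharing a name form (for instance \myref{Rec}, \myref{Dis-rec} and \myref{Sig-rec}, all producing $A{:}t$) are handled at once. Throughout I use two easy observations. First, prepending a fixed argument $\iota\in\textit{Arg}$ to every synchron preserves the direct-concurrency relation: from $\varsigma=\sigma{|_\D}\varsigma'$ and $\upsilon=\sigma{|_\E}\upsilon'$ we get $\iota\varsigma=(\iota\sigma){|_\D}\varsigma'$ and $\iota\upsilon=(\iota\sigma){|_\E}\upsilon'$, so $\varsigma\sconc_d\upsilon$ implies $\iota\varsigma\sconc_d\iota\upsilon$. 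Second, each base clause yields a nonempty synchron set and each inductive clause preserves nonemptiness, so $\varsigma(t)\neq\emptyset$ always. I also note that, apart from the label $\tau$ shared by the first two alternatives, the three alternatives are separated by their labels (a handshake or signal label forces the single-synchron alternative, a broadcast label forces the broadcast alternative); this is what lets the induction hypothesis on a premise be narrowed to one specific alternative, even one specific sub-case, once a rule constrains that premise's label.

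First I would dispatch the four base cases. For $\actsyn{\alpha}P$ the only synchron is $(\actsyn{\alpha}P)$, with label $\alpha=\ell(t)$: this is the first alternative when $\alpha\in\Ch\djcup\bar{\Ch}\djcup\{\tau\}\djcup\Sig$, and the broadcast alternative (first or second sub-case) when $\alpha=b!$ or $\alpha=b?$. The transitions $b{:}{\bf 0}$ and $b{:}\alpha.P$ each have the single synchron $(b{:})$, landing in the third broadcast sub-case, and $P\sigsyn{s}$ has the single synchron $(P\sigsyn{s})$ with label $\bar{s}$, the first alternative. For the single-premise names $u{+}Q$, $P{+}u$, $u|Q$, $P|u$, $u\backslash L$, $A{:}u$ and $u\signals r$, the set $\varsigma(t)$ is $\varsigma(u)$ with one fixed argument prepended; by the first observation it has the same cardinality, labels and $\sconc_d$-structure as $\varsigma(u)$, and $\ell(t)=\ell(u)$, so the induction hypothesis for $u$ transfers verbatim. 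Relabelling $u[f]$ is the only one that moves labels, but $f$ transforms the transition label and each synchron label in lock-step and maps each label class to itself while commuting with complementation, so the alternative is preserved with the broadcast channel $b$ merely renamed to $f(b)$.

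The three two-premise clauses carry the real content. For $t=u|w$ built by \myref{Comm}, the rule forces $\ell(u)=c$ and $\ell(w)=\bar{c}$ with $c\in\Ch\djcup\bar{\Ch}\djcup\Sig\djcup\bar{\Sig}$; by the induction hypothesis and the label-separation remark each premise is single-synchron, so $\varsigma(t)=\{{|_\Left}\varsigma_u,{|_\Right}\varsigma_w\}$ has exactly two synchrons with complementary labels $c$ and $\bar{c}$, and they are concurrent since they branch immediately at ${|_\Left}$ versus ${|_\Right}$ (empty common prefix); with $\ell(t)=\tau$ this is the second alternative. For $t=u{+}w$ from \myref{Dis-sum} both premises are labelled $b{:}$, hence both fall in the third broadcast sub-case, so $\varsigma(t)={+_\Left}\varsigma(u)\djcup{+_\Right}\varsigma(w)$ is nonempty with every label equal to $b{:}$—again the third sub-case.

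The main obstacle is $t=u|w$ built by \myref{Bro}. Here the premises have broadcast labels $b\sharp_1$ and $b\sharp_2$ for a common $b$, so by induction both lie in the broadcast alternative for this $b$ and $\varsigma(t)={|_\Left}\varsigma(u)\djcup{|_\Right}\varsigma(w)$ is nonempty with all labels in $\{b!,b?,b{:}\}$. The correct sub-case is then read off the $\circ$-table in Rule \myref{Bro}, matching the induction hypothesis on each premise against the row and column used: when $\sharp=!$ exactly one of $\sharp_1,\sharp_2$ equals $!$, so exactly one premise contributes a single $b!$-synchron and the other none, giving the first sub-case; when $\sharp={?}$ neither equals $!$ while at least one equals $?$, so no $b!$-synchron appears, all labels lie in $\{b?,b{:}\}$, and at least one $b?$ occurs, giving the second sub-case; and when $\sharp={:}$ both premises equal $:$, so every contributed synchron is labelled $b{:}$, giving the third. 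This case is delicate precisely because it is the only point where the three broadcast sub-cases interact, and each outcome must be checked against the appropriate entry of $\circ$.
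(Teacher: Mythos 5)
Your proposal is correct and takes exactly the approach of the paper, whose entire proof reads ``A trivial induction on the definition of $\varsigma(t)$''; you have simply carried that induction out in full detail. Your case analysis -- the label-based narrowing of the induction hypothesis for \myref{Comm}, and matching the $\circ$-table of \myref{Bro} against the three broadcast sub-cases -- is precisely the bookkeeping the authors left implicit.
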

\begin{proof}
  A trivial induction on the definition of $\varsigma(t)$.
\end{proof}

\noindent
The next result classifies exactly which sets of synchrons $\Sigma\subseteq\varsigma(P)$ arise as $\Sigma=\varsigma(t)$
  for some transition $t\in\en(P)$.

\begin{definition}\label{df:P-complete}\rm
  A set of synchrons $\Sigma\subseteq \varsigma(P)$ is \emph{$P$-complete} if
  \begin{enumerate}
    \item either $\Sigma=\{\varsigma\}$ with
      $\ell(\varsigma) \in \Ch \djcup \bar{\Ch} \djcup \{\tau\} \djcup \Sig \djcup \bar{\Sig}$,
    \item or $\Sigma = \{\varsigma,\upsilon\}$, with
      $\ell(\varsigma) \in \Ch\djcup\bar{\Ch}\djcup\Sig\djcup\bar{\Sig}$, $\ell(\upsilon)=\overline{\ell(\varsigma)}$,
      and $\varsigma \sconc_d \upsilon$,
    \item or there is a $b\in\B$ such that\label{third}
      \begin{itemize}
        \item $\ell(\varsigma)\in\{b!,b?,b{:}\}$ for all $\varsigma\in\Sigma$,
        \item $\ell(\varsigma)=b!$ for at most one $\varsigma\in\Sigma$,
        \item $\varsigma \sconc_d \upsilon$ for each pair $\varsigma,\upsilon \in \Sigma$
          with $\varsigma\neq\upsilon$ and either $\ell(\varsigma)\neq b{:}$ or $\ell(\upsilon)\neq b{:}$,
        \item if $\varsigma \mathop{\in} \varsigma(P)$ with $\ell(\varsigma) \mathbin{=} b?$ and
          $\varsigma \sconc_d \upsilon$ for each active synchron $\upsilon \mathop{\in} \Sigma{\setminus}\{\varsigma\}$,
          then $\varsigma \mathop{\in} \Sigma$,~and
        \item if $\varsigma \mathop{\in} \varsigma(P)$ with $\ell(\varsigma) \mathbin{=} b{:}\,$ and
          $\varsigma \sconc_d \upsilon$ for each active synchron $\upsilon \mathop{\in} \Sigma$,
          then $\varsigma \mathop{\in} \Sigma$.
          \hspace*{33.6pt}
      \end{itemize}
  \end{enumerate}
\end{definition}

\begin{theorem}\label{thm:P-complete}\rm
  A set of synchrons $\Sigma\subseteq\varsigma(P)$ is $P$-complete iff $\varsigma(t)=\Sigma$ for some $t\in\en(P)$.
\end{theorem}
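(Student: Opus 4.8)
The plan is to prove both implications by a single structural induction on the process $P$. In each case I first recall the set $\varsigma(P)$ from \df{synchron}, then read off from the operational rules exactly which transitions $t$ are enabled in $P$ together with their synchron sets $\varsigma(t)$, and finally compare this list with the $P$-complete subsets of $\varsigma(P)$ prescribed by \df{P-complete}. The recursion case $A\defis P'$ reduces to the body $P'$, since $\varsigma(A)=A{:}\varsigma(P')$ and every transition of $A$ has the form $A{:}t'$ for a transition $t'$ of $P'$ (rules \myref{Rec}, \myref{Dis-rec}, \myref{Sig-rec}); guardedness keeps the induction well-founded, as the prefix case $\alpha.Q$ is a leaf of the synchron construction (its transitions do not recurse into the transitions of $Q$).

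For the direction ``$\varsigma(t)=\Sigma \Rightarrow \Sigma$ is $P$-complete'', the membership/label/cardinality requirements (Cases~1 and~2, and the first two bullets of Case~3) are already supplied by \lem{synchrons of transition}. The genuine work is the pairwise-concurrency bullet and the two closure bullets of Case~3. The concurrency bullet follows because any two synchrons of a broadcast transition that are not both labelled $b{:}$ are produced by rule \myref{Bro} from opposite sides of a parallel composition, hence carry opposite $|_\Left$/$|_\Right$ arguments and are $\sconc_d$-concurrent. The closure bullets — that every receive- or discard-synchron of $P$ which is $\sconc_d$-concurrent with all active synchrons of $t$ already lies in $\varsigma(t)$ — I verify by the same induction: rule \myref{Bro} forces every parallel component to react to a broadcast (by receiving or discarding), rule \myref{Dis-sum} forces both summands to discard, and \myref{Dis-nil}/\myref{Dis-act} supply the discard at the leaves, so no compatible receive/discard synchron can be missing.

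For the converse, ``$\Sigma$ is $P$-complete $\Rightarrow$ $\varsigma(t)=\Sigma$ for some $t\in\en(P)$'', I exploit the inductive shape of $\varsigma(P)$ to split $\Sigma$ along the top-level operator of $P$. For a unary construct ($\backslash L$, $[f]$, ${\signals s}$, or recursion) $\varsigma$ merely prefixes one argument, so stripping it turns $\Sigma$ into a $P'$-complete set, the induction hypothesis yields a subtransition $t'$, and the corresponding rule rebuilds $t$. For $P=P_1{+}P_2$ and $P=P_1|P_2$ the decomposition $\Sigma={+_\Left}\Sigma_1\cup{+_\Right}\Sigma_2$, respectively $\Sigma={|_\Left}\Sigma_1\cup{|_\Right}\Sigma_2$, is unique; I show each nonempty $\Sigma_i$ is $P_i$-complete, obtain subtransitions $t_i$ from the induction hypothesis, and recombine them via \myref{Sum-l}/\myref{Sum-r}/\myref{Dis-sum}, respectively \myref{Par-l}/\myref{Par-r}/\myref{Comm}/\myref{Bro}. \lem{transition uniquely identified by source and synchrons} guarantees the rebuilt transition is unambiguous, and a direct check gives $\varsigma(t)=\Sigma$. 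The base cases ${\bf 0}$ and $\alpha.P$ are read straight off \df{P-complete}.

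The main obstacle is the broadcast subcase of parallel composition. Here a single $b!$-synchron may sit on one side while the closure bullets require the opposite side to contribute all of its compatible $b?$- and $b{:}$-synchrons, and rule \myref{Bro} imposes the side condition $\sharp_1\circ\sharp_2=\sharp\neq\_$ (forbidding two simultaneous broadcasts). The crux is that a synchron of $P_1$ is automatically $\sconc_d$-concurrent with every synchron of $P_2$, since they carry opposite $|$-arguments; this lets the global closure of $\Sigma$ transfer to each $\Sigma_i$ and back, and it is exactly why a broadcast emitted on one side must be matched by the \emph{canonical} reaction of the other — which always exists, as every process admits either a $b?$- or a $b{:}$-transition. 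Getting this bookkeeping right, together with the asymmetry that a receive may come from a single summand whereas a discard must come from both (\myref{Dis-sum} versus the single-sided \myref{Sum-l}/\myref{Sum-r}), is where the care is needed; the handshake and signal subcases of \myref{Comm} are comparatively routine.
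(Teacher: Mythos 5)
Your proof is correct and takes essentially the same route as the paper's: the `if' direction rests on \lem{synchrons of transition}, and the `only if' direction rebuilds the transition by decomposing $\Sigma$ along the outermost constructor of $P$ (equivalently, the leading argument of the synchrons), with the broadcast closure bullets transferring across $+$ and $|$ exactly as the paper's terse sketch intends. The only difference is induction bookkeeping: you induct on $P$ and must appeal to guardedness to justify the recursion case, whereas the paper inducts on the synchrons themselves -- structurally on $\varsigma$ for case 1 of \df{P-complete}, on the common prefix $\sigma$ for case 2, and for case 3 on the well-founded order $\Sigma'<\Sigma$ iff $\iota\Sigma'\subseteq\Sigma$ for some $\iota\in\textit{Arg}$ -- which makes well-foundedness automatic (synchrons are finite strings, with $A{:}$ just another argument); both organisations are sound.
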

\begin{proof}
  `If' follows by a straightforward induction on the definition of $\varsigma(t)$, using \lem{synchrons of transition}.

  For `only if', the first case of \df{P-complete} follows by structural induction on $\varsigma$.
  Using this, the second case of \df{P-complete} follows by structural induction on the common prefix $\sigma$ of the two synchrons in $\Sigma$.
  The third case of \df{P-complete} proceeds by induction on the well-founded order $<$ on sets $\Sigma$ of synchrons,
    defined by $\Sigma'<\Sigma$ iff $\iota\Sigma' \subseteq \Sigma$ for some $\iota\in\textit{Arg}$.
  All steps are straightforward.
\end{proof}

\subsection{From Source to Target}\label{sec:after}

We now explore how the sets of synchrons of processes evolve with transitions
  -- \ie how $\varsigma(P)$ becomes $\varsigma(Q)$ when a transition $w$ with $\source(w)=P$ and $\target(w)=Q$ is taken.

If a process $P$ has a synchron $\varsigma$ that is not affected by a transition $w\in\en(P)$,
  then after executing $w$ a variant of $\varsigma$ ought to remain a synchron of the target state $\target(w)$.
We will denote this variant as $\varsigma@w$.

\begin{example}\label{ex:after}
  Let $P \mathbin{=} (a.{\bf 0}|d.{\bf 0}){+}c.{\bf 0}$, \plat{$\varsigma \mathbin{=} {+_\Left}{|_\Right}(\actsyn{d}{\bf 0}) \mathop{\in}\hspace{-.7pt} \varsigma(P)$} and
    $w \mathbin{=} ((\actsyn{a}{\bf 0})|d.{\bf 0}){+}c.{\bf 0} \mathop{\in}\hspace{-.7pt} \en(P)$.
  Then $\varsigma \saconc_d w$ and \plat{$\varsigma@w = {|_\Right}(\actsyn{d}{\bf 0}) \in \en(Q)$}, where $Q = \target(w) = {\bf 0}|d.{\bf 0}$.
  \hfill$\lrcorner$
\end{example}

\begin{definition}[Dynamic and Static Arguments~\cite{synchrons}]\label{df:dynamic and static arguments}\rm
  The arguments ${+_\Left}$, ${+_\Right}$, $A{:}$ and $\signals r$ are called \emph{dynamic}; the others are \emph{static}.
  For $\sigma\in\textit{Arg}^*$, let $\textit{static}(\sigma)$ be the result of removing all dynamic arguments from $\sigma$.
  Moreover, for $\varsigma = \sigma\upsilon$ with \plat{$\upsilon \in \{(\actsyn{\alpha}P),(P\sigsyn{s}),(b{:})\}$},
    define $\textit{static}(\varsigma) \coloneqq \textit{static}(\sigma)\upsilon$.
\end{definition}

\begin{definition}[After Function~\cite{synchrons}]\label{df:@}\rm
  Let $\varsigma,\upsilon$ be synchrons with $\varsigma \sconc_d \upsilon$,
    \ie $\varsigma = \sigma{|_\D}\varsigma'$ and $\upsilon = \sigma{|_\E}\upsilon'$
    with $\sigma\in\textit{Arg}^*$, $\{\mathrm{D},\mathrm{E}\} = \{\mathrm{L},\mathrm{R}\}$, and $\varsigma',\upsilon'$ synchrons.
  Define $\varsigma@\upsilon$, where $@$ is pronounced `after', to be $\textit{static}(\sigma){|_\D}\varsigma'$.

  For $w\in\Tr$ with $\ell(w)\in\textit{Act}$ and $\varsigma \saconc_d w$,
    \ie $\forall \upsilon\in a\varsigma(w).~ \varsigma \sconc_d \upsilon$,
    let $\varsigma@w \coloneqq \varsigma@\upsilon$ for the $\upsilon\in a\varsigma(w)$ that is `closest' to $\varsigma$,
    in the sense that it has the largest prefix in common with $\varsigma$.
  Note that $\ell(w)\in\textit{Act}$ implies $a\varsigma(w)\neq\emptyset$.

  For $w\in\Tr$ with $\ell(w)\in\B{:}\djcup\bar{\Sig}$, let $\varsigma@w \coloneqq \varsigma$.
  In this case, $\varsigma \saconc_d w$ trivially holds.

  For a set $\Sigma$ of synchrons, define $\Sigma@w \coloneqq \{\varsigma@w \mid \varsigma\in\Sigma \land \varsigma \saconc_d w\}$.
\end{definition}

\noindent
The next lemma says that the concurrency relation between synchrons is not affected by the execution of a concurrent transition.

\begin{lemma}\label{lem:@ sconc_d @}\rm
  Let $w\in\Tr$ and $\varsigma,\upsilon$ be synchrons with $\varsigma \saconc_d w$ and $\upsilon \saconc_d w$.
  Then $\varsigma \sconc_d \upsilon \Leftrightarrow \varsigma@w \sconc_d \upsilon@w$.
\end{lemma}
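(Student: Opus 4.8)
The plan is to split on $\ell(w)$, following the two clauses of \df{@}. If $\ell(w)\in\B{:}\djcup\bar{\Sig}$ then $\varsigma@w=\varsigma$ and $\upsilon@w=\upsilon$, so the equivalence is immediate; hence I would concentrate on $\ell(w)\in\textit{Act}$, where $a\varsigma(w)\neq\emptyset$ and the after-image uses the closest active synchron. The first move is to recast $\sconc_d$ in terms of first difference: writing $\sigma$ for the longest common prefix of $\varsigma$ and $\upsilon$, so that $\varsigma=\sigma\,x\,\bar{\varsigma}$ and $\upsilon=\sigma\,y\,\bar{\upsilon}$ with $x\neq y$ each an argument in $\textit{Arg}$ or a terminal symbol, \df{sconc} gives $\varsigma\sconc_d\upsilon$ exactly when $\{x,y\}=\{{|_\Left},{|_\Right}\}$. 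Since $@w$ only rewrites a prefix of a synchron, the whole lemma reduces to a single claim: $\varsigma@w$ and $\upsilon@w$ again first differ at the pair $\{x,y\}$. This claim yields both directions of the `iff' at once.

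To establish it I would locate the active synchron $\xi$ of $w$ closest to $\varsigma$, and separately the one closest to $\upsilon$, observing that $\varsigma@\xi$, which by \df{@} equals $\textit{static}(\sigma')\,{|_\D}\,\varsigma'$ where $\sigma'$ is the longest common prefix of $\varsigma$ and $\xi$, depends only on $\sigma'$; this makes the after-image well defined and lets me reason purely about prefixes. The hypotheses $\varsigma\saconc_d w$ and $\upsilon\saconc_d w$ give $\varsigma\sconc_d\xi''$ and $\upsilon\sconc_d\xi''$ for every $\xi''\in a\varsigma(w)$. The key step is that no active synchron can agree with $\varsigma$ as far as $\sigma\,x$: such a $\xi''$ would branch off from $\upsilon$ precisely at $\sigma$ with the two arguments $x$ and $y$, whence $\upsilon\sconc_d\xi''$ would force $\{x,y\}=\{{|_\Left},{|_\Right}\}$; the symmetric statement holds with the roles of $\varsigma$ and $\upsilon$ exchanged. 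Consequently, when $\{x,y\}\neq\{{|_\Left},{|_\Right}\}$ every active synchron branches off already strictly inside $\sigma$, so the synchrons closest to $\varsigma$ and to $\upsilon$ share a common prefix $\tau$ that is a proper prefix of $\sigma$, and may be taken to coincide. Writing $\sigma=\tau\,\iota\,\sigma_1$ with $\iota\in\{{|_\Left},{|_\Right}\}$, the after-images compute to $\varsigma@w=\textit{static}(\tau)\,\iota\,\sigma_1\,x\,\bar{\varsigma}$ and $\upsilon@w=\textit{static}(\tau)\,\iota\,\sigma_1\,y\,\bar{\upsilon}$, which still first differ at $\{x,y\}$. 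When instead $\{x,y\}=\{{|_\Left},{|_\Right}\}$, each closest active synchron branches off either inside one of the two sides of this parallel split or strictly inside $\sigma$; using that $|_\Left$ and $|_\Right$ are \emph{static} by \df{dynamic and static arguments}, so that the witnessing split survives the $\textit{static}$ operation, $\varsigma@w$ and $\upsilon@w$ reduce to $\textit{static}(\sigma)\,{|_\D}\cdots$ and $\textit{static}(\sigma)\,{|_\E}\cdots$ (or to the $\tau$-form above), once more first differing at $\{{|_\Left},{|_\Right}\}$.

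The main obstacle is the bookkeeping of the subcase in which the active synchrons closest to $\varsigma$ and to $\upsilon$ are \emph{different} synchrons sitting in opposite parallel branches: there $\textit{static}$ is applied to two distinct prefixes, and one must verify that the statified prefixes of $\varsigma@w$ and $\upsilon@w$ still agree up to the deciding $\{x,y\}$-position rather than diverging earlier. What makes this go through is exactly the static/dynamic dichotomy: the arguments ${|_\Left},{|_\Right}$ that alone determine $\sconc_d$ are static and hence never deleted, whereas the arguments ${+_\Left},{+_\Right},A{:},{}\signals r$ that are irrelevant to $\sconc_d$ are precisely the dynamic ones removed by $\textit{static}$, so the first point of difference of the two after-images is preserved.
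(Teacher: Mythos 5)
Your argument is correct, and it is assembled from the same raw materials as the paper's proof: the trivial case $\ell(w)\in\B{:}\djcup\bar{\Sig}$, the reduction of $@w$ to the active synchron of $w$ closest to the given synchron (well defined since the after-image depends only on the common prefix), prefix bookkeeping, and the fact that ${|_\Left},{|_\Right}$ are static while $\textit{static}$ deletes only dynamic arguments. The decomposition, however, is genuinely different. The paper splits primarily on the pair $(\nu,\xi)$ of active synchrons of $w$ closest to $\varsigma$ and to $\upsilon$: the case $\nu=\xi$, with sub-cases on the relative lengths of the common prefixes, where the biconditional either transfers to the suffixes or both sides hold outright; the case $\nu\neq\xi$ first differing at ${+_\Left}$ versus ${+_\Right}$, shown impossible by exactly your ``key step'' with the roles of $(\varsigma,\upsilon)$ and $(\nu,\xi)$ interchanged; and the case $\nu\neq\xi$ first differing at ${|_\Left}$ versus ${|_\Right}$, where both sides hold outright. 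You split instead on the first-difference pair $\{x,y\}$ of $(\varsigma,\upsilon)$ and prove a strictly stronger invariant -- $@w$ preserves the first point of difference -- from which both directions of the ``iff'' fall out at once. Your route buys a conceptual explanation of why the lemma holds ($\sconc_d$ is a property of the first difference, $@w$ rewrites only prefixes, and the deciding arguments are precisely the static ones) together with a reusable invariant that would, for instance, also yield the injectivity statement of \lem{injective}; the paper's route buys case-local concreteness, each verification being a two- or three-synchron prefix picture with no preliminary reformulation. One sub-case you gloss over: to conclude, when $\{x,y\}\neq\{{|_\Left},{|_\Right}\}$, that every active synchron of $w$ branches off \emph{strictly} inside $\sigma$, your key step (which excludes active synchrons agreeing with $\varsigma$ through $\sigma\,x$, or with $\upsilon$ through $\sigma\,y$) does not yet rule out some $\xi''=\sigma\,z\cdots$ with $z\notin\{x,y\}$ branching exactly at the difference position; but then $\varsigma\sconc_d\xi''$ and $\upsilon\sconc_d\xi''$ would force $\{x,z\}=\{y,z\}=\{{|_\Left},{|_\Right}\}$ and hence $x=y$, a contradiction -- a one-line repair, not a gap.
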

\begin{proof}
  The case $\ell(w)\in\B{:}\djcup\bar{\Sig}$ is trivial, so suppose $\ell(w)\in\textit{Act}$.
  We obtain $\nu\in a\varsigma(w)$ that is `closest' to $\varsigma$ and $\xi\in a\varsigma(w)$ that is `closest' to $\upsilon$.
  Then $\varsigma@w=\varsigma@\nu$ and $\upsilon@w=\upsilon@\xi$.

  \vspace{1ex}
  \noindent
  Suppose $\nu=\xi$.

  If the common prefix shared by $\varsigma$ and $\nu$ is the same as that shared by $\upsilon$ and $\nu$,
    then $\varsigma=\sigma{|_\D}\varsigma'$, $\upsilon=\sigma{|_\D}\upsilon'$ and $\nu=\sigma{|_\E}\nu'$
    with $\sigma\in\textit{Arg}^*$, $\{\mathrm{D},\mathrm{E}\}=\{\mathrm{L},\mathrm{R}\}$, and $\varsigma',\upsilon',\nu'$ synchrons.
  Thus $\varsigma@w \mathbin{=} \textit{\static}(\sigma){|_\D}\varsigma'$ and
    $\upsilon@w \mathbin{=} \textit{\static}(\sigma){|_\D}\upsilon'$\!.
  Hence $\varsigma \mathbin{\sconc_d} \upsilon \Leftrightarrow \varsigma' \mathbin{\sconc_d} \upsilon' \Leftrightarrow \varsigma@w \mathbin{\sconc_d} \upsilon@w$.

  If the common prefix shared by $\varsigma$ and $\nu$ is shorter than that shared by $\upsilon$ and $\nu$,
    then $\varsigma=\sigma{|_\D}\varsigma'$, $\upsilon=\sigma{|_\E}\sigma'{|_{\D'}}\upsilon'$ and $\nu=\sigma{|_\E}\sigma'{|_{\E'}}\nu'$
    with $\sigma,\sigma'\in\textit{Arg}^*$, $\{\mathrm{D},\mathrm{E}\} = \{\mathrm{D}',\mathrm{E}'\} = \{\mathrm{L},\mathrm{R}\}$, and $\varsigma',\upsilon',\nu'$ synchrons.
  Hence $\varsigma \sconc_d \upsilon$ and $\varsigma@w \sconc_d \upsilon@w$ always hold.

  If the common prefix shared by $\varsigma$ and $\nu$ is longer than that shared by $\upsilon$ and $\nu$,
    the proof is similar to that of the previous case.

  \vspace{1ex}
  \noindent
  Suppose $\nu\neq\xi$ and the first argument where $\nu$ and $\xi$ differ is a ${+_\Left}$ versus ${+_\Right}$.
  We have $\nu=\sigma{+_\D}\nu'$ and $\xi=\sigma{+_\E}\xi'$
    with $\sigma\in\textit{Arg}^*$, $\{\mathrm{D},\mathrm{E}\}=\{\mathrm{L},\mathrm{R}\}$, and $\nu',\xi'$ synchrons.

  If either the common prefix shared by $\varsigma$ and $\nu$ or that shared by $\upsilon$ and $\xi$ is shorter than $\sigma$,
    then the proof is similar to that for the case where $\nu=\xi$.
  Alternatively, in this case we could have chosen $\nu=\xi$, thereby reducing this case to the previous one.

  Otherwise, $\varsigma$ has a prefix $\sigma{+_\D}$ and $\upsilon$ has a prefix $\sigma{+_\E}$.
  Then $\varsigma \nsconc_d \xi$ and $\upsilon \nsconc_d \nu$, contradicting the assumptions $\varsigma \saconc_d w$ and $\upsilon \saconc_d w$.

  \vspace{1ex}
  \noindent
  Suppose $\nu\neq\xi$ and the first argument where $\nu$ and $\xi$ differ is a ${|_\Left}$ versus ${|_\Right}$.
  We have $\nu=\sigma{|_\D}\nu'$ and $\xi=\sigma{|_\E}\xi'$
    with $\sigma\in\textit{Arg}^*$, $\{\mathrm{D},\mathrm{E}\}=\{\mathrm{L},\mathrm{R}\}$, and $\nu',\xi'$ synchrons.

  Again, if either the common prefix shared by $\varsigma$ and $\nu$ or that shared by $\upsilon$ and $\xi$ is shorter than $\sigma$,
    then the proof is similar to that for the case where $\nu=\xi$.

  Otherwise, $\varsigma$ has a prefix $\sigma{|_\D}$, $\upsilon$ has a prefix $\sigma{|_\E}$,
    $\varsigma@w$ has a prefix $\textit{static}(\sigma){|_\D}$ and $\upsilon@w$ has a prefix $\textit{static}(\sigma){|_\E}$.
  Hence $\varsigma \sconc_d \upsilon$ and $\varsigma@w \sconc_d \upsilon@w$ always hold.
\end{proof}

\noindent
The following lemma says that, for $w\in\Tr$, the operation $@w$ on synchrons of the same process $P$ is injective.

\begin{lemma}\label{lem:injective}\rm
  If $\varsigma,\upsilon\in\varsigma(P)$ with $\varsigma \mathbin{\neq} \upsilon$
  and $\varsigma \mathbin{\saconc_d} w$, $\upsilon \mathbin{\saconc_d} w$ for some $w \mathbin{\in} \Tr$,
  then $\varsigma@w \mathbin{\neq} \upsilon@w$.
\end{lemma}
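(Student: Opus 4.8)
The plan is to unfold the after function (\df{@}) and split on $\ell(w)$. If $\ell(w)\in\B{:}\djcup\bar{\Sig}$ then $@w$ is the identity on synchrons, so $\varsigma\neq\upsilon$ gives $\varsigma@w=\varsigma\neq\upsilon=\upsilon@w$ at once. So I assume $\ell(w)\in\textit{Act}$, which guarantees $a\varsigma(w)\neq\emptyset$. I first dispose of the sub-case $\varsigma\sconc_d\upsilon$: by \lem{@ sconc_d @} the hypotheses $\varsigma\saconc_d w$ and $\upsilon\saconc_d w$ yield $\varsigma@w\sconc_d\upsilon@w$, and since $\sconc_d$ is irreflexive (a synchron cannot branch both left and right at one parallel), this forces $\varsigma@w\neq\upsilon@w$.

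The remaining, and main, sub-case is $\varsigma\nsconc_d\upsilon$ with $\varsigma\neq\upsilon$. Here I view each synchron as a finite string of arguments from $\textit{Arg}$ terminated by a base $(\actsyn{\alpha}P)$, $(b{:})$ or $(P\sigsyn{s})$. Since $\varsigma$ and $\upsilon$ are synchrons of the \emph{same} process $P$, at every position of their common prefix the arguments are literally equal, and as bases are terminal neither string is a proper prefix of the other. Hence they have a longest common prefix $\delta$ and disagree at position $|\delta|$. Because $\varsigma\nsconc_d\upsilon$, this first disagreement is \emph{not} a ${|_\Left}$-versus-${|_\Right}$ branching; it is therefore a choice (${+_\Left}$ vs ${+_\Right}$), a signalling branching (a base $(Q\sigsyn{s})$ against a $\signals s$ continuation), or two differing bases at a common leaf.

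The crux is the claim that every active synchron $\nu\in a\varsigma(w)$ branches off from $\varsigma$ — and hence also from $\upsilon$ — at a parallel composition lying strictly inside $\delta$. I prove this by contradiction: unaffectedness gives $\nu\sconc_d\varsigma$ and $\nu\sconc_d\upsilon$, so $\nu$ branches from each at a parallel. Were the branch with $\varsigma$ at position $\ge|\delta|$, then $\nu$ would agree with $\varsigma$ throughout $\delta$; but the token of $\varsigma$ at position $|\delta|$ is non-parallel (or absent, when $\varsigma$ ends in a base there), so the branch cannot sit at $|\delta|$, while a branch strictly beyond $|\delta|$ forces $\nu$ to carry $\varsigma$'s non-parallel token at position $|\delta|$ and thus to diverge from $\upsilon$ at $|\delta|$ in a non-parallel way, contradicting $\nu\sconc_d\upsilon$. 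With the claim established, the longest common prefix of any active synchron with $\varsigma$ equals that with $\upsilon$ (both fall inside $\delta$, where $\varsigma$ and $\upsilon$ agree), so the closest active synchron and its branch point coincide for the two: there are $\rho\in\textit{Arg}^*$ and $\iota\in\{{|_\Left},{|_\Right}\}$ with $\varsigma=\rho\iota\varsigma_1$ and $\upsilon=\rho\iota\upsilon_1$, where $\rho\iota$ is a prefix of $\delta$. By \df{@} we then get $\varsigma@w=\textit{static}(\rho)\iota\varsigma_1$ and $\upsilon@w=\textit{static}(\rho)\iota\upsilon_1$, which share the rewritten prefix but keep verbatim the tails $\varsigma_1\neq\upsilon_1$ (these still carry the distinguishing disagreement, since it sits at or beyond position $|\rho\iota|$). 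Therefore $\varsigma@w\neq\upsilon@w$.

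I expect the main obstacle to be exactly this claim — that no active synchron of $w$ can branch from $\varsigma$ (or $\upsilon$) at or below their first point of disagreement. It is the step where the two hypotheses genuinely interact: $\varsigma\nsconc_d\upsilon$ makes that first disagreement non-parallel, while $\varsigma,\upsilon\saconc_d w$ force every active synchron to branch off by a parallel, so the branch must occur earlier, in the shared region that $@w$ rewrites identically. Some care is also needed in the bookkeeping of strings-with-a-terminal-base, so that ``position $|\delta|$'' remains meaningful when a synchron ends in a base there, but that is routine once the claim is in place.
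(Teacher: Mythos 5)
Your proof is correct, and its combinatorial heart is the same as the paper's: both arguments turn on the first position at which $\varsigma$ and $\upsilon$ disagree, and on the observation that an active synchron of $w$ branching off at or beyond that position along one of the two synchrons would fail to be concurrent with the other, violating unaffectedness. The paper runs this by contradiction: assuming $\varsigma@w=\upsilon@w$, it classifies the first difference as either ${+_\Left}$ versus ${+_\Right}$ or ${|_\Left}$ versus ${|_\Right}$, dismisses the parallel case because $@w$ deletes only dynamic arguments, and in the choice case shows that the closest active synchron $\nu$ would have to extend past the deleted ${+_\D}$, whence $\upsilon \nsconc_d \nu \in a\varsigma(w)$, contradicting $\upsilon \saconc_d w$ --- precisely the interaction you isolate as your crux claim. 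Your route differs in three respects, each of which buys something. First, it is direct rather than by contradiction, and establishes the stronger intermediate fact that all active synchrons of $w$ branch off strictly inside the common prefix $\delta$, so the closest one and its branch point coincide for $\varsigma$ and $\upsilon$; the inequality then falls out because $@w$ rewrites the shared prefix identically and keeps the disagreeing tails verbatim. Second, your case analysis is genuinely more complete than the paper's: in \ABCdE\ two synchrons of the same process can also first disagree at a signalling node (a base $(Q\sigsyn{s})$ against a ${\signals s}$ continuation) or in terminal bases at a common leaf, possibilities that the paper's stated dichotomy of ${+}$-versus-${+}$ or ${|}$-versus-${|}$ silently omits; your crux claim needs only that the first disagreement is non-parallel, so these cases are covered uniformly. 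Third, you dispatch the concurrent case $\varsigma \sconc_d \upsilon$ by citing \lem{@ sconc_d @} together with irreflexivity of $\sconc_d$ (the same fact the paper invokes for \textit{Corollary}~\ref{cor:@ and new are disjoint}), where the paper instead appeals to $@w$ preserving static arguments; both are sound, and since \lem{@ sconc_d @} precedes this lemma there is no circularity. In short: same key idea, but your direct formulation is more careful about \ABCdE's signalling and base synchrons, and arguably patches a small omission in the paper's own case split, at the cost of somewhat heavier prefix bookkeeping.
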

\begin{proof}
  If $\ell(w)\in\B{:}\djcup\bar{\Sig}$ then $\varsigma@w = \varsigma \neq \upsilon = \upsilon@w$.
  If $\ell(w)\in\textit{Act}$, then let $\nu\in a\varsigma(w)$ be the synchron `closest' to
  $\varsigma$, and $\xi\in a\varsigma(w)$ be the one `closest' to $\upsilon$,
    so that $\varsigma@w=\varsigma@\nu$ and $\upsilon@w=\upsilon@\xi$.
  Suppose $\varsigma@w=\upsilon@w$.
  Since $\varsigma$ and $\upsilon$ are synchrons of the same process, the first argument where they differ must be either (a) a ${+_\Left}$ versus ${+_\Right}$ or (b) a ${|_\Left}$ versus ${|_\Right}$.
  $\varsigma$ and $\upsilon$ cannot become the same after the $@w$ operation in the case of (b), because $@w$ can only remove dynamic arguments.
  The only possibility is (a) and $@w$ removes at least one of these arguments.
  Let $\varsigma=\sigma{+_\D}\varsigma'$ and $\upsilon=\sigma{+_\E}\upsilon'$
    with $\sigma\in\textit{Arg}^*$, $\{\mathrm{D},\mathrm{E}\}=\{\mathrm{L},\mathrm{R}\}$, and $\varsigma',\upsilon'$ synchrons.
  Assume that ${+_\D}$ is removed from $\varsigma$ by $@w$; the other case will follow by symmetry.
  Then $\nu=\sigma{+_\D}\nu'$ for some active synchron $\nu'$.
  Hence $\upsilon \nsconc_d \nu \in a\varsigma(w)$, contradicting the assumption $\upsilon \saconc_d w$.
  Thus $\varsigma@w \neq \upsilon@w$.
\end{proof}

\noindent
The synchrons affected by a transition are `destroyed' after that transition is taken.
At the same time, some new synchrons are introduced.
We denotes the set of new synchrons introduced by taking $w\in\Tr$ as $\textit{new}(w)$.

\begin{example}
  Let $P=\alpha.Q$ and \plat{$w = (\actsyn{\alpha}Q) \in \en(P)$}.
  Taking the transition $w$ `activates' the active synchron \plat{$(\actsyn{\alpha}Q)$},
    which removes the `guard' $\alpha$ and `reveals' the synchrons in $Q$.
  Then $\textit{new}(w) = \varsigma(Q) \subseteq \varsigma(\target(w))$.
  \hfill$\lrcorner$
\end{example}

\begin{definition}[New Function]\label{df:new}\rm
  Let $\upsilon$ be an active synchron.
  I.e., \plat{$\upsilon=\sigma(\actsyn{\alpha}P)$} for some $\sigma\in\textit{Arg}^*$, $\alpha\in\textit{Act}$ and $P\in\cT$.
  Define $\textit{new}(\upsilon) \coloneqq \{\textit{static}(\sigma)\varsigma \mid \varsigma\in\varsigma(P)\}$.

  For $w\in\Tr$, let $\textit{new}(w) \coloneqq \bigdjcup\{\textit{new}(\upsilon) \mid \upsilon\in a\varsigma(w)\}$.
\end{definition}

\begin{lemma}\label{lem:new sconc_d new}\rm
  Let $\nu,\xi$ be two active synchrons, $\varsigma\in\textit{new}(\nu)$ and $\upsilon\in\textit{new}(\xi)$.
  If $\nu \sconc_d \xi$ then $\varsigma \sconc_d \upsilon$.
\end{lemma}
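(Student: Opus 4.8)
The plan is to unfold the three relevant definitions ($\sconc_d$, $\textit{new}$ and $\textit{static}$) and to exploit the single observation that the parallel-composition argument witnessing $\nu \sconc_d \xi$ is a \emph{static} argument, and is therefore preserved when passing to new synchrons.

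First I would invoke the hypothesis $\nu \sconc_d \xi$ to write $\nu = \sigma{|_\D}\nu'$ and $\xi = \sigma{|_\E}\xi'$ with $\sigma\in\textit{Arg}^*$, $\{\mathrm{D},\mathrm{E}\}=\{\mathrm{L},\mathrm{R}\}$, and $\nu',\xi'$ synchrons. Since $\nu$ and $\xi$ are active, their base atoms have the shape $(\actsyn{\alpha}P)$ and $(\actsyn{\beta}Q)$; writing $\nu = \rho_\nu(\actsyn{\alpha}P)$ and $\xi = \rho_\xi(\actsyn{\beta}Q)$ as required by \df{new}, the two decompositions force $\rho_\nu = \sigma{|_\D}\tau_\nu$ and $\rho_\xi = \sigma{|_\E}\tau_\xi$ for suitable $\tau_\nu,\tau_\xi\in\textit{Arg}^*$, where $\nu' = \tau_\nu(\actsyn{\alpha}P)$ and $\xi' = \tau_\xi(\actsyn{\beta}Q)$. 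By \df{new}, the given $\varsigma\in\textit{new}(\nu)$ and $\upsilon\in\textit{new}(\xi)$ then have the form $\varsigma = \textit{static}(\rho_\nu)\varsigma_0$ and $\upsilon = \textit{static}(\rho_\xi)\upsilon_0$ for some $\varsigma_0\in\varsigma(P)$ and $\upsilon_0\in\varsigma(Q)$.

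The crucial step is that $\textit{static}$ deletes only the dynamic arguments ${+_\Left},{+_\Right},A{:},\signals r$ and leaves the order of the remaining (static) arguments intact, so it acts as a homomorphism on $\textit{Arg}^*$ fixing the static generators. Since ${|_\D}$ and ${|_\E}$ are static, this gives $\textit{static}(\rho_\nu) = \textit{static}(\sigma){|_\D}\textit{static}(\tau_\nu)$ and $\textit{static}(\rho_\xi) = \textit{static}(\sigma){|_\E}\textit{static}(\tau_\xi)$. Hence $\varsigma$ and $\upsilon$ share the common prefix $\textit{static}(\sigma)$ and then branch into ${|_\D}$ versus ${|_\E}$ with $\{\mathrm{D},\mathrm{E}\}=\{\mathrm{L},\mathrm{R}\}$, while $\textit{static}(\tau_\nu)\varsigma_0$ and $\textit{static}(\tau_\xi)\upsilon_0$ are again synchrons; the definition of $\sconc_d$ then yields $\varsigma \sconc_d \upsilon$, as desired.

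I do not expect a real obstacle: the argument is essentially a computation. The only points requiring care are the bookkeeping that splits each active synchron into a prefix and its base atom so that \df{new} applies verbatim, and the explicit use of the fact that the branching argument common to $\nu$ and $\xi$ is static and hence preserved by $\textit{static}$ -- it is precisely this that keeps the two new synchrons branching at the same parallel composition.
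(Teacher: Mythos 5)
Your proposal is correct and takes essentially the same route as the paper's own proof: decompose $\nu = \sigma{|_\D}\nu'$ and $\xi = \sigma{|_\E}\xi'$ at the parallel argument witnessing $\nu \sconc_d \xi$, and use that ${|_\D},{|_\E}$ are static, so by \df{new} the synchron $\varsigma$ has prefix $\textit{static}(\sigma){|_\D}$ and $\upsilon$ has prefix $\textit{static}(\sigma){|_\E}$, whence $\varsigma \sconc_d \upsilon$. You merely make explicit the bookkeeping the paper leaves implicit, namely splitting each active synchron into an argument prefix and its base atom so that \df{new} applies, and the fact that $\textit{static}$ acts homomorphically on $\textit{Arg}^*$ while fixing static arguments.
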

\begin{proof}
  From $\nu \sconc_d \xi$, we have $\nu=\sigma{|_\D}\nu'$ and $\xi=\sigma{|_\E}\xi'$
    with $\sigma\in\textit{Arg}^*$, $\{\mathrm{D},\mathrm{E}\}=\{\mathrm{L},\mathrm{R}\}$, and $\nu',\xi'$ synchrons.
  By \df{new}, $\varsigma$ has a prefix $\textit{static}(\sigma){|_\D}$ and $\upsilon$ has a prefix $\textit{static}(\sigma){|_\E}$.
  Thus $\varsigma \sconc_d \upsilon$.
\end{proof}

\noindent
The opposite direction of \lem{new sconc_d new} does not hold.

\begin{example}
  Let \plat{$\nu = \xi = (\actsyn{a}(c.{\bf 0} \mid d.{\bf 0}))$}, \plat{$\varsigma = {|_\Left}(\actsyn{c}{\bf 0})$} and \plat{$\upsilon = {|_\Right}(\actsyn{d}{\bf 0})$}.
  Then $\nu,\xi$ are active synchrons, $\varsigma\in\textit{new}(\nu)$, $\upsilon\in\textit{new}(\xi)$ and $\varsigma \sconc_d \upsilon$.
  However, $\nu \nsconc_d \xi$.
  \hfill$\lrcorner$
\end{example}

\noindent
The next result says that the inherited synchrons after a transition are always concurrent with the new ones.

\begin{lemma}\label{lem:new sconc_d @}\rm
  Let $w\in\en(P)$.
  If $\varsigma\in\textit{new}(w)$ and $\upsilon\in\varsigma(P)@w$ then $\varsigma \sconc_d \upsilon$.
\end{lemma}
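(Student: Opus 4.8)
The plan is to dispose of a degenerate case first and then reduce the whole statement to a short comparison of the \emph{prefixes} of the two synchrons $\varsigma$ and $\upsilon$.

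First I would treat the case $\ell(w)\in\B{:}\djcup\bar{\Sig}$. By \lem{synchrons of transition} every synchron of such a $w$ has label in $\B{:}\djcup\bar{\Sig}$, hence is passive by \df{active and necessary synchrons}, so $a\varsigma(w)=\emptyset$. Then \df{new} gives $\textit{new}(w)=\emptyset$ and the claim holds vacuously. Since $\ell(w)\in\Lab=\textit{Act}\djcup\B{:}\djcup\bar{\Sig}$, it remains to handle $\ell(w)\in\textit{Act}$, where $a\varsigma(w)\neq\emptyset$.

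Next I would unfold the two hypotheses. From $\varsigma\in\textit{new}(w)$ and \df{new} there is an active synchron $\nu=\sigma(\actsyn{\alpha}Q)\in a\varsigma(w)$ with $\varsigma=\textit{static}(\sigma)\chi$ for some $\chi\in\varsigma(Q)$. From $\upsilon\in\varsigma(P)@w$ and \df{@} there is $\upsilon_0\in\varsigma(P)$ with $\upsilon_0\saconc_d w$ and $\upsilon=\upsilon_0@w=\upsilon_0@\xi$, where $\xi\in a\varsigma(w)$ is the active synchron closest to $\upsilon_0$. Because $\upsilon_0\saconc_d w$ and $\nu\in a\varsigma(w)$, we have $\upsilon_0\sconc_d\nu$; write this as $\upsilon_0=\rho{|_\E}\upsilon_0'$ and $\nu=\rho{|_\D}\nu'$ with $\{\mathrm{D},\mathrm{E}\}=\{\mathrm{L},\mathrm{R}\}$, so $\rho$ is the (maximal) common prefix of $\upsilon_0$ and $\nu$.

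The heart of the proof is then a prefix computation. Since $\nu=\rho{|_\D}\nu'=\sigma(\actsyn{\alpha}Q)$, the string $\sigma$ starts with $\rho{|_\D}$, and as ${|_\D}$ is static (\df{dynamic and static arguments}) the synchron $\varsigma=\textit{static}(\sigma)\chi$ has prefix $\textit{static}(\rho){|_\D}$. For $\upsilon$ I would use the maximality in the word ``closest'': the common prefix $\pi$ of $\upsilon_0$ and $\xi$ satisfies $|\pi|\ge|\rho|$, so $\pi$ either equals $\rho$ or, being a prefix of $\upsilon_0=\rho{|_\E}\upsilon_0'$, extends $\rho{|_\E}$; in both cases $\textit{static}(\pi)$ begins with $\textit{static}(\rho){|_\E}$, whence $\upsilon=\upsilon_0@\xi$ has prefix $\textit{static}(\rho){|_\E}$. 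As $\mathrm{D}\neq\mathrm{E}$, the shared prefix $\textit{static}(\rho)$ followed by the diverging parallel arguments ${|_\D}$ versus ${|_\E}$ witnesses $\varsigma\sconc_d\upsilon$, as required. I expect the only genuinely delicate point — the main obstacle — to be that the closest active synchron $\xi$ need not be the synchron $\nu$ that produced $\varsigma$; the prefix argument above is precisely what absorbs this asymmetry, via $|\pi|\ge|\rho|$ and the fact that $@$ only deletes dynamic arguments and therefore commutes with taking the prefix $\textit{static}(\rho){|_\E}$.
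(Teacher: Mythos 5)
Your proof is correct and takes essentially the same route as the paper's: both extract the active synchron $\nu$ behind $\varsigma\in\textit{new}(w)$ and the pre-image $\upsilon_0\in\varsigma(P)$ of $\upsilon$, use $\upsilon_0 \saconc_d w$ to obtain the decomposition $\upsilon_0=\rho{|_\E}\upsilon_0'$, $\nu=\rho{|_\D}\nu'$, and conclude that $\varsigma$ and $\upsilon$ diverge at ${|_\D}$ versus ${|_\E}$ after the common prefix $\textit{static}(\rho)$ --- the paper merely glosses over the point you spell out, namely that the closest active synchron $\xi$ used by $@w$ need not be $\nu$. One cosmetic slip: in the case $\pi=\rho$ it is not $\textit{static}(\pi)$ that begins with $\textit{static}(\rho){|_\E}$, but rather $\upsilon=\textit{static}(\rho){|_\E}\upsilon_0'$ directly by \df{@}; your stated conclusion that $\upsilon$ has the prefix $\textit{static}(\rho){|_\E}$ nonetheless holds in both cases, so the argument goes through.
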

\begin{proof}
  From $\varsigma\in\textit{new}(w)$ we obtain $\nu\in a\varsigma(w)$ such that $\varsigma\in\textit{new}(\nu)$.
  From $\upsilon\in\varsigma(P)@w$ we obtain $\xi\in\varsigma(P)$ with $\xi \saconc_d w$ such that $\upsilon=\xi@w$.
  From $\xi \saconc_d w$ we have $\xi \sconc_d \nu$.
  Thus $\nu=\sigma{|_\D}\nu'$ and $\xi=\sigma{|_\E}\xi'$
    with $\sigma\in\textit{Arg}^*$, $\{\mathrm{D},\mathrm{E}\}=\{\mathrm{L},\mathrm{R}\}$, and $\nu',\xi'$ synchrons.
  From $\varsigma\in\textit{new}(\nu)$ we have $\varsigma=\textit{static}(\sigma){|_\D}\varsigma'$ for some synchron $\varsigma'$.
  From $\upsilon=\xi@w$ and the fact that $\nu$ shares the prefix $\sigma$ with $\xi$, we have $\upsilon=\textit{static}(\sigma){|_\E}\upsilon'$ for some synchron $\upsilon'$.
  Thus $\varsigma \sconc_d \upsilon$.
\end{proof}

\begin{corollary}\label{cor:@ and new are disjoint}\rm
  Let $w\in\en(P)$.
  $\varsigma(P)@w$ and $\textit{new}(w)$ are disjoint.
\end{corollary}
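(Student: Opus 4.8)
The plan is to derive the corollary directly from \lem{new sconc_d @} together with the irreflexivity of the direct concurrency relation $\sconc_d$. First I would record the key observation that no synchron is concurrent with itself, i.e.\ $\varsigma \nsconc_d \varsigma$ for every synchron $\varsigma$. This is immediate from \df{sconc}: for $\varsigma \sconc_d \upsilon$ to hold one needs a common prefix $\sigma$ after which $\varsigma$ and $\upsilon$ branch in \emph{opposite} directions of a parallel composition, i.e.\ $\varsigma = \sigma{|_\D}\varsigma'$ and $\upsilon = \sigma{|_\E}\upsilon'$ with $\{\mathrm{D},\mathrm{E}\} = \{\mathrm{L},\mathrm{R}\}$. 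Since a single synchron cannot simultaneously continue with ${|_\Left}$ and with ${|_\Right}$ at one and the same position, $\varsigma \sconc_d \varsigma$ is impossible.

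With irreflexivity in hand, I would argue by contradiction. Suppose $\varsigma(P)@w$ and $\textit{new}(w)$ are not disjoint, and pick a synchron $\varsigma$ in their intersection. Applying \lem{new sconc_d @} with the chosen $\varsigma \in \textit{new}(w)$ and taking $\upsilon \coloneqq \varsigma \in \varsigma(P)@w$ yields $\varsigma \sconc_d \varsigma$, contradicting the irreflexivity established above. Hence the intersection is empty, which is exactly the claim.

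I expect no real obstacle here: the entire content of the corollary is already carried by \lem{new sconc_d @}, and the only extra ingredient is the trivial irreflexivity of $\sconc_d$. The one point that warrants a line of care is spelling out why $\sconc_d$ is irreflexive straight from its definition, so that instantiating the lemma at $\upsilon = \varsigma$ genuinely produces an impossibility rather than a vacuously satisfiable statement.
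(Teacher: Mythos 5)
Your proof is correct and is exactly the paper's argument: the paper also derives the corollary from \lem{new sconc_d @} together with the irreflexivity of $\sconc_d$, which you have merely spelled out in more detail (including the easy check that irreflexivity follows from \df{sconc}, since $\{\mathrm{D},\mathrm{E}\}=\{\mathrm{L},\mathrm{R}\}$ forces the two synchrons to branch in opposite directions at some position, which a single synchron cannot do).
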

\begin{proof}
  This follows from \lem{new sconc_d @} and the fact that $\sconc_d$ is irreflexive.
\end{proof}

\noindent
We conclude this section with an algorithm to calculate the set of synchrons of a state $Q$
  reached after a doing a transition $w \mathbin{\in} \en{P}$ as a function of $w$ and the set of synchrons of $P$\!.

\begin{theorem}\label{thm:synchrons of target}\rm
  If $w \mathop{\in} \Tr$, $P \mathbin{=} \source(w)$ and $Q \mathbin{=} \target(w)$
    then $\varsigma(Q) \mathbin{=} \varsigma(P)@w \djcup \textit{new}(w)$.
\end{theorem}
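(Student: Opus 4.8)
The plan is to prove the set equation $\varsigma(Q)=\varsigma(P)@w\cup\textit{new}(w)$ by structural (well-founded) induction on the derivation $w$; that the union is disjoint is exactly Corollary~\ref{cor:@ and new are disjoint}, so it suffices to establish the two inclusions, which I obtain together as an equality. Throughout I would read the shapes of $\varsigma(w)$, $a\varsigma(w)$, $\varsigma(P)$ and $\varsigma(Q)$ off their defining clauses, using \lem{synchrons of transition} to control which labels (and hence which active synchrons) can occur. Before entering the induction proper I would dispose of the case $\ell(w)\in\B{:}\djcup\bar{\Sig}$ in one stroke and without any appeal to the induction hypothesis: by \lem{synchrons of transition} every synchron of such a $w$ is passive, so $a\varsigma(w)=\emptyset$ and hence $\textit{new}(w)=\emptyset$; by \df{@} the operation $@w$ is the identity on every synchron; and a discard or signal-emission transition never changes state, so $\source(w)=\target(w)$. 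Thus $\varsigma(Q)=\varsigma(P)=\varsigma(P)@w\djcup\emptyset$. This settles the leaves $b{:}{\bf 0}$, $b{:}\alpha.P$, $P\sigsyn{s}$ and every larger derivation carrying such a label at once.

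The remaining case is $\ell(w)\in\textit{Act}$. The base case $w=\actsyn{\alpha}P$, with $\source(w)=\alpha.P$ and $\target(w)=P$, is direct: $a\varsigma(w)=\{(\actsyn{\alpha}P)\}$ gives $\textit{new}(w)=\varsigma(P)$, the unique active synchron is not concurrent with itself, and the discard synchrons of $\alpha.P$ are atoms carrying no $|$-argument and so are not concurrent with it either, whence $\varsigma(\alpha.P)@w=\emptyset$ and $\varsigma(P)=\emptyset\cup\varsigma(P)$. Every inductive shape is built from its subderivation(s) by prefixing a single context argument $\iota$ (one of ${|_\Left},{|_\Right},{+_\Left},{+_\Right},{\backslash L},[f],A{:},{\signals r}$), and I would exploit that $\varsigma$, $a\varsigma$, $\textit{new}$ and $@w$ all distribute over $\iota$. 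Concretely, a synchron $\iota\varsigma'$ survives $w$ iff $\varsigma'$ survives the subderivation $t$, and for a \emph{static} $\iota$ one has $(\iota\varsigma')@w=\iota(\varsigma'@t)$, whereas for a \emph{dynamic} $\iota$ (see \df{dynamic and static arguments}) $\textit{static}(\iota\sigma)=\textit{static}(\sigma)$, so $\iota$ is stripped: $(\iota\varsigma')@w=\varsigma'@t$ and $\textit{new}(\iota\upsilon)=\textit{new}(\upsilon)$. Feeding in the induction hypothesis $\varsigma(\target(t))=\varsigma(\source(t))@t\djcup\textit{new}(t)$ then collapses $\varsigma(P)@w\djcup\textit{new}(w)$ onto $\varsigma(\target(w))$: for the restriction, relabelling, recursion and signalling contexts this is immediate (the extra passive synchron $(P\sigsyn{r})$ of a signalling source, being an atom, is not concurrent with the active synchron and is correctly destroyed); for $t|Q$ and $P|u$ the idle side $\varsigma(Q)$ resp.\ $\varsigma(P)$ survives verbatim; and for the choice contexts $t+Q$ and $P+u$ the synchrons of the discarded summand begin with the opposite $+$-argument, are therefore not concurrent with any active synchron of $w$ and get destroyed — matching the disappearance of the unchosen branch — while each survivor loses its $+_\D$ prefix, turning $\varsigma(\source(t))@t$ into precisely the relevant part of $\varsigma(\target(w))$.

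I expect the main obstacle to be the synchronisation derivations $w=t|u$ arising from \myref{Comm} and especially \myref{Bro}, where both sides may carry active synchrons and a broadcast transition may have many synchrons. The delicate point is the `closest active synchron' clause of \df{@}: for a left synchron ${|_\Left}\varsigma'$ I must show that its closest synchron in $a\varsigma(w)={|_\Left}a\varsigma(t)\djcup{|_\Right}a\varsigma(u)$ lies on the left whenever $a\varsigma(t)\neq\emptyset$, so that $({|_\Left}\varsigma')@w={|_\Left}(\varsigma'@t)$, while when $a\varsigma(t)=\emptyset$ — possible in \myref{Bro} when $t$ is a pure discard, or in \myref{Comm} when $t$ is a signal emission, i.e.\ $\ell(t)\in\B{:}\djcup\bar{\Sig}$ and $@t$ is the identity — the closest active synchron is on the right and $({|_\Left}\varsigma')@w={|_\Left}\varsigma'={|_\Left}(\varsigma'@t)$ still holds; symmetrically for right synchrons. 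Once this bookkeeping is verified, it yields uniformly $\varsigma(P_\Left|P_\Right)@w={|_\Left}(\varsigma(P_\Left)@t)\djcup{|_\Right}(\varsigma(P_\Right)@u)$ and $\textit{new}(w)={|_\Left}\textit{new}(t)\djcup{|_\Right}\textit{new}(u)$, where $\source(w)=P_\Left|P_\Right$, and two applications of the induction hypothesis together with $\varsigma(\target(t)|\target(u))={|_\Left}\varsigma(\target(t))\djcup{|_\Right}\varsigma(\target(u))$ close the case.
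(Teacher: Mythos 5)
Your proof is correct and takes essentially the same route as the paper's: it too first disposes of $\ell(w)\in\B{:}\djcup\bar{\Sig}$ outright, takes \plat{$w=\actsyn{\alpha}P$} as the base, and in each context case distributes $@$ and $\textit{new}$ over the outermost argument via the static/dynamic distinction, your structural induction on the derivation $w$ being interchangeable with the paper's induction on $\varsigma(w)$ under the order $\Sigma'<\Sigma$ iff $\iota\Sigma'\subseteq\Sigma$. If anything, you are more explicit than the paper on the `closest active synchron' bookkeeping in the \myref{Comm}/\myref{Bro} cases (where one side may lack active synchrons) and on disjointness via Corollary~\ref{cor:@ and new are disjoint}, points the paper leaves implicit in its one-line identities such as ${|_\D}\varsigma \,@\, {|_\D}\upsilon = {|_\D}\,\varsigma@\upsilon$.
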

\begin{proof}
  First suppose $\ell(w)\in\B{:}\djcup\bar{\Sig}$.
  Then $a\varsigma(w)=\emptyset$ and $Q=P$.
  We have $\varsigma(P)@w=\varsigma(P)$ and $\textit{new}(w)=\emptyset$.
  Hence $\varsigma(Q) = \varsigma(P) = \varsigma(P)@w \djcup \textit{new}(w)$.

  Now suppose $\ell(w)\in\textit{Act}$.
  We proceed by structural induction on $\varsigma(w)$,
    using the same well-founded order $<$ on sets of synchrons as in the proof of \thm{P-complete}.

  \vspace{1ex}
  \noindent
  \emph{Induction base}:
  Suppose \plat{$\varsigma(w)=\{(\actsyn{\alpha}Q')\}$} for some $\alpha\in\textit{Act}$ and $Q'\in\cT$.
  Then $P=\alpha.Q'$ and $Q=Q'$.
  We have $\varsigma(P)@w=\emptyset$ and $\textit{new}(w)=\varsigma(Q')$.
  Hence $\varsigma(Q) = \varsigma(Q') = \varsigma(P)@w \djcup \textit{new}(w)$.

  \vspace{1ex}
  \noindent
  \emph{Induction step}:
  Given $w \mathop{\in} \Tr$, $P \mathbin{=} \source(w)$ and $Q \mathbin{=} \target(w)$,
    we assume that $\varsigma(Q') \mathbin{=} \varsigma(P')@w' \djcup \textit{new}(w')$ for all
    $w' \mathop{\in} \Tr$, $P' \mathbin{=} \source(w')$ and $Q' \mathbin{=} \target(w')$ such that $\varsigma(w')<\varsigma(w)$.

  \vspace{1ex}
  \noindent
  Suppose $\varsigma(w)={+_\Left}\Sigma_\Left$ for some $\Sigma_\Left$.
  Then $P=P_\Left{+}P_\Right$ and $w=w_\Left{+}P_\Right$ for some $P_\Left,P_\Right\in\cT$ and $w_\Left\in\en(P_\Left)$
    with $\varsigma(w_\Left)=\Sigma_\Left$.
  Thus $Q=Q_\Left$ for some $Q_\Left\in\cT$ with $Q_\Left=\target(w_\Left)$.
  Using that ${+_\Left}\varsigma \,@\, {+_\Left}\upsilon = \varsigma@\upsilon$ whenever $\varsigma \sconc_d \upsilon$,
    we have $\varsigma(P)@w = \varsigma(P_\Left)@w_\Left$ and $\textit{new}(w) = \textit{new}(w_\Left)$.
  By (IH), $\varsigma(Q_\Left) = \varsigma(P_\Left)@w_\Left \djcup \textit{new}(w_\Left)$.
  Hence $\varsigma(Q) = \varsigma(Q_\Left) = \varsigma(P)@w \djcup \textit{new}(w)$.

  \vspace{1ex}
  \noindent
  Suppose $\varsigma(w)={+_\Right}\Sigma_\Right$ for some $\Sigma_\Right$.
  The proof is similar to that of the previous case.

  \vspace{1ex}
  \noindent
  Suppose $\varsigma(w)={|_\Left}\Sigma_\Left$ for some $\Sigma_\Left$.
  Then $P=P_\Left|P_\Right$ and $w=w_\Left|P_\Right$ for some $P_\Left,P_\Right\in\cT$ and $w_\Left\in\en(P_\Left)$
    with $\varsigma(w_\Left)=\Sigma_\Left$.
  Thus $Q=Q_\Left|P_\Right$ for some $Q_\Left\in\cT$ with $Q_\Left=\target(w_\Left)$.
  Using that ${|_\Left}\varsigma \,@\, {|_\Left}\upsilon = {|_\Left}\,\varsigma@\upsilon$ whenever $\varsigma \sconc_d \upsilon$,
    we have $\varsigma(P)@w = {|_\Left}\,\varsigma(P_\Left)@w_\Left \djcup {|_\Right}\varsigma(P_\Right)$ and $\textit{new}(w) = {|_\Left}\textit{new}(w_\Left)$.
  By (IH), $\varsigma(Q_\Left) = \varsigma(P_\Left)@w_\Left \djcup \textit{new}(w_\Left)$.
  Hence $\varsigma(Q) = {|_\Left}\varsigma(Q_\Left) \djcup {|_\Right}\varsigma(P_\Right) = \varsigma(P)@w \djcup \textit{new}(w)$.

  \vspace{1ex}
  \noindent
  Suppose $\varsigma(w)={|_\Right}\Sigma_\Right$ for some $\Sigma_\Right$.
  The proof is similar to that of the previous case.

  \vspace{1ex}
  \noindent
  Suppose $\varsigma(w) = {|_\Left}\Sigma_\Left \djcup {|_\Right}\Sigma_\Right$ for some non-empty $\Sigma_\Left,\Sigma_\Right$.
  Then $P=P_\Left|P_\Right$ and $w=w_\Left|w_\Right$ for some $P_\Left,P_\Right\in\cT$, $w_\Left\in\en(P_\Left)$ and $w_\Right\in\en(P_\Right)$
    with $\varsigma(w_\Left)=\Sigma_\Left$ and $\varsigma(w_\Right)=\Sigma_\Right$.
  Thus $Q=Q_\Left|Q_\Right$ for some $Q_\Left,Q_\Right\in\cT$ with $Q_\Left=\target(w_\Left)$ and $Q_\Right=\target(w_\Right)$.
  Using that ${|_\D}\varsigma \,@\, {|_\D}\upsilon = {|_\D}\,\varsigma@\upsilon$ for $\mathrm{D}\in\{\mathrm{L},\mathrm{R}\}$ whenever $\varsigma \sconc_d \upsilon$,
    we have $\varsigma(P)@w = {|_\Left}\,\varsigma(P_\Left)@w_\Left \djcup {|_\Right}\,\varsigma(P_\Right)@w_\Right$ and
    $\textit{new}(w) = {|_\Left}\textit{new}(w_\Left) \djcup {|_\Right}\textit{new}(w_\Right)$.
  By (IH), $\varsigma(Q_\Left) = \varsigma(P_\Left)@w_\Left \djcup \textit{new}(w_\Left)$ and
    $\varsigma(Q_\Right) = \varsigma(P_\Right)@w_\Right \djcup \textit{new}(w_\Right)$.
  Hence $\varsigma(Q) = {|_\Left}\varsigma(Q_\Left) \djcup {|_\Right}\varsigma(Q_\Right) = \varsigma(P)@w \djcup \textit{new}(w)$.

  \vspace{1ex}
  \noindent
  Suppose $\varsigma(w)={\backslash L}\,\Sigma'$ for some $L\subseteq\Ch\djcup\Sig$ and $\Sigma'$.
  Then $P=P'\backslash L$ and $w=w'\backslash L$ for some $P'\in\cT$ and $w'\in\en(P')$ with $\varsigma(w')=\Sigma'$.
  Thus $Q=Q'\backslash L$ for some $Q'\in\cT$ with $Q'=\target(w')$.
  Using that ${\backslash L}\varsigma \,@\, {\backslash L}\upsilon = {\backslash L}\,\varsigma@\upsilon$ whenever $\varsigma \sconc_d \upsilon$,
    we have $\varsigma(P)@w = {\backslash L}\,\varsigma(P')@w'$ and $\textit{new}(w) = {\backslash L}\,\textit{new}(w')$.
  By (IH), $\varsigma(Q') = \varsigma(P')@w' \djcup \textit{new}(w')$.
  Hence $\varsigma(Q) = {\backslash L}\,\varsigma(Q') = \varsigma(P)@w \djcup \textit{new}(w)$.

  \vspace{1ex}
  \noindent
  Suppose $\varsigma(w)=[f]\Sigma'$ for some relabelling $f$ and $\Sigma'$.
  The proof is similar to that of the previous case.

  \vspace{1ex}
  \noindent
  Suppose $\varsigma(w)=A{:}\Sigma'$ for some $A\in\A$ and $\Sigma'$.
  Then $P=A$, \plat{$A \defis P'$} and $w=A{:}w'$ for some $P'\in\cT$ and $w'\in\en(P')$ with $\varsigma(w')=\Sigma'$.
  Thus $Q=Q'$ for some $Q'\in\cT$ with $Q'=\target(w')$.
  Using that $A{:}\varsigma \,@\, A{:}\upsilon = \varsigma@\upsilon$ whenever $\varsigma \sconc_d \upsilon$,
    we have $\varsigma(P)@w = \varsigma(P')@w'$ and $\textit{new}(w) = \textit{new}(w')$.
  By (IH), $\varsigma(Q') = \varsigma(P')@w' \djcup \textit{new}(w')$.
  Hence $\varsigma(Q) = \varsigma(Q') = \varsigma(P)@w \djcup \textit{new}(w)$.

  \vspace{1ex}
  \noindent
  Suppose $\varsigma(w)={{}\signals s}\,\Sigma'$ for some $r\in\Sig$ and $\Sigma'$.
  The proof is similar to that of the previous case.
\end{proof}

\subsection{What Happens to a Transition after a Concurrent Transition?}

If a process $P$ enables a transition $t$ that is not affected by a transition $w\in\en(P)$,
  then a variant $t'$ of $t$ ought to remain enabled in the state $\target(w)$ reached after executing $w$.
We denote this by $t \ssleadsto_w t'$.
Unlike the situation for synchrons in \Sec{after}, the transition $t'$ is not fully determined by $t$ and $w$.

\begin{example}\label{ex:discard-receive}
  Let $P = a.(b?.P_1 + b?.P_2) \mid b!.{\bf 0}$, \plat{$t = b{:}a.(b?.P_1 + b?.P_2) \mid (\actsyn{b!}{\bf 0})$},
    and let $w$ be the transition $(\actsyn{a}(b?.P_1 + b?.P_2)) \mid b!.{\bf 0}$.
  Since the $b{:}$-synchron of $t$ is not necessary, $t$ is not affected by $w$, \ie $t \ssaconc w$.
  So there must be a transition $t'$ with $t \ssleadsto_w t'$.
  However, the process $\target(w)$ enables exactly two $b!$-transitions, and they are equally plausible candidates for $t'$.

  The same example applies when leaving out the parallel component $b!.{\bf 0}$.
  In that case $\ell(t)=b{:}$ and $\ell(t')=b?$\,.
  \hfill$\lrcorner$
\end{example}

\begin{definition}[Possible Successor Transition]\label{df:ssleadsto}\rm
  A {\transition} $t'$ is a \emph{possible successor} of a {\transition} $t$ after a {\transition} $w$,
    notation $t \ssleadsto_w t'$, iff
  \begin{enumerate}
    \item $\source(t)=\source(w)$,
    \item $\source(t')=\target(w)$,
    \item $t \ssaconc w$\label{ssaconc},
    \item for all $\varsigma\in\varsigma(t)$ with $\varsigma \saconc_d w$, one has $\varsigma@w\in\varsigma(t')$,
      \label{fourth} and
    \item $\ell(t)=\ell(t') \lor \exists\, b\in\B.~ \{\ell(t),\ell(t')\}=\{b?,b{:}\}$.\label{labels}%
      \footnote{This allows a broadcast-receive {\transition} to be a possible successor of a broadcast-discard {\transition},
        and vice versa, as required for \ex{discard-receive}b.}
    \vspace{2ex}
  \end{enumerate}
\end{definition}
The first two conditions are obvious, and the third states that a successor $t'$ of $t$ after $w$ exists
  only if $t$ is unaffected by $w$.
The fourth condition requires that all synchrons $\varsigma$ of $t$ that are unaffected by $w$,
  return in the form $\varsigma@w$ as a synchron of $t'$.
The last condition states that the label of $t'$ should be the same as that of $t$,
  with the exception of Footnote~\thefootnote.
We would have imposed one more condition, but by the following lemma it is implied.

\begin{lemma}\label{lem:label condition}\rm
  If $t \ssleadsto_w t'$ then $|n\varsigma(t)| = |n\varsigma(t')|$ and $@w$ is a bijection between the necessary synchrons of $t$ and $t'$.
\end{lemma}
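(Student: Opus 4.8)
The plan is to show that the assignment $\varsigma\mapsto\varsigma@w$ restricts to an injection from $n\varsigma(t)$ into $n\varsigma(t')$, and then that the two sets are equinumerous, so that this injection is automatically a bijection. The fact driving everything is that $@w$ \emph{preserves labels}: whenever $\varsigma\saconc_d w$ one has $\ell(\varsigma@w)=\ell(\varsigma)$. This is immediate from \df{@}: if $\ell(w)\in\B{:}\djcup\bar{\Sig}$ then $\varsigma@w=\varsigma$, while for $\ell(w)\in\textit{Act}$ the operation rewrites $\varsigma=\sigma{|_\D}\varsigma'$ into $\textit{static}(\sigma){|_\D}\varsigma'$, altering only the prefix and leaving the head (which determines the label) untouched.

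First I would establish the injection. Let $\varsigma\in n\varsigma(t)$. By \df{ssleadsto} we have $t\ssaconc w$, hence $\varsigma\saconc_d w$ by \df{sconc}; the synchron-transfer clause of \df{ssleadsto} then yields $\varsigma@w\in\varsigma(t')$. Since $\ell(\varsigma@w)=\ell(\varsigma)\notin\B?\djcup\B{:}$ by label preservation, $\varsigma@w$ is again necessary, so $\varsigma@w\in n\varsigma(t')$; thus $@w$ maps $n\varsigma(t)$ into $n\varsigma(t')$. As $n\varsigma(t)\subseteq\varsigma(t)\subseteq\varsigma(\source(t))=\varsigma(\source(w))$ by \lem{synchrons of source}, with all these synchrons unaffected by $w$, \lem{injective} gives injectivity of $@w$ on $n\varsigma(t)$. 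In particular $|n\varsigma(t)|\le|n\varsigma(t')|$.

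It then remains to check $|n\varsigma(t)|=|n\varsigma(t')|$, for which I would read the possible cardinalities off \lem{synchrons of transition}: a transition has exactly one necessary synchron when its label lies in $\Ch\djcup\bar{\Ch}\djcup\Sig\djcup\bar{\Sig}\djcup\B!$, none when its label lies in $\B?\djcup\B{:}$, and one or two when its label is $\tau$—one if $\varsigma(t)$ is a singleton and two if $t$ is a synchronisation. By the label clause of \df{ssleadsto}, either $\ell(t)=\ell(t')$ or $\{\ell(t),\ell(t')\}=\{b?,b{:}\}$; in the latter case both cardinalities are $0$, and whenever the common label is anything other than $\tau$ it fixes the cardinality uniquely, so the two agree. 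The delicate case, and the main obstacle, is $\ell(t)=\ell(t')=\tau$, where a priori a transition with one necessary synchron could map to one with two. Here label preservation settles it: if $t$ is a synchronisation then $|n\varsigma(t)|=2$ and $2\le|n\varsigma(t')|\le2$ forces equality; and if $t$ carries a single necessary $\tau$-synchron $\varsigma$, then $\varsigma@w\in\varsigma(t')$ has label $\tau$, which by \lem{synchrons of transition} forbids $t'$ from being a synchronisation (both of whose synchrons would carry labels in $\Ch\djcup\bar{\Ch}\djcup\Sig\djcup\bar{\Sig}$), so $|n\varsigma(t')|=1$. In every case $|n\varsigma(t)|=|n\varsigma(t')|$, and an injection between finite sets of equal cardinality is a bijection.
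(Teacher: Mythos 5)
Your proposal is correct and takes essentially the same route as the paper's proof: both get the injection from Lemma~\ref{lem:injective} combined with Condition~4 of Definition~\ref{df:ssleadsto} and the label-preservation of $@w$, and both then force $|n\varsigma(t)|=|n\varsigma(t')|$ by playing the classification of Lemma~\ref{lem:synchrons of transition} against the label clause (Condition~5) of Definition~\ref{df:ssleadsto}. The only difference is presentational: you resolve the delicate $\tau$-versus-synchronisation case directly by tracking the $\tau$-label of the transported necessary synchron, where the paper derives the same contradictions by assuming $|n\varsigma(t)|<|n\varsigma(t')|$ under the classification of $n\varsigma(t')$.
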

\begin{proof}
  By \lem{injective} $|n\varsigma(t)| \leq |n\varsigma(t')|$ and $@w$ is an injection from $n\varsigma(t)$ to $n\varsigma(t')$.
  It remains to show that $|n\varsigma(t)| = |n\varsigma(t')|$.
  Note that $\ell(\varsigma@w)=\ell(\varsigma)$ for all synchrons $\varsigma$ with $\varsigma \saconc_d w$,
    and thus for all $\varsigma\in n\varsigma(t)$.

  By \lem{synchrons of transition}, either (i) $n\varsigma(t')=\emptyset$ and $\ell(t')\in\B?\djcup\B{:}$,
    or (ii) $n\varsigma(t')=\{\upsilon\}$ for a synchron $\upsilon$ with $\ell(\upsilon)=\ell(t')\notin\B?\djcup\B{:}$,
    or (iii) $|n\varsigma(t')| = 2$ and $\ell(t')=\tau$, yet $\ell(\upsilon)\neq\tau$ for all $\upsilon\in n\varsigma(t')$.
  A similar classification applies to $n\varsigma(t)$.

  In Case (i) $|n\varsigma(t)| = |n\varsigma(t')|$ follows from $|n\varsigma(t)| \leq |n\varsigma(t')|$, and we are done.

  In Case (ii) $|n\varsigma(t)| < |n\varsigma(t')|$ would imply that $n\varsigma(t)=\emptyset$ and $\ell(t)\in\B?\djcup\B{:}$,
    contradicting Requirement~\ref{labels} of \df{ssleadsto}.

  In Case (iii) $|n\varsigma(t)| < |n\varsigma(t')|$ would imply that either $n\varsigma(t)=\emptyset$,
    leading to a contradiction just as above, or $n\varsigma(t)=\{\varsigma\}$ for some synchron $\varsigma$.
  In the latter case $\varsigma@w\in n\varsigma(t')$, so $\ell(t) = \ell(\varsigma) = \ell(\varsigma@w) \neq \tau = \ell(t')$,
    again a contradiction.
\end{proof}

\noindent
The main result of this appendix, delivered in \Sec{coincide}, says that the relation $\ssleadsto$ defined above
  coincides with the relation $\leadsto$ of \df{leadsto}.
This justifies the clauses of \df{leadsto}.

The concurrency relation $\aconc$ in \Sec{ltss} is defined as the projection of $\leadsto$
  on its first two arguments.
The next theorem says that likewise $\ssaconc$ is the projection of $\ssleadsto$ on its first two arguments.
As a result, ${\ssleadsto}={\leadsto}$ will imply ${\ssaconc}={\aconc}$.

\begin{theorem}\label{thm:relationship between ssaconc ssleadsto}\rm
  $t \ssaconc w$ iff $\exists\, t'.~ t \ssleadsto_w t'$.
\end{theorem}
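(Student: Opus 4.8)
The `if' direction is immediate: if $t \ssleadsto_w t'$ for some $t'$, then $t \ssaconc w$ is literally Requirement~\ref{ssaconc} of \df{ssleadsto}. So the content lies in the `only if' direction. Assume $t \ssaconc w$ and write $P \coloneqq \source(t) = \source(w)$ and $Q \coloneqq \target(w)$. The plan is to build a $Q$-complete set of synchrons $\Sigma'$ with $\varsigma(t)@w \subseteq \Sigma' \subseteq \varsigma(Q)$ and with a label compatible with $\ell(t)$; \thm{P-complete} then delivers a transition $t' \in \en(Q)$ with $\varsigma(t') = \Sigma'$, and the five requirements of \df{ssleadsto} drop out. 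Requirements~1 and~2 hold by the choices of $P$ and $Q$; Requirement~\ref{ssaconc} is the hypothesis; Requirement~\ref{fourth} holds because $\varsigma(t)@w = \{\varsigma@w \mid \varsigma\in\varsigma(t) \land \varsigma\saconc_d w\}$ is contained in $\Sigma' = \varsigma(t')$ by construction; and Requirement~\ref{labels} will follow from how the label of $\Sigma'$ is controlled. Throughout I use that $@w$ preserves labels (as observed in the proof of \lem{label condition}) and that $\varsigma(t)@w \subseteq \varsigma(P)@w \subseteq \varsigma(Q)$ by \thm{synchrons of target}.

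I split along the three cases of \lem{synchrons of transition}. In the first two cases every synchron of $t$ is necessary, so $t \ssaconc w$ makes every synchron survive $w$, and I take $\Sigma' \coloneqq \varsigma(t)@w$. That this is $Q$-complete in the sense of clause~1 (respectively clause~2) of \df{P-complete} follows from label preservation, from \lem{@ sconc_d @} (turning $\varsigma \sconc_d \upsilon$ into $\varsigma@w \sconc_d \upsilon@w$, needed for the synchronisation case), and from \lem{injective} (keeping the two synchrons distinct). In both cases $\ell(t') = \ell(t)$, so Requirement~\ref{labels} holds with equality.

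The broadcast case, $\ell(t) \in \{b!,b?,b{:}\}$, is the crux. Here the $b?$- and $b{:}$-synchrons of $t$ are \emph{not} necessary, so $t \ssaconc w$ says nothing about them, yet $\Sigma'$ must be a full $Q$-complete set of broadcast kind that still contains every surviving synchron in $\varsigma(t)@w$. The set $\varsigma(t)@w$ already meets the first three bullets of the third clause of \df{P-complete}: all its labels lie in $\{b!,b?,b{:}\}$; it has at most one $b!$-synchron (when $\ell(t)=b!$ this is the surviving image of the unique, necessary, hence surviving $b!$-synchron of $t$); and it is pairwise concurrent except between discards, inherited from $t$ through \lem{@ sconc_d @}. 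The plan is to extend $\varsigma(t)@w$ to a $Q$-complete $\Sigma'$ by first fixing a maximal family of pairwise-concurrent active synchrons of $Q$ that includes the surviving active synchrons of $t$, and then adjoining precisely the $b?$- and $b{:}$-synchrons of $Q$ forced by the last two bullets of that clause. No $b!$-synchron is ever added, so $\Sigma'$ contains a $b!$-synchron iff $\varsigma(t)@w$ does; thus $\ell(t')=b!$ when $\ell(t)=b!$, and $\ell(t')\in\{b?,b{:}\}$ otherwise, in which case Requirement~\ref{labels} is met through its exception $\{\ell(t),\ell(t')\}=\{b?,b{:}\}$.

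I expect the broadcast completion to be the only real obstacle. The added receive-synchrons must be chosen consistently at choice-nodes --- two $b?$-synchrons on opposite sides of a $+$ are non-concurrent and so cannot both enter $\Sigma'$ --- and one has to verify that the resulting set satisfies \emph{all} conditions of the third clause of \df{P-complete}, not merely its two saturation bullets. \ex{discard-receive} shows that several legitimate completions may coexist; any one of them yields a valid $t'$, so it suffices to argue that at least one exists, which is exactly where care is needed.
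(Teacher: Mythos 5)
Your overall architecture is exactly the paper's: the `if' direction from Requirement~\ref{ssaconc} of \df{ssleadsto}; the `only if' direction by classifying $\varsigma(t)$ via \thm{P-complete}, pushing the surviving synchrons through $@w$ into $\varsigma(\target(w))$ (\thm{synchrons of target}), handling the first two cases with label preservation, \lem{@ sconc_d @} and \lem{injective}, and in the broadcast case saturating $\varsigma(t)@w$ first with receive-synchrons and then with discard-synchrons before retrieving $t'$ from \thm{P-complete}. Your label bookkeeping (no $b!$-synchron is ever added, so $\ell(t')=b!$ iff $\ell(t)=b!$, and otherwise the exception $\{\ell(t),\ell(t')\}=\{b?,b{:}\}$ of Requirement~\ref{labels} applies) also matches the paper's closing argument.

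However, there is a genuine gap, and it sits precisely where you say ``care is needed'' and then stop. The saturation bullets of \df{P-complete}.\ref{third} quantify only over \emph{active} synchrons, so a receive-synchron $\varsigma$ forced into $\Sigma$ is automatically concurrent with the actives already present --- but nothing in your construction guarantees $\varsigma \sconc_d \upsilon$ for the inherited \emph{discard}-synchrons $\upsilon \in \varsigma(t)@w$, which the third bullet demands for any mixed pair. This is not a routine check one may defer: since Requirement~\ref{fourth} of \df{ssleadsto} mandates $\varsigma(t)@w \subseteq \varsigma(t')$, a single conflict between a forced $b?$-synchron and an inherited $b{:}$-synchron would make \emph{every} $P$-complete superset of $\varsigma(t)@w$ impossible, and the existence claim would fail outright --- choosing ``consistently at choice-nodes'' resolves only the $b?$-versus-$b?$ ambiguity of \ex{discard-receive}, not this conflict. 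The paper spends the bulk of its broadcast case closing exactly this hole: it splits on whether the forced synchron lies in $\textit{new}(w)$, where \lem{new sconc_d @} gives concurrency with all inherited synchrons, or in $(\varsigma(P){\setminus}\varsigma(t))@w$, where the saturation of $\varsigma(t)$ itself (via \df{P-complete} and \thm{P-complete}) yields an active $\nu\in a\varsigma(t)$ with $\varsigma \nsconc_d \nu$; from $\varsigma \saconc_d w$ (forced w.r.t.\ the actives of $\Sigma$) one deduces $\nu \nsaconc_d w$, extracts $\xi\in a\varsigma(w)$ with $\nu \nsconc_d \xi$, and a three-way comparison of common prefixes rules out two branches as contradicting $\varsigma \saconc_d w$ and in the third derives $\varsigma@w \sconc_d \upsilon@w$ via \lem{@ sconc_d @}. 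Until you supply this argument (or an equivalent), your proposal establishes the two non-broadcast cases but only sketches the one case that carries the theorem.
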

\begin{proof}
  By Condition 3 of \df{ssleadsto}, $\exists\, t'.~ t \ssleadsto_w t' \implies t \ssaconc w$ trivially holds.
  It remains to prove $t \ssaconc w \implies \exists\, t'.~ t \ssleadsto_w t'$.
  So assume $t \ssaconc w$ and let $P = \source(t) = \source(w)$.
  By \thm{P-complete}, the set $\varsigma(t)$ of synchrons is $P$-complete,
    so one of the three possibilities listed in \df{P-complete} applies.

  In the first case, $\varsigma(t) = n\varsigma(t) = \{\varsigma\}$ with
    $\ell(\varsigma) \in \Ch \djcup \bar{\Ch} \djcup \{\tau\} \djcup \Sig \djcup \bar{\Sig}$, and $\varsigma \saconc_d w$.
  By \thm{synchrons of target}, $\varsigma@w\in\varsigma(\target(w))$.
  Now $\Sigma \coloneqq \{\varsigma@w\}$ is $P$-complete by \df{P-complete}, using that $\ell(\varsigma@w)=\ell(\varsigma)$.
  So by \thm{P-complete} there is a transition $t'$ with $\source(t')=\target(w)$ and $\varsigma(t')=\Sigma$.
  Now $t \ssleadsto_w t'$, as all conditions of \df{ssleadsto} are trivially satisfied.
  In particular, $\ell(t) = \ell(\varsigma) = \ell(\varsigma@w) = \ell(t')$, by \lem{synchrons of transition}.

  In the second case, $\varsigma(t) = n\varsigma(t) = \{\varsigma,\upsilon\}$ with
    $\ell(\varsigma) \in \Ch \djcup \bar{\Ch} \djcup \Sig \djcup \bar{\Sig}$, $\ell(\upsilon)=\overline{\ell(\varsigma)}$, and
    $\varsigma \sconc_d \upsilon$.
  Moreover, $\varsigma \saconc_d w$ and $\upsilon \saconc_d w$.
  By \thm{synchrons of target}, $\varsigma@w,\upsilon@w \in \varsigma(\target(w))$.
  From $\varsigma \sconc_d \upsilon$ we have $\varsigma@w \sconc_d \upsilon@w$ by \lem{@ sconc_d @}.
  So $\Sigma \coloneqq \{\varsigma@w,\upsilon@w\}$ is $P$-complete by \df{P-complete}.
  By \thm{P-complete} there is a transition $t'$ with $\source(t')=\target(w)$ and $\varsigma(t')=\Sigma$.
  Now $t \ssleadsto_w t'$, as all conditions of \df{ssleadsto} are trivially satisfied.
  In particular, $\ell(t) = \tau = \ell(t')$, by \lem{synchrons of transition}.

  Finally suppose the third case applies, for some $b \mathop{\in} \B$.
  Let $\Sigma=\varsigma(t)@w$.
  Then $\Sigma \subseteq \varsigma(\target(w))$ by \thm{synchrons of target}.
  Note that $\Sigma$ satisfies the first two requirements of \df{P-complete}.\ref{third}, since $\varsigma(t)$ satisfies them.
  By \lem{@ sconc_d @} the third requirement also holds for $\Sigma$.

  Now we extend $\Sigma$ to $\Sigma'$ to by adding one synchron at a time.
  Namely, if there exists any $\varsigma\in\varsigma(\target(w))$ with $\ell(\varsigma)=b?$ and $\varsigma \sconc_d \upsilon$
    for each active synchron $\upsilon$ currently in $\Sigma$, then we add $\varsigma$ to $\Sigma$.
  The set $\Sigma'$ has been obtained when no further such synchrons $\varsigma$ can be found.
  By construction, $\Sigma'$ satisfies the first four requirements of \df{P-complete}.\ref{third}.

  In particular, a newly added $b?$-synchron will not affect any $b{:}$-synchron already existing in $\Sigma$.
  Note that according to \thm{synchrons of target}, this $b?$-synchron must be
    either (a) a member of $\textit{new}(w)$ or (b) a member of $(\varsigma(P){\setminus}\varsigma(t))@w$.
  In Case (a), it will be concurrent with all the existing $b{:}$-synchrons in $\Sigma$ thanks to \lem{new sconc_d @}.
  In Case (b), it will be $\varsigma@w$ for some $\varsigma \in \varsigma(P){\setminus}\varsigma(t)$ with $\varsigma \saconc_d w$.
  Since $\varsigma\notin\varsigma(t)$, by \df{P-complete} and \thm{P-complete} we can obtain $\nu\in a\varsigma(t)$ such that $\varsigma \nsconc_d \nu$.
  If $\nu \saconc_d w$, then $\nu@w\in\Sigma$ and, by \lem{@ sconc_d @}, $\varsigma@w \nsconc_d \nu@w$,
    contradicting the requirement that $\varsigma@w$ is not affected by any active synchron currently in $\Sigma$.
  So $\nu \nsaconc_d w$. We obtain $\xi\in a\varsigma(w)$ with $\nu \nsconc_d \xi$.
  \begin{enumerate}
    \item If the common prefix shared by $\varsigma$ and $\nu$ is shorter than that shared by $\nu$ and $\xi$,
          we have $\varsigma=\sigma{+_\D}\varsigma'$ $\nu=\sigma{+_\E}\sigma'{+_{\D'}}\nu'$ and $\xi=\sigma{+_\E}\sigma'{+_{\E'}}\xi'$
          with $\sigma,\sigma'\in\textit{Arg}^*$, $\{\mathrm{D},\mathrm{E}\} = \{\mathrm{D}',\mathrm{E}'\} = \{\mathrm{L},\mathrm{R}\}$,
          and $\varsigma',\nu',\xi'$ synchrons.
      Then $\varsigma \nsconc_d \xi$, contradicting $\varsigma \saconc_d w$.
    \item If the common prefix shared by $\varsigma$ and $\nu$ is the same as that shared by $\nu$ and $\xi$,
        we have $\varsigma=\sigma{+_\D}\varsigma'$ $\nu=\sigma{+_\E}\nu'$ and $\xi=\sigma{+_\D}\xi'$
        with $\sigma\in\textit{Arg}^*$, $\{\mathrm{D},\mathrm{E}\}=\{\mathrm{L},\mathrm{R}\}$, and $\varsigma',\nu',\xi'$ synchrons.
      A $b{:}$-synchrons in $\Sigma$ must be $\upsilon@w$ for some $\upsilon\in\varsigma(t)$ with $\ell(\upsilon)=b{:}$ and $\upsilon \saconc_d w$.
      From $\upsilon \saconc_d w$ we have $\upsilon \sconc_d \xi$; by \df{P-complete} and \thm{P-complete}, $\upsilon \sconc_d \nu$.
      Thus $\upsilon$ has a prefix $\sigma'{|_{\D'}}$ and $\nu$ has a prefix $\sigma'{|_{\E'}}$
        with $\{\mathrm{D}',\mathrm{E}'\}=\{\mathrm{L},\mathrm{R}\}$.
      Moreover, $\sigma'$ must be a proper prefix of $\sigma$, as otherwise $\upsilon \nsconc_d \xi$.
      Hence $\varsigma \sconc_d \upsilon$.
      By \lem{@ sconc_d @} $\varsigma@w \sconc_d \upsilon@w$.
    \item If the common prefix shared by $\varsigma$ and $\nu$ is longer than that shared by $\nu$ and $\xi$,
          we have $\varsigma=\sigma{+_\D}\sigma'{+_{\D'}}\varsigma'$ $\nu=\sigma{+_\D}\sigma'{+_{\E'}}\nu'$ and $\xi=\sigma{+_\E}\xi'$
          with $\sigma,\sigma'\in\textit{Arg}^*$, $\{\mathrm{D},\mathrm{E}\} = \{\mathrm{D}',\mathrm{E}'\} = \{\mathrm{L},\mathrm{R}\}$,
          and $\varsigma',\nu',\xi'$ synchrons.
      Then $\varsigma \nsconc_d \xi$, contradicting $\varsigma \saconc_d w$.
  \end{enumerate}

  Next we extend $\Sigma'$ to $\Sigma''$ by adding one synchron at a time.
  Namely, if there exists any $\varsigma\in\varsigma(\target(w))$ with $\ell(\varsigma)=b{:}$ and $\varsigma \sconc_d \upsilon$
    for each active synchron $\upsilon$ currently in $\Sigma'$, then we add $\varsigma$ to $\Sigma'$.
  The set $\Sigma''$ has been obtained when no further such synchrons $\varsigma$ can be found.
  By construction, $\Sigma''$ satisfies all requirements of \df{P-complete}.\ref{third}, and thus is $P$-complete.

  By \thm{P-complete} there is a transition $t'$ with $\source(t')=\target(w)$ and $\varsigma(t')=\Sigma''$.
  Now $t \ssleadsto_w t'$, as all conditions of \df{ssleadsto} are satisfied.
  Regarding the last condition, if $n\varsigma(t)\neq\emptyset$ then $n\varsigma(t')\neq\emptyset$ and $\ell(t) = b! = \ell(t')$.
  If $n\varsigma(t)=\emptyset$ then $n\varsigma(t')=\emptyset$ and $\ell(t),\ell(t') \in \{b?,b{:}\}$, by \lem{synchrons of transition}.
  Either way, the last condition holds.
\end{proof}

\subsection*{Digression: A Successor Relation between Synchrons}

In this section we provide an alternative form of Condition~\ref{fourth} of \df{ssleadsto}, which does not use the function $@$.
We use it to relate our $\ssleadsto$-relation with the $\leadsto$-relation from \cite{synchrons}.

\begin{definition}[Possible Successor Synchron~\cite{synchrons}]\label{df:sleadsto}\rm
  A synchron $\varsigma'$ is a \emph{possible successor} of a synchron $\varsigma$,
  notation $\varsigma \sleadsto \varsigma'$,
  if either $\varsigma'=\varsigma$ or
  $\varsigma=\sigma{|_\D}\varsigma''$ with $\sigma\in\textit{Arg}^*$, $\mathrm{D}\in\{\mathrm{L},\mathrm{R}\}$, and $\varsigma''$ a synchron,
  and $\varsigma'=\textit{static}(\sigma){|_\D}\varsigma''$.
\end{definition}

\begin{lemma}\label{lem:@}\rm
  Let $w\in\Tr$, $P=\source(w)$, $Q=\target(w)$, $\varsigma\in\varsigma(P)$ and $\varsigma'\in\varsigma(Q)$,
    such that $\varsigma \saconc_d w$.
  Then $\varsigma \sleadsto \varsigma'$ iff $\varsigma'=\varsigma@w$.
\end{lemma}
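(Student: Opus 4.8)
The plan is to split on the label of $w$ and treat the two implications separately. If $\ell(w)\in\B{:}\djcup\bar{\Sig}$ then $\target(w)=\source(w)$, so $Q=P$ and $\varsigma@w=\varsigma$ by \df{@}; the claim then reduces to showing that the only $\sleadsto$-successor of $\varsigma$ lying in $\varsigma(P)$ is $\varsigma$ itself. The substantive case is $\ell(w)\in\textit{Act}$, where $a\varsigma(w)\neq\emptyset$ and, writing $\upsilon^\ast$ for the active synchron of $w$ closest to $\varsigma$ and $\varsigma=\sigma^\ast{|_\D}\varsigma^\ast$ for the decomposition at the point where $\varsigma$ and $\upsilon^\ast$ diverge, we have $\varsigma@w=\varsigma@\upsilon^\ast=\textit{static}(\sigma^\ast){|_\D}\varsigma^\ast$ by \df{@}.

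First I would dispatch the direction $\varsigma'=\varsigma@w\Rightarrow\varsigma\sleadsto\varsigma'$ by reading off the definitions. In the discard/emission case $\varsigma@w=\varsigma$ matches the first clause of \df{sleadsto}; in the $\textit{Act}$ case $\varsigma@w=\textit{static}(\sigma^\ast){|_\D}\varsigma^\ast$ is exactly of the shape demanded by the second clause of \df{sleadsto}, taking $\sigma:=\sigma^\ast$ and $\varsigma'':=\varsigma^\ast$. Along the way this records, via \thm{synchrons of target} and $\varsigma\saconc_d w$, that $\varsigma@w\in\varsigma(P)@w\subseteq\varsigma(Q)$, so $\varsigma@w$ is genuinely a $\sleadsto$-successor of $\varsigma$ inside $\varsigma(Q)$.

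For the converse I assume $\varsigma\sleadsto\varsigma'$ with $\varsigma'\in\varsigma(Q)$ and aim at $\varsigma'=\varsigma@w$. Since the forward direction already exhibits $\varsigma@w$ as a $\sleadsto$-successor of $\varsigma$ lying in $\varsigma(Q)$, it suffices to prove that $\varsigma$ has \emph{at most one} $\sleadsto$-successor in $\varsigma(Q)$. Using \thm{synchrons of target} I would split $\varsigma(Q)=\varsigma(P)@w\djcup\textit{new}(w)$, rule out that any $\sleadsto$-successor of $\varsigma$ lands in $\textit{new}(w)$ -- a new synchron descends below a prefixing guard fired by $w$, whereas a successor of $\varsigma$ retains the tail of $\varsigma$ below its branch point, with \lem{new sconc_d @} and Corollary~\ref{cor:@ and new are disjoint} supplying the obstruction -- and then analyse the inherited case, where \lem{injective} guarantees that distinct unaffected synchrons of $P$ have distinct images under $@w$.

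The main obstacle is exactly this uniqueness. A single synchron generally has several distinct $\sleadsto$-successors, one per $|$-argument chosen as branch point, obtained by deleting the dynamic arguments preceding it. The crux is that precisely one of them lies in $\varsigma(Q)$, namely the one cutting at the divergence $\sigma^\ast{|_\D}$ with the closest active synchron $\upsilon^\ast$. I would establish this by comparing an arbitrary admissible decomposition $\varsigma=\sigma{|_\D}\varsigma''$ with $\sigma^\ast$: if $\sigma$ is a proper prefix of $\sigma^\ast$, then $\varsigma''$ still carries a dynamic argument sitting in the context shared with $\upsilon^\ast$, which is resolved when $w$ fires, so $\textit{static}(\sigma){|_\D}\varsigma''$ retains an argument absent from every synchron of $\varsigma(Q)$; if instead the chosen $|$ lies strictly inside $\varsigma^\ast$, the successor either coincides with $\varsigma@w$ (when no dynamic argument is skipped) or deletes a dynamic argument that no active synchron of $w$ touches, again leaving $\varsigma(Q)$. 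Turning these ``resolved versus preserved dynamic argument'' statements into formal arguments against the inductive description $\varsigma(Q)=\varsigma(P)@w\djcup\textit{new}(w)$ of \thm{synchrons of target}, using \df{dynamic and static arguments}, is the technical heart; the remaining steps are routine bookkeeping.
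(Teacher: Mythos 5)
Your skeleton is sound: the `if' direction is read off \df{@} and \df{sleadsto} exactly as in the paper, and reducing the `only if' direction to ``$\varsigma$ has at most one $\sleadsto$-successor in $\varsigma(Q)$'' is a correct reformulation. But there is a genuine gap at precisely the point you flag as the technical heart, and the lemmas you enlist do not perform the roles you assign them. \lem{injective} says that \emph{distinct} unaffected synchrons of $P$ have \emph{distinct} residuals under $@w$; what your inherited case actually needs is the converse-flavoured statement that if a $\sleadsto$-successor of $\varsigma$ equals $\xi@w$ for some unaffected $\xi\in\varsigma(P)$, then $\xi=\varsigma$. Injectivity is silent on this: it cannot exclude a different $\xi$ whose residual coincides, as a string, with a non-canonical statification of $\varsigma$ (think of a choice whose two branches carry identical subterms), so that case must be killed by a string comparison you never supply. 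Similarly, your claim that a wrongly cut successor ``retains an argument absent from every synchron of $\varsigma(Q)$'' is too strong as stated -- the offending dynamic argument may perfectly well occur in synchrons of $Q$ elsewhere; what matters is its \emph{position} along the path, and making that positional statement formal is exactly the step left open.

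The paper closes this hole with one structural observation that your plan circles but never states, and which makes the whole decomposition machinery unnecessary. Any $\varsigma'$ with $\varsigma \sleadsto \varsigma'$, and also $\varsigma@w$ itself, is obtained from $\varsigma$ by deleting the dynamic arguments in a prefix; hence if $\varsigma'\neq\varsigma@w$, their first differing argument pairs a \emph{dynamic} argument on one side against a \emph{static} one on the other (the common tail below the deeper cut begins with ${|_\Left}$ or ${|_\Right}$). On the other hand, two synchrons of the same process $Q$ -- two paths in the unfolded parse tree of $Q$ -- can first differ only as $\textit{op}_\Left$ versus $\textit{op}_\Right$ for a single binary operator $\textit{op}\in\{+,|\}$, and both $+_\Left,+_\Right$ are dynamic while both $|_\Left,|_\Right$ are static, so no such difference is possible. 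Since $\varsigma@w\in\varsigma(Q)$ -- the only use the paper makes of \thm{synchrons of target} here, as opposed to your heavier appeal to the split $\varsigma(Q)=\varsigma(P)@w\djcup\textit{new}(w)$ -- this forces $\varsigma'=\varsigma@w$ in one stroke, with no case distinction on $\ell(w)$, no separate exclusion of $\textit{new}(w)$, and no use of \lem{injective} or \lem{new sconc_d @}. This same observation is also what would discharge both membership claims in your route; but once you have it, the detour through the decomposition buys you nothing.
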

\begin{proof}
  By \df{@} $\varsigma \sleadsto \varsigma@w$, so `only if' is trivial.

  For `if', suppose $\varsigma \sleadsto \varsigma'$.
  Then $\varsigma'$ and $\varsigma@w$ are both obtained from $\varsigma$ by deleting a
  prefix of their dynamic arguments.
  So $\varsigma' = \sigma_1\sigma'_2\varsigma_3$ and $\varsigma@w = \sigma_1\sigma_2\varsigma_3$
    with either $\sigma'_2 = \textit{static}(\sigma_2)$ or $\sigma_2 = \textit{static}(\sigma'_2)$.
  Moreover, if $\sigma_2$ is non-empty, then $\varsigma_3$ must start with an argument ${|_\Left}$ or ${|_\Right}$.
  Furthermore, since $\varsigma@w \in \varsigma(Q)$ by \thm{synchrons of target},
    both $\varsigma'$ and $\varsigma@w$ describe paths in the (unfolded) parse tree of $Q$.
  So the first argument on which they differ, if any, must be $\textit{op}_\Left$ versus $\textit{op}_\Right$
    for the same binary \ABCdE operator $\textit{op}\in\{+,|\}$.
  Together, these conditions imply that $\varsigma'=\varsigma@w$.
\end{proof} 

\noindent
Consequently, Condition~\ref{fourth} of \df{ssleadsto} can be restated as
  \begin{enumerate}
    \item[\ref{fourth}.] for all $\varsigma\in\varsigma(t)$ with $\varsigma \saconc_d w$,
      there exists a $\varsigma'\in\varsigma(t')$ with $\varsigma \sleadsto \varsigma'$.
    \vspace{2ex}
  \end{enumerate}

\noindent
Recall that $\Tr^{s\bullet} \subseteq \Tr$ is the set of transitions $t$ with $\ell(t)\notin\B?\djcup\B{:}$\,.
In \cite{synchrons} a binary relation ${\leadsto} \subseteq \Tr^{s\bullet} \times \Tr^{s\bullet}$ was defined by
  $t \leadsto t'$ iff $|n\varsigma(t)| = |n\varsigma(t')|$ and
  $\forall \varsigma' \mathbin{\in} n\varsigma(t').~ \exists \varsigma \mathbin{\in} n\varsigma(t).~ \varsigma \sleadsto \varsigma'$.
By Lemmas~\ref{lem:label condition} and~\ref{lem:@}
  it follows that $t \ssleadsto_w t'$ for $t,t' \in \Tr^{s\bullet}$ and $w\in\Tr$ implies $t \leadsto t'$.

\subsection{The two Successor Relations between Transitions Coincide}\label{sec:coincide}

Theorems~\ref{thm:leadsto implies ssleadsto} and~\ref{thm:ssleadsto implies leadsto} below
  yield the main result of this appendix, that ${\leadsto}$ equals ${\ssleadsto}$.

\begin{theorem}\label{thm:leadsto implies ssleadsto}\rm
  ${\leadsto} \subseteq {\ssleadsto}$.
  I.e., if $\chi \leadsto_\zeta \chi'$ then $\chi \ssleadsto_\zeta \chi'$.
\end{theorem}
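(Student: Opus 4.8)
The plan is to prove the inclusion by \emph{rule induction} on the derivation witnessing $\chi \leadsto_\zeta \chi'$: since by \df{leadsto} the relation ${\leadsto}$ is the \emph{smallest} relation closed under Clauses~1--11, it suffices to show that each clause preserves the five conditions of \df{ssleadsto}. Conditions~1 and~2, namely $\source(\chi)=\source(\zeta)$ and $\source(\chi')=\target(\zeta)$, are immediate from the side conditions imposed in \df{leadsto} together with the way $\source$ and $\target$ act on composed transitions, such as $\source(t|Q)=\source(t)|Q$, $\target(v+Q)=\target(v)$ and $\target(A{:}v)=\target(v)$. Condition~5, the label clause, follows because every transition-forming operator ($\cdot|Q$, $P|\cdot$, $\cdot+Q$, $A{:}\cdot$, $\cdot\backslash L$, $\cdot[f]$, $\cdot\signals r$ and the binary variants) preserves labels, together with the induction hypothesis, which already supplies $\ell(t)=\ell(t')$ or $\{\ell(t),\ell(t')\}=\{b?,b{:}\}$ for each subderivation. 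Hence the real content sits in Condition~3 ($\chi \ssaconc \zeta$, \ie every \emph{necessary} synchron of $\chi$ is concurrent with every \emph{active} synchron of $\zeta$) and Condition~4 (every synchron of $\chi$ unaffected by $\zeta$ reappears, via $@\zeta$, among the synchrons of $\chi'$).

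For the axiom clauses (those with no $\leadsto$-premise: 1, 2, 5, 6 and~7) I would argue directly from the synchron calculus. In Clause~1, $\ell(\zeta)\in\B{:}\djcup\bar{\Sig}$ gives $a\varsigma(\zeta)=\emptyset$, so $\chi \ssaconc \zeta$ holds vacuously and $\varsigma@\zeta=\varsigma$ by \df{@}, settling Condition~4. In Clauses~2,~5 and~6 the source transition $\chi$ is labelled $b?$ or $b{:}$, hence by \lem{synchrons of transition} and \df{active and necessary synchrons} has \emph{no} necessary synchrons, making Condition~3 vacuous; Condition~4 is also vacuous there, because the synchrons of $\chi$ fail to be concurrent with the active synchrons of $\zeta$ --- in Clause~2 because neither is nested inside a common parallel composition, and in Clauses~5 and~6 because they stem from opposite summands of a $+$, which by \df{sconc} are never concurrent (here the side condition $\ell(w)\notin\bar{\Sig}$ is essential, since via \df{@} it forces $a\varsigma(\zeta)\neq\emptyset$, so that no synchron of $\chi$ is genuinely unaffected). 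Clause~7, where $\chi$ and $\zeta$ sit on opposite sides of a $|$, is the dual situation: two synchrons differing already in their first $|$-argument \emph{are} concurrent, so Condition~3 holds, and each synchron ${|_\Left}\varsigma$ of $\chi$ satisfies ${|_\Left}\varsigma @ \zeta = {|_\Left}\varsigma \in \varsigma(\chi')$.

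In the genuinely inductive clauses (3, 4, 8, 9, 10, 11), whose conclusion is built from one or two subderivations $t\leadsto_v t'$ by an operator, I would transport the hypothesis $t\ssleadsto_v t'$ (and, in the binary clauses, $u\ssleadsto_w u'$) through the operator using the structural equalities for synchrons. Since $n\varsigma$ and $a\varsigma$ commute with each argument prefix, Condition~3 reduces --- after stripping the common prefix from each pair consisting of a $\chi$-synchron and a $\zeta$-active-synchron --- to the hypotheses $t\ssaconc v$ (and $u\ssaconc w$), the opposite-side-of-$|$ subcase being automatic as in Clause~7. For Condition~4 I would invoke the computation rules for the after-function justified by \df{@} and the static/dynamic split of \df{dynamic and static arguments} --- for instance ${|_\Left}\varsigma @ {|_\Left}\upsilon = {|_\Left}(\varsigma @ \upsilon)$ for the static argument ${|_\Left}$, and $A{:}\varsigma @ A{:}\upsilon = \varsigma @ \upsilon$ for the dynamic argument $A{:}$ --- so that the image under $@\zeta$ of a surviving synchron of $\chi$ is exactly the operator applied to its image under $@v$, which belongs to $\varsigma(\chi')$ by the induction hypothesis. \lem{@ sconc_d @} is needed to guarantee that these images retain the concurrency relations, ensuring that the ``closest active synchron'' selected by \df{@} on the composite $\zeta$ really is the prefixed version of the one selected on $v$.

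The step I expect to be the main obstacle is the bookkeeping in the broadcast clauses 2--6, where Conditions~3--5 must be reconciled simultaneously: the fact that $\sconc_d$ holds across opposite $|$-sides but fails across opposite $+$-summands, the non-necessity of $b?$- and $b{:}$-synchrons, the side conditions $\ell(v),\ell(w)\notin\bar{\Sig}$, and the possibility (permitted by Condition~5 of \df{ssleadsto}) that a $b?$-label turns into a $b{:}$-label and vice versa, all interact. A secondary subtlety concerns the clauses involving the \emph{dynamic} arguments ${+_\Left}$, ${+_\Right}$, $A{:}$ and $\signals r$: there $@\zeta$ strips a prefix, so one must verify that the active synchron of $\zeta$ closest to a given synchron of $\chi$ is the operator-image of the one closest in the subderivation --- a check resting on the longest-common-prefix analysis underlying \lem{injective} and \lem{@ sconc_d @}. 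Once these are pinned down, the remaining clauses are a routine, if lengthy, case analysis mirroring the eleven clauses of \df{leadsto}.
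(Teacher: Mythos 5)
Your proposal follows essentially the same route as the paper's proof: a structural (rule) induction over the clauses of \df{leadsto}, with Clauses 1, 2, 5, 6 and 7 as the base cases and the rest as inductive steps, verifying the five conditions of \df{ssleadsto} through the same vacuity arguments for the broadcast/discard clauses (no necessary synchrons when $\ell(\chi)\in\B?\djcup\B{:}$, non-concurrency across $+$-summands, $a\varsigma(\zeta)\neq\emptyset$ forced by $\ell(w)\notin\bar{\Sig}$) and the same prefix-commutation laws for the after-function (e.g.\ ${+_\Left}\varsigma \,@\, {+_\Left}\upsilon = \varsigma@\upsilon$ and ${|_\Left}\varsigma \,@\, {|_\Left}\upsilon = {|_\Left}\,\varsigma@\upsilon$ whenever $\varsigma \sconc_d \upsilon$). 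One small wording slip does not affect correctness: relabelling does not literally preserve labels ($\ell(t[f])=f(\ell(t))$), but since $f$ sends $\{b?,b{:}\}$ to $\{f(b)?,f(b){:}\}$, Condition~5 of \df{ssleadsto} is still inherited, exactly as in the paper's (omitted) Case 11.b.
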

\begin{proof}
  Since the $\leadsto$ relation is defined inductively,
    $\chi \leadsto_\zeta \chi'$ must stem from some rules in \df{leadsto},
    which means at least one statement below holds
    for some $t,t',u,u',v,w\in\Tr$, $P,Q\in\cT$, $\alpha,L,f,A,b,r$ with $\source(t)=\source(v)=P$,
    $\source(u)=\source(w)=Q$, $\source(t')=\target(v)$, $\source(u')=\target(w)$,
    $\alpha\in\textit{Act}$, $L\subseteq\Ch\djcup\Sig$, $f$ a relabelling, $A\in\A$, $b\in\B$ and $r\in\Sig$.
  \setlength\leftmargini{2.5em}
  \begin{enumerate}
    \item [1.a] $\ell(\zeta)\in\B{:}\djcup\bar{\Sig}$, $\source(\zeta)=\source(\chi)$ and $\chi'=\chi$;
    \item [2.a] \plat{$\chi=\actsyn{b?}P$}, \plat{$\zeta=\actsyn{b?}P$} and $\chi'=t$ with $\ell(t)\in\{b?,b{:}\}$;
    \item [2.b] $\chi=b{:}\alpha.P$, \plat{$\zeta=\actsyn{\alpha}P$} and $\chi'=t$ with $\ell(t)\in\{b?,b{:}\}$;
    \item [3.a] $\chi=t+Q$, $\zeta=v+Q$ and $\chi'=t'$ with $\ell(v)\notin\bar{\Sig}$ and $t \leadsto_v t'$;
    \item [3.b] $\chi=t+u$, $\zeta=v+Q$ and $\chi'=t'$ with $\ell(v)\notin\bar{\Sig}$ and $t \leadsto_v t'$;
    \item [4.a] $\chi=P+u$, $\zeta=P+w$ and $\chi'=u'$ with $\ell(w)\notin\bar{\Sig}$ and $u \leadsto_w u'$;
    \item [4.b] $\chi=t+u$, $\zeta=P+w$ and $\chi'=u'$ with $\ell(w)\notin\bar{\Sig}$ and $u \leadsto_w u'$;
    \item [5.a] $\chi=t+Q$, $\zeta=P+w$ and $\chi'=u'$ with $\ell(w)\notin\bar{\Sig}$, $\ell(t)=b?$ and $\ell(u')\in\{b?,b{:}\}$;
    \item [6.a] $\chi=P+u$, $\zeta=v+Q$ and $\chi'=t'$ with $\ell(v)\notin\bar{\Sig}$, $\ell(u)=b?$ and $\ell(t')\in\{b?,b{:}\}$;
    \item [7.a] $\chi=t|Q$, $\zeta=P|w$ and $\chi'=t|\target(w)$;
    \item [7.b] $\chi=P|u$, $\zeta=v|Q$ and $\chi'=\target(v)|u$;
    \item [8.a] $\chi=t|Q$, $\zeta=v|Q$ and $\chi'=t'|Q$ with $t \leadsto_v t'$;
    \item [8.b] $\chi=t|Q$, $\zeta=v|w$ and $\chi'=t'|\target(w)$ with $t \leadsto_v t'$;
    \item [8.c] $\chi=t|u$, $\zeta=v|Q$ and $\chi'=t'|u$ with $t \leadsto_v t'$;
    \item [9.a] $\chi=P|u$, $\zeta=P|w$ and $\chi'=P|u'$ with $u \leadsto_w u'$;
    \item [9.b] $\chi=P|u$, $\zeta=v|w$ and $\chi'=\target(v)|u'$ with $u \leadsto_w u'$;
    \item [9.c] $\chi=t|u$, $\zeta=P|w$ and $\chi'=t|u'$ with $u \leadsto_w u'$;
    \item [10.a] $\chi=t|u$, $\zeta=v|w$ and $\chi'=t'|u'$ with $t \leadsto_v t'$ and $u \leadsto_w u'$;
    \item [11.a] $\chi=t\backslash L$, $\zeta=v\backslash L$ and $\chi'=t'\backslash L$ with $\ell(v)\in\textit{Act}$ and $t \leadsto_v t'$;
    \item [11.b] $\chi=t[f]$, $\zeta=v[f]$ and $\chi'=t'[f]$ with $\ell(v)\in\textit{Act}$ and $t \leadsto_v t'$;
    \item [11.c] $\chi=A{:}t$, $\zeta=A{:}v$ and $\chi'=t'$ with $\ell(v)\in\textit{Act}$ and $t \leadsto_v t'$;
    \item [11.d] $\chi=t\signals r$, $\zeta=v\signals r$ and $\chi'=t'$ with $\ell(v)\in\textit{Act}$ and $t \leadsto_v t'$.
  \end{enumerate}
  The numbers of the above statements correspond to the numbers of the rules in \df{leadsto}.
  For example, Statement 2.a or 2.b holds if $\chi \leadsto_\zeta \chi'$ stems from Rule~\ref{discard base}.

  \vspace{1ex}
  \noindent
  It suffices to prove that each statement above implies $\chi \ssleadsto_\zeta \chi'$
    under the assumptions $t \leadsto_v t' \implies t \ssleadsto_v t'$ and $u \leadsto_w u' \implies u \ssleadsto_w u'$.
  This gives us a structural induction proof of \thm{leadsto implies ssleadsto}.
  Namely, cases for Statements 1.a, 2.a, 2.b, 5.a, 6.a, 7.a and 7.b -- where the two assumptions will not be used -- form the induction base,
    while the other cases represent the induction step with the two assumptions being the induction hypothesis.

  \vspace{1ex}
  \noindent
  By \df{ssleadsto}, in each case we need to prove
  \begin{enumerate}
    \item [(a)] $\source(\chi)=\source(\zeta)$,
    \item [(b)] $\source(\chi')=\target(\zeta)$,
    \item [(c)] $\chi \ssaconc \zeta$,
    \item [(d)] for all $\varsigma\in\varsigma(\chi)$ with $\varsigma \saconc_d \zeta$,
      one has $\varsigma@\zeta\in\varsigma(\chi')$, and
    \item [(e)] $\ell(\chi)=\ell(\chi') \lor \exists\, b\in\B.~ \{\ell(\chi),\ell(\chi')\}=\{b?,b{:}\}$.
  \end{enumerate}

  \vspace{1ex}
  \noindent
  Below we use the number of statements for cases.

  \vspace{1ex}
  \noindent
  \emph{Case 1.a}:
  \begin{itemize}
    \item (a) is given in the statement.
    \item From $\ell(\zeta)\in\B{:}\djcup\bar{\Sig}$ we have $\target(\zeta)=\source(\zeta)$.
      With $\source(\zeta)=\source(\chi)$ and $\chi'=\chi$ we have (b).
    \item From $\ell(\zeta)\in\B{:}\djcup\bar{\Sig}$ we have $\alpha\varsigma(\zeta)=\emptyset$.
      Thus (c) trivially holds.
    \item Suppose $\varsigma\in\varsigma(\chi)$.
      From $\ell(\zeta)\in\B{:}\djcup\bar{\Sig}$ we have $\varsigma@\zeta=\varsigma$.
      Using $\chi'=\chi$ we have (d).
    \item From $\chi'=\chi$ we have $\ell(\chi)=\ell(\chi')$.
      Thus (e) holds.
  \end{itemize}

  \vspace{1ex}
  \noindent
  \emph{Case 2.a}:
  \begin{itemize}
    \item $\source(\chi)=b?.P=\source(\zeta)$.
    \item $\source(\chi')=P=\target(\zeta)$.
    \item We have $\ell(\chi)=b?$\,.
      Then $n\varsigma(\chi)=\emptyset$.
      Thus (c) trivially holds.\vspace{1pt}
    \item We have \plat{$(\actsyn{b?}P) \in a\varsigma(\zeta)$}
        and \plat{$\forall \varsigma\in\varsigma(\chi).~ \varsigma \nsconc_d (\actsyn{b?}P)$}.
      Thus (d) trivially holds.
    \item We have $\ell(\chi)=b?$ and $\ell(\chi')\in\{b?,b{:}\}$.
      Thus (e) holds.
  \end{itemize}

  \vspace{1ex}
  \noindent
  \emph{Case 2.b}:
  \begin{itemize}
    \item $\source(\chi)=\alpha.P=\source(\zeta)$.
    \item $\source(\chi')=P=\target(\zeta)$.
    \item We have $\ell(\chi)=b{:}$\,.
      Then $n\varsigma(\chi)=\emptyset$.
      Thus (c) trivially holds.
    \item We have \plat{$(\actsyn{\alpha}P) \in a\varsigma(\zeta)$}
        and \plat{$\forall \varsigma\in\varsigma(\chi).~ \varsigma \nsconc_d (\actsyn{\alpha}P)$}.
      Thus (d) trivially holds.
    \item We have $\ell(\chi)=b{:}$ and $\ell(\chi')\in\{b?,b{:}\}$.
      Thus (e) holds.
  \end{itemize}

  \vspace{1ex}
  \noindent
  \emph{Case 3.a}:
  \begin{itemize}
    \item $\source(\chi)=P+Q=\source(\zeta)$.
    \item $\source(\chi')=\target(v)=\target(\zeta)$.
    \item Suppose $\varsigma\in n\varsigma(\chi)$ and $\upsilon\in a\varsigma(\zeta)$.
      Then $\varsigma={+_\Left}\varsigma_t$ and $\upsilon={+_\Left}\upsilon_v$
        for some $\varsigma_t\in n\varsigma(t)$ and $\upsilon_v\in a\varsigma(v)$.
      From $t \ssleadsto_v t'$ we have $\varsigma_t \sconc_d \upsilon_v$.
      Then $\varsigma \sconc_d \upsilon$.
      Thus (c) holds.
    \item Suppose $\varsigma\in\varsigma(\chi)$ and $\varsigma \saconc_d \zeta$.
      Then $\varsigma={+_\Left}\varsigma_t$ for some $\varsigma_t\in\varsigma(t)$ with $\varsigma_t \saconc_d v$.
      From $t \ssleadsto_v t'$ we have $\varsigma_t@v\in\varsigma(t')$.
      Using that ${+_\Left}\varsigma' \,@\, {+_\Left}\upsilon' = \varsigma'@\upsilon'$ whenever $\varsigma' \sconc_d \upsilon'$,
        we have $\varsigma@\zeta=\varsigma_t@v$.
      Thus (d) holds.
    \item From $t \ssleadsto_v t'$ we have $\ell(t)=\ell(t') \lor \exists\, b\in\B.~ \{\ell(t),\ell(t')\}=\{b?,b{:}\}$.
      Also, we have $\ell(\chi)=\ell(t)$ and $\ell(\chi')=\ell(t')$.
      Thus (e) holds.
  \end{itemize}

  \vspace{1ex}
  \noindent
  \emph{Case 3.b}:
  \begin{itemize}
    \item $\source(\chi)=P+Q=\source(\zeta)$.
    \item $\source(\chi')=\target(v)=\target(\zeta)$.
    \item We have $\ell(\chi)\in\B{:}$\,.
      Then $n\varsigma(\chi)=\emptyset$.
      Thus (c) trivially holds.
    \item Suppose $\varsigma\in\varsigma(\chi)$ and $\varsigma \saconc_d \zeta$.
      From $\zeta=v+Q$ and $\ell(v)\notin\bar{\Sig}$ we have $\ell(v)\in\textit{Act}$.
      Thus $a\varsigma(\zeta)\neq\emptyset$.
      We have that $\varsigma$ cannot be ${+_\Right}$-prefixed
        or it will be affected by some ${+_\Left}$-prefixed synchrons in $a\varsigma(\zeta)$.
      Thus $\varsigma={+_\Left}\varsigma_t$ for some $\varsigma_t\in\varsigma(t)$ with $\varsigma_t \saconc_d v$.
      From $t \ssleadsto_v t'$ we have $\varsigma_t@v\in\varsigma(t')$.
      Using that ${+_\Left}\varsigma' \,@\, {+_\Left}\upsilon' = \varsigma'@\upsilon'$ whenever $\varsigma' \sconc_d \upsilon'$,
        we have $\varsigma@\zeta=\varsigma_t@v$.
      Thus (d) holds.
    \item From $t \ssleadsto_v t'$ we have $\ell(t)=\ell(t') \lor \exists\, b\in\B.~ \{\ell(t),\ell(t')\}=\{b?,b{:}\}$.
      Also, we have $\ell(\chi)=\ell(t)$ and $\ell(\chi')=\ell(t')$.
      Thus (e) holds.
  \end{itemize}

  \vspace{1ex}
  \noindent
  \emph{Case 4.a}:
  The proof is similar to that for Case 3.a and is omitted.

  \vspace{1ex}
  \noindent
  \emph{Case 4.b}:
  The proof is similar to that for Case 3.b and is omitted.

  \vspace{1ex}
  \noindent
  \emph{Case 5.a}:
  \begin{itemize}
    \item $\source(\chi)=P+Q=\source(\zeta)$.
    \item $\source(\chi')=\target(w)=\target(\zeta)$.
    \item We have $\ell(\chi)=b?$\,.
      Then $n\varsigma(\chi)=\emptyset$.
      Thus (c) trivially holds.
    \item Suppose $\varsigma\in\varsigma(\chi)$.
      Then $\varsigma$ is ${+_\Left}$-prefixed.
      From $\zeta=P+w$ and $\ell(w)\notin\bar{\Sig}$ we have $\ell(w)\in\textit{Act}$.
      Thus $a\varsigma(\zeta)\neq\emptyset$.
      We have that $\varsigma$ will be affected by some ${+_\Right}$-prefixed synchrons in $a\varsigma(\zeta)$.
      Thus (d) trivially holds.
    \item We have $\ell(\chi)=b?$ and $\ell(\chi')\in\{b?,b{:}\}$.
      Thus (e) holds.
  \end{itemize}

  \vspace{1ex}
  \noindent
  \emph{Case 6.a}:
  The proof is similar to that for Case 5.a and is omitted.

  \vspace{1ex}
  \noindent
  \emph{Case 7.a}:
  \begin{itemize}
    \item $\source(\chi)=P|Q=\source(\zeta)$.
    \item $\source(\chi')=P|\target(w)=\target(\zeta)$.
    \item We have that all synchrons in $n\varsigma(\chi)$ are ${|_\Left}$-prefixed and
        all synchrons in $a\varsigma(\zeta)$, if any, are ${|_\Right}$-prefixed.
      Thus (c) holds.
    \item Suppose $\varsigma\in\varsigma(\chi)$ and $\varsigma \saconc_d \zeta$.
      Then $\varsigma={|_\Left}\varsigma_t$ for some $\varsigma_t\in\varsigma(t)$.
      If $\ell(\zeta)\in\bar{\Sig}$, then $\varsigma@\zeta = \varsigma = {|_\Left}\varsigma_t \in \varsigma(t|\target(w))$.
      Otherwise, we have that all synchrons in $a\varsigma(\zeta)$ are ${|_\Right}$-prefixed.
      Using that ${|_\Left}\varsigma' \,@\, {|_\Right}\upsilon' = {|_\Left}\varsigma'$,
        again we have $\varsigma@\zeta = \varsigma = {|_\Left}\varsigma_t \in \varsigma(t|\target(w))$.
      Thus (d) holds.
    \item We have $\ell(\chi)=\ell(t)=\ell(\chi')$.
      Thus (e) holds.
  \end{itemize}

  \vspace{1ex}
  \noindent
  \emph{Case 7.b}:
  The proof is similar to that for Case 7.a and is omitted.

  \vspace{1ex}
  \noindent
  \emph{Case 8.a}:
  \begin{itemize}
    \item $\source(\chi)=P|Q=\source(\zeta)$.
    \item $\source(\chi')=\target(v)|Q=\target(\zeta)$.
    \item Suppose $\varsigma\in n\varsigma(\chi)$ and $\upsilon\in a\varsigma(\zeta)$.
      Then $\varsigma={|_\Left}\varsigma_t$ and $\upsilon={|_\Left}\upsilon_v$
        for some $\varsigma_t\in n\varsigma(t)$ and $\upsilon_v\in a\varsigma(v)$.
      From $t \ssleadsto_v t'$ we have $\varsigma_t \sconc_d \upsilon_v$.
      Then $\varsigma \sconc_d \upsilon$.
      Thus (c) holds.
    \item Suppose $\varsigma\in\varsigma(\chi)$ and $\varsigma \saconc_d \zeta$.
      Then $\varsigma={|_\Left}\varsigma_t$ for some $\varsigma_t\in\varsigma(t)$ with $\varsigma_t \saconc_d v$.

      If $\ell(\zeta) = \ell(v) \in \B{:}\djcup\bar{\Sig}$, then $\varsigma@\zeta = \varsigma = {|_\Left}\varsigma_t \in \varsigma(t|Q)$.
      Also, from $t \leadsto_v t'$ we have $t'=t$.
      Thus (d) holds.

      Otherwise, using that ${|_\Left}\varsigma' \,@\, {|_\Left}\upsilon' = {|_\Left}\,\varsigma'@\upsilon'$ whenever $\varsigma' \sconc_d \upsilon'$,
        we have $\varsigma@\zeta={|_\Left}\,\varsigma_t@v$.
      From $t \ssleadsto_v t'$ we have $\varsigma_t@v\in\varsigma(t')$.
      Thus (d) holds.
    \item From $t \ssleadsto_v t'$ we have $\ell(t)=\ell(t') \lor \exists\, b\in\B.~ \{\ell(t),\ell(t')\}=\{b?,b{:}\}$.
      Also, $\ell(\chi)=\ell(t)$ and $\ell(\chi')=\ell(t')$.
      Thus (e) holds.
  \end{itemize}

  \vspace{1ex}
  \noindent
  \emph{Case 8.b}:
  \begin{itemize}
    \item $\source(\chi)=P|Q=\source(\zeta)$.
    \item $\source(\chi')=\target(v)|\target(w)=\target(\zeta)$.
    \item Suppose $\varsigma\in n\varsigma(\chi)$ and $\upsilon\in a\varsigma(\zeta)$.
      Then $\varsigma={|_\Left}\varsigma_t$ for some $\varsigma_t\in n\varsigma(t)$ and
        either $\upsilon={|_\Left}\upsilon_v$ for some $\upsilon_v\in a\varsigma(v)$
        or $\upsilon={|_\Right}\upsilon_w$ for some $\upsilon_w\in a\varsigma(w)$.
      In the first case we have $\varsigma_t \sconc_d \upsilon_v$ from $t \ssleadsto_v t'$.
      Then $\varsigma \sconc_d \upsilon$.
      In the second case $\varsigma \sconc_d \upsilon$ holds by definition.
      Thus (c) holds.
    \item Suppose $\varsigma\in\varsigma(\chi)$ and $\varsigma \saconc_d \zeta$.
      Then $\varsigma={|_\Left}\varsigma_t$ for some $\varsigma_t\in\varsigma(t)$ with $\varsigma_t \saconc_d v$.

      If $\ell(v)\in\B{:}\djcup\bar{\Sig}$, then
        either $\ell(\zeta)\in\B{:}\djcup\bar{\Sig}$
        or all synchrons in $a\varsigma(\zeta)$ are ${|_\Right}$-prefixed,
        so $\varsigma@\zeta = \varsigma = {|_\Left}\varsigma_t \in \varsigma(t|\target(w))$.
      Also, from $t \leadsto_v t'$ we have $t'=t$.
      Thus~(d) holds.

      Otherwise, using that ${|_\Left}\varsigma' \,@\, {|_\Left}\upsilon' = {|_\Left}\,\varsigma'@\upsilon'$ whenever $\varsigma' \sconc_d \upsilon'$,
        we have $\varsigma@\zeta={|_\Left}\,\varsigma_t@v$.
      From $t \ssleadsto_v t'$ we have $\varsigma_t@v\in\varsigma(t')$.
      Thus (d) holds.
    \item From $t \ssleadsto_v t'$ we have $\ell(t)=\ell(t') \lor \exists\, b\in\B.~ \{\ell(t),\ell(t')\}=\{b?,b{:}\}$.
      Also, $\ell(\chi)=\ell(t)$ and $\ell(\chi')=\ell(t')$.
      Thus (e) holds.
  \end{itemize}

  \vspace{1ex}
  \noindent
  \emph{Case 8.c}:
  The proof is similar to that for Cases 7.b and 8.a/b, and is omitted.

  \vspace{1ex}
  \noindent
  \emph{Case 9.a}:
  The proof is similar to that for Case 8.a and is omitted.

  \vspace{1ex}
  \noindent
  \emph{Case 9.b}:
  The proof is similar to that for Case 8.b and is omitted.

  \vspace{1ex}
  \noindent
  \emph{Case 9.c}:
  The proof is similar to that for Case 8.c and is omitted.

  \vspace{1ex}
  \noindent
  \emph{Case 10.a}:
  The proof is similar to that for Cases 8.b and 9.b, and is omitted.

  \vspace{1ex}
  \noindent
  \emph{Case 11.a}:
  The proof is similar to that for Case 8.a and is omitted.

  \vspace{1ex}
  \noindent
  \emph{Case 11.b}:
  The proof is similar to that for Case 11.a and is omitted.

  \vspace{1ex}
  \noindent
  \emph{Case 11.c}:
  The proof is similar to that for Case 3.a and is omitted.

  \vspace{1ex}
  \noindent
  \emph{Case 11.d}:
  The proof is similar to that for Case 11.c and is omitted.
\end{proof}

\begin{theorem}\label{thm:ssleadsto implies leadsto}\rm
  ${\ssleadsto} \subseteq {\leadsto}$.
  I.e., if $\chi \ssleadsto_\zeta \chi'$ then $\chi \leadsto_\zeta \chi'$.
\end{theorem}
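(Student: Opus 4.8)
The plan is to establish the converse inclusion of \thm{leadsto implies ssleadsto} by running essentially the same case analysis backwards: each clause of \df{leadsto} corresponds to one admissible combination of syntactic shapes for $\chi$ and $\zeta$, and I will show that whenever $\chi \ssleadsto_\zeta \chi'$ holds, the shapes of $\chi$ and $\zeta$ (which are pinned down by their common source $P\coloneqq\source(\chi)=\source(\zeta)$) force exactly that clause to apply and to produce precisely $\chi'$. I would first dispatch the case $\ell(\zeta)\in\B{:}\djcup\bar{\Sig}$ separately: here $\target(\zeta)=\source(\zeta)=P$ and the after-operation $@\zeta$ is the identity, so condition~\ref{fourth} of \df{ssleadsto} gives $\varsigma(\chi)\subseteq\varsigma(\chi')$; a short argument using the $P$-completeness of both $\varsigma(\chi)$ and $\varsigma(\chi')$ (\thm{P-complete} and the receive/discard bullets of \df{P-complete}) upgrades this to $\varsigma(\chi)=\varsigma(\chi')$, whence $\chi'=\chi$ by \lem{transition uniquely identified by source and synchrons}, matching Rule~\ref{signals}. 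For the remaining cases I may assume $\ell(\zeta)\in\textit{Act}$ and proceed by \emph{structural induction on the derivation} $\chi$; I use the derivation structure rather than the subterm order on $P$ precisely so that the recursion clause (where the body of $A\defis P$ need not be a subterm of $A$) is covered, since $t$ is a proper subderivation of $A{:}t$.

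In the induction I would split on the top-level shape of $\chi$. The leaf cases $\chi=\actsyn{b?}P$ and $\chi=b{:}\alpha.P$ with $\zeta$ active are the base cases: here $n\varsigma(\chi)=\emptyset$, so condition~\ref{ssaconc} is vacuous, the unique synchron of $\chi$ is not concurrent with the active synchron of $\zeta$ so condition~\ref{fourth} is vacuous, and condition~\ref{labels} leaves $\ell(\chi')\in\{b?,b{:}\}$ free; thus any $\chi'$ enabled in $\target(\zeta)=P$ with such a label arises, which is exactly what the second part of Rule~\ref{discard base} permits. For the compound shapes ($+$, $|$, $\backslash L$, $[f]$, $A{:}$, $\signals r$) the machinery is always the same: read off the decomposition of $\target(\zeta)$ from the shape of $\zeta$; use \thm{synchrons of target}, $\varsigma(\target(\zeta))=\varsigma(P)@\zeta\djcup\textit{new}(\zeta)$, to locate each synchron of $\chi'$ in the correct component, and \lem{label condition} to identify the necessary synchrons of $\chi'$ as the $@\zeta$-images of those of $\chi$; extract the component successor(s) $t'$ (and $u'$) of $\chi'$; check that the five conditions of \df{ssleadsto} descend, yielding $t\ssleadsto_v t'$ (and $u\ssleadsto_w u'$); invoke the induction hypothesis to get $t\leadsto_v t'$ (and $u\leadsto_w u'$); and finally fire the matching clause of \df{leadsto}. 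Throughout, condition~\ref{ssaconc} is the tool that eliminates incompatible shape-pairs — for instance it rules out $\chi$ and $\zeta$ both stemming from the same summand or the same side of a synchronisation when they would share a necessary active synchron.

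The main obstacle is the broadcast receive/discard traffic, where, as \ex{discard-receive} illustrates, $\chi'$ is genuinely not determined by $\chi$ and $\zeta$. For these transitions $n\varsigma(\chi)=n\varsigma(\chi')=\emptyset$, so \lem{label condition} gives no information, and the non-necessary ($b?$- and $b{:}$-labelled) synchrons of $\chi'$ may differ from those of $\chi$ and of its components. The delicate point is to show that the \emph{flexible} clauses of \df{leadsto} — the second part of Rule~\ref{discard base}, Rules~\ref{sum-recv1} and~\ref{sum-recv2}, and the synchronisation clauses (items~7--10 of \df{leadsto}) — reproduce \emph{exactly} the given $\chi'$ rather than merely some successor. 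Concretely, I would argue that $\varsigma(\chi')$, being $\source(\chi')$-complete by \thm{P-complete}, splits along the parallel/choice structure dictated by $\zeta$ into parts that are themselves $P$-complete and that correspond to component successors admissible by the relevant rule; the asymmetric label clause~\ref{labels} (which allows a $b?$ to become a $b{:}$, and vice versa) must be tracked so that every receive and discard synchron that \df{P-complete} forces into $\chi'$ is indeed generated, and none is spuriously added. Handling this bookkeeping uniformly across the receive, discard, and mixed-synchronisation sub-shapes — and confirming in each that condition~\ref{fourth} together with $P$-completeness leaves exactly the freedom the rules provide — is where the real work of the proof lies; once it is in place, the clause of \df{leadsto} corresponding to the current shape-pair delivers $\chi\leadsto_\zeta\chi'$, completing the induction and hence the inclusion ${\ssleadsto}\subseteq{\leadsto}$.
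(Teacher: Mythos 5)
Your proposal is correct and follows the same overall strategy as the paper's proof: the same separate treatment of $\ell(\zeta)\in\B{:}\djcup\bar{\Sig}$ (where $@\zeta$ is the identity, an extra synchron of $\chi'$ would have to be a $b?$/$b{:}$-synchron by \lem{label condition}, and the maximality clauses of \df{P-complete} then force $\varsigma(\chi')=\varsigma(\chi)$, whence $\chi'=\chi$ by \lem{transition uniquely identified by source and synchrons}), the same base case at a prefixing source, and the same compound-case machinery (\thm{synchrons of target} to locate the synchrons of $\chi'$, descent of the \df{ssleadsto} conditions to component successors, induction hypothesis, then the matching clause of \df{leadsto}). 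The one substantive divergence is the induction measure: you induct on the derivation structure of $\chi$, whereas the paper inducts on $\varsigma(\zeta)$ under the well-founded order $\Sigma'<\Sigma$ iff $\iota\Sigma'\subseteq\Sigma$ for some $\iota\in\textit{Arg}$; both descend properly at every compound clause, including recursion (your measure via $t$ being a proper subderivation of $A{:}t$, the paper's via stripping the argument $A{:}$ from the synchron set), so this is a legitimate alternative handle -- and since $\chi$ and $\zeta$ share a source, driving the case split by the shape of $\chi$ rather than by $\varsigma(\zeta)$ covers the same ground. One remark on emphasis: your worry that the ``flexible'' clauses must reproduce \emph{exactly} the given $\chi'$ is lighter than you suggest for the receive/discard clauses themselves (the second item of \df{leadsto}, and the two sum-receive items), since these quantify over \emph{every} transition with label in $\{b?,b{:}\}$ enabled in the target, so only the label condition of \df{ssleadsto} needs checking there; the genuine spurious-synchron bookkeeping is confined to the synchronisation cases such as $\chi=t|u$, $\zeta=v|Q$, where the paper shows $\varsigma(u')=\varsigma(u)$ (hence $u'=u$) and pins down $\ell(t')$ by a sub-case analysis on $\ell(t)$ using \lem{synchrons of transition} and \thm{P-complete} -- exactly the place your sketch correctly identifies as the real work, though it leaves those details to be carried out.
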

\begin{proof}
  First suppose $\ell(\zeta)\in\B{:}\djcup\bar{\Sig}$.
  From $\chi \ssleadsto_\zeta \chi'$ we have $\source(\chi') = \target(\zeta) = \source(\zeta) = \source(\chi)$.
  By \df{leadsto}(1) and \lem{transition uniquely identified by source and synchrons}, it suffices to show $\varsigma(\chi')=\varsigma(\chi)$.

  By \df{ssleadsto} and \df{@}, $\varsigma(\chi') \supseteq \varsigma(\chi)@\zeta = \varsigma(\chi)$.
  Suppose $\varsigma \in \varsigma(\chi'){\setminus}\varsigma(\chi)$,
    then $\ell(\chi')\in\B?\djcup\B{:}$ and $\ell(\varsigma)\in\B?\djcup\B{:}$ by \lem{label condition}.
  Then $\varsigma(\chi)$ is not $P$-complete by \df{P-complete}; we have a contradiction with \thm{P-complete}.
  So $\varsigma(\chi')=\varsigma(\chi)$.

  \vspace{1ex}
  \noindent
  Now suppose $\ell(\zeta)\in\textit{Act}$.
  We proceed by structural induction on $\varsigma(\zeta)$,
    using the same well-founded order $<$ on sets of synchrons as in the proof of \thm{P-complete}.

  \vspace{1ex}
  \noindent
  \emph{Induction base}:
  Suppose \plat{$\varsigma(\zeta)=\{(\actsyn{\alpha}Q)\}$} for some $\alpha\in\textit{Act}$ and $Q\in\cT$.
  Then \plat{$\zeta=(\actsyn{\alpha}Q)$} and $\source(\chi) = \source(\zeta) = \alpha.Q$.
  By \df{ssleadsto}(\ref{ssaconc}) we have $\chi \ssaconc \zeta$.
  Since all synchrons in $\chi$ will be affected by $\zeta$, $n\varsigma(\chi)$ must be empty; \ie $\ell(\chi)\in\B?\djcup\B{:}$\,.
  So either (1) $\alpha=b?$ and \plat{$\chi=(\actsyn{\alpha}P)$} or (2) $\alpha\neq b?$ and $\chi=b{:}\alpha.P$, with $b\in\B$.
  In both cases, by \df{ssleadsto}(\ref{labels}), we have $\ell(\chi')\in\{b?,b{:}\}$.
  Thus $\chi \leadsto_\zeta \chi'$ by \df{leadsto}(2).

  \vspace{1ex}
  \noindent
  \emph{Induction step}:
  Given $\zeta\in\Tr$, we assume that (IH) $\Upsilon \ssleadsto_{\zeta'} \Upsilon' \implies \Upsilon \leadsto_{\zeta'} \Upsilon'$
    for all $\Upsilon,\zeta',\Upsilon'\in\Tr$ such that $\varsigma(\zeta')<\varsigma(\zeta)$.

  \vspace{1ex}
  \noindent
  Suppose $\varsigma(\zeta)={+_\Left}\Sigma_\Left$ for some $\Sigma_\Left$.
  Then $\zeta=v+Q$ and $\source(\chi) = \source(\zeta) = P+Q$
    for some $P,Q\in\cT$ and $v\in\en(P)$ with $\varsigma(v)=\Sigma_\Left$.
  So either (1) $\chi=t+Q$, (2) $\chi=P+u$ or (3) $\chi=t+u$, with $t\in\en(P)$ and $u\in\en(Q)$.

  In Cases (1) and (3), by \df{leadsto}(3) and (IH), it suffices to prove $t \ssleadsto_v \chi'$.
  \begin{itemize}
    \item $\source(t)=\source(v)$ follows from $\source(\chi)=\source(\zeta)$.
    \item $\source(\chi')=\target(v)$ follows from $\source(\chi')=\target(\zeta)$ and $\ell(\zeta)\in\textit{Act}$.
    \item $t \ssaconc v$ follows from $\chi \ssaconc \zeta$.
    \item From $\ell(\zeta)\in\textit{Act}$, we have $a\varsigma(v) \neq \emptyset \neq a\varsigma(\zeta)$.
      Using that ${+_\Left}\varsigma \,@\, {+_\Left}\upsilon = \varsigma@\upsilon$ whenever $\varsigma \sconc_d \upsilon$,
        we have $\varsigma(t)@v = \varsigma(\chi)@\zeta \subseteq \varsigma(\chi')$.
    \item $\ell(t)=\ell(\chi') \lor \exists\, b\in\B.~ \{\ell(t),\ell(\chi')\}=\{b?,b{:}\}$ follows from
      $\ell(\chi)=\ell(\chi') \lor \exists\, b\in\B.~ \{\ell(\chi),\ell(\chi')\}=\{b?,b{:}\}$ and $\ell(t)=\ell(\chi)$.
  \end{itemize}

  In Case (2), since all synchrons in $\chi$ will be affected by $\zeta$, $n\varsigma(\chi)$ must be empty; \ie $\ell(\chi)=b?$ for some $b\in\B$.
  Then we have $\chi'\in\{b?,b{:}\}$ by \df{ssleadsto}(\ref{labels}).
  Thus $\chi \leadsto_\zeta \chi'$ by \df{leadsto}(6).

  \vspace{1ex}
  \noindent
  Suppose $\varsigma(\zeta)={+_\Right}\Sigma_\Right$ for some $\Sigma_\Right$.
  The proof is similar to that of the previous case.

  \vspace{1ex}
  \noindent
  Suppose $\varsigma(\zeta)={|_\Left}\Sigma_\Left$ for some $\Sigma_\Left$.
  Then $\zeta=v|Q$ and $\source(\chi) = \source(\zeta) = P|Q$
    for some $P,Q\in\cT$ and $v\in\en(P)$ with $\varsigma(v)=\Sigma_\Left$.
  So either (1) $\chi=t|Q$, (2) $\chi=P|u$ or (3) $\chi=t|u$, with $t\in\en(P)$ and $u\in\en(Q)$.

  In Case (1), we have $\ell(\chi') = \ell(\chi) \in \Ch\djcup\bar{\Ch}\djcup\{\tau\}\djcup\Sig\djcup\bar{\Sig}$.
  By \lem{synchrons of transition} $\varsigma(\chi)=n\varsigma(\chi)$ and $\varsigma(\chi')=n\varsigma(\chi')$.
  Moreover $\varsigma(\chi') \supseteq \varsigma(\chi)@\zeta$ since $\chi \ssleadsto_\zeta \chi'$.
  Thus $\varsigma(\chi')=\varsigma(\chi)@\zeta$ by \lem{label condition}.
  Using that ${|_\Left}\varsigma \,@\, {|_\Left}\upsilon = {|_\Left}\,\varsigma@\upsilon$ whenever $\varsigma \sconc_d \upsilon$,
    we have that $\chi'=t'|Q$ for some $t'\in\en(\target(v))$ with $\varsigma(t')=\varsigma(t)@v$.
  Now $t \ssleadsto_v t'$, as all conditions of \df{ssleadsto} are trivially satisfied.
  Thus $\chi \leadsto_\zeta \chi'$ by (IH) and \df{leadsto}(8).

  In Case (2), as in case (1), using that ${|_\Right}\varsigma \,@\, {|_\Left}\upsilon = {|_\Right}\varsigma$,
    we have that $\chi'=\target(v)|u'$ for some $u'\in\en(Q)$ with $\varsigma(u')=\varsigma(u)$.
  By \lem{transition uniquely identified by source and synchrons}, we have $u'=u$.
  Thus $\chi \leadsto_\zeta \chi'$ by \df{leadsto}(7).

  In Case (3), we have
    $\varsigma(\chi)@\zeta = {|_\Left}\,\varsigma(t)@v \djcup {|_\Right}\varsigma(u) \subseteq \varsigma(\chi')$
    by \df{ssleadsto}(\ref{fourth}).
  Then $\zeta'=t'|u'$ for some $t'\in\en(\target(v))$ and $u'\in\en(Q)$ with $\varsigma(t') \supseteq \varsigma(t)@v$ and $\varsigma(u') \supseteq \varsigma(u)$.
  Note that $\varsigma(t')$ won't be empty regardless of whether $\ell(t)\in\B?\djcup\B{:}$\,.
  If $\varsigma(u') \supset \varsigma(u)$, then
    either, when $\ell(u) \in \Ch\djcup\bar{\Ch}\djcup\Sig\djcup\bar{\Sig}$, we have $\ell(u')=\tau$ by \lem{synchrons of transition}, preventing $t',u'$ from synchronising,
    or, when $\ell(u) \in \B!\djcup\B?\djcup\B{:}$, \df{P-complete} and \thm{P-complete} are violated.
  Hence we have $\varsigma(u')=\varsigma(u)$ and thus, by \lem{transition uniquely identified by source and synchrons}, $u'=u$.
  By \df{leadsto}(8) and (IH), it suffices to prove $t \ssleadsto_v t'$.
  All conditions except Item~\ref{labels} of \df{ssleadsto} are trivially satisfied.
  We now show that Condition~\ref{labels} also holds for $t,v,t'$.
  \begin{itemize}
    \item If $\ell(t) \in \Ch\djcup\bar{\Ch}\djcup\Sig\djcup\bar{\Sig}\djcup\B!$,
        then $n\varsigma(t)=\{\varsigma\}$ for some synchron $\varsigma$.
      By \df{ssleadsto}(\ref{ssaconc},\ref{fourth}), $\varsigma@v \in \varsigma(t)@v \subseteq \varsigma(t')$.
      If $n\varsigma(t') \supset \{\varsigma@v\}$, then
        either, when $\ell(t) \in \Ch\djcup\bar{\Ch}\djcup\Sig\djcup\bar{\Sig}$, we have $\ell(t')=\tau$ by \lem{synchrons of transition}, preventing $t',u'$ from synchronising,
        or, when $\ell(t) \in \B!$, \df{P-complete} and \thm{P-complete} are violated.
      Thus $n\varsigma(t')=\{\varsigma@v\}$ and, using \lem{synchrons of transition}, $\ell(t') = \ell(\varsigma@v) = \ell(\varsigma) = \ell(t)$.
    \item If $\ell(t)\in\{b?,b{:}\}$ and $\ell(u)=b!$ for some $b\in\B$, then $\ell(u')=b!$,
      so $\ell(t')\in\{b?,b{:}\}$, or $t',u'$ won't be able to synchronise.
    \item If $\ell(t)\in\{b?,b{:}\}$ and $\ell(u)\in\{b?,b{:}\}$ for some $b\in\B$, then $\ell(\chi)\in\{b?,b{:}\}$.
      By \df{ssleadsto}(\ref{labels}) we have $\ell(\chi')\in\{b?,b{:}\}$.
      Thus $\ell(t')\in\{b?,b{:}\}$ by \lem{synchrons of transition}.
  \end{itemize}

  \vspace{1ex}
  \noindent
  Suppose $\varsigma(\zeta)={|_\Right}\Sigma_\Right$ for some $\Sigma_\Right$.
  The proof is similar to that of the previous case.

  \vspace{1ex}
  \noindent
  Suppose $\varsigma(\zeta) = {|_\Left}\Sigma_\Left \djcup {|_\Right}\Sigma_\Right$ for some non-empty $\Sigma_\Left,\Sigma_\Right$.
  The proof is similar to that of the previous two cases.
  
  \vspace{1ex}
  \noindent
  Suppose $\varsigma(\zeta)={\backslash L}\,\Sigma'$ for some $L\subseteq\Ch\djcup\Sig$ and $\Sigma'$.
  The proof is similar to that for the case where $\zeta=v|Q$ and $\chi=t|Q$.

  \vspace{1ex}
  \noindent
  Suppose $\varsigma(\zeta)=[f]\Sigma'$ for some relabelling $f$ and $\Sigma'$.
  The proof is similar to that of the previous case.

  \vspace{1ex}
  \noindent
  Suppose $\varsigma(\zeta)=A{:}\Sigma'$ for some $A\in\A$ and $\Sigma'$.
  The proof is similar to that for the case where $\zeta=v+Q$ and $\chi=t+Q$.

  \vspace{1ex}
  \noindent
  Suppose $\varsigma(w) \mathbin{=} {{}\signals s}\,\Sigma'$ for some $r \mathop{\in} \Sig$ and $\Sigma'$.
  The proof is similar to that of the previous~case.
\end{proof}
}{}
\end{document}